\theoremstyle{plain}
\newtheorem{thm}{\protect\theoremname}
\numberwithin{equation}{section}
\definecolor{cornellred}{RGB}{179,27,27} 
\definecolor{cornellblue}{RGB}{55,135,176}
\definecolor{cornellgrey}{RGB}{96,94,92}
\newcommand{\priorlocation}{\bar{\mu}}
\newcommand{\wnorm}{\lVert w \lVert_{\Sigma} }
\newcommand{\wdotnorm}{\lVert \Sigma^{-1} \dot{\mu}\lVert_{\Sigma}}
\theoremstyle{remark}
\theoremstyle{definition}
\newtheorem{definition}{\protect\definitionname}\theoremstyle{plain}
\theoremstyle{plain}
\theoremstyle{definition}
\newtheorem{example}{\protect\examplename}
\theoremstyle{definition}
\newtheorem{assumption}{Assumption}
\theoremstyle{plain}
\newtheorem{corollary}{Corollary}
\newtheorem{lemma}{\protect\lemmaname}[section]
\providecommand{\conditionname}{Condition}
\providecommand{\definitionname}{Definition}
\providecommand{\examplename}{Example}
\providecommand{\lemmaname}{Lemma}
\providecommand{\propositionname}{Proposition}
\providecommand{\remarkname}{Remark}
\providecommand{\theoremname}{Theorem}
\providecommand{\theoremname}{Theorem}
\begin{document}
\sloppy
\title{Robust Bayes Treatment Choice with Partial Identification\thanks{We would like to thank the co-editor and three anonymous referees,  Karun Adusumilli, Tim Armstrong,  Tim Christensen, Patrik Guggenberger, Kei Hirano, Toru Kitagawa and seminar audiences at  Penn State for helpful feedback, comments, and suggestions. We gratefully acknowledge financial support from NSF grant SES-2315600.}}
\author{Andr\'es Aradillas Fern\'andez\thanks{Department of Economics, Massachusetts Institute of Technology, aradilla@mit.edu}\and Jos\'e Luis Montiel Olea\thanks{Department of Economics, Cornell University, jlo67@cornell.edu} \and Chen Qiu\thanks{Department of Economics, Cornell University, cq62@cornell.edu} \and J\"org Stoye\thanks{Department of Economics, Cornell University, stoye@cornell.edu} \and Serdil Tinda\thanks{Department of Economics, Stanford University, serdil@stanford.edu }}
\date{December 2025}

\maketitle

\vspace{-0.5cm}

\begin{abstract}
We study a class of binary treatment choice problems with partial identification through the lens of robust (multiple prior) Bayesian analysis. We use a convenient set of prior distributions to derive ex-ante and ex-post robust Bayes decision rules, both for decision makers who can randomize and for decision makers who cannot.

Our main messages are as follows: First, ex-ante and ex-post robust Bayes decision rules do not agree in general, whether or not randomized rules are allowed. Second, randomized treatment assignment for some data realizations can be optimal in both ex-ante and, perhaps more surprisingly, ex-post problems. Therefore, it is usually with loss of generality to exclude randomized rules from consideration, even when regret is evaluated ex post. 

We apply our results to a stylized problem where a policy maker uses experimental data to choose whether to implement a new policy in a population of interest, but is concerned about the external validity of the experiment at hand \citep{stoye2012minimax}, and to aggregation of data generated by multiple randomized control trials in different sites to make a policy choice in a population for which no experimental data are available \citep{manski2020towards,ishihara2021}.

\textsc{Keywords}: treatment choice,
partial identification, robust Bayes, $\Gamma$-minimax, posterior robustness
\end{abstract}
\newpage{}

\onehalfspacing

\section{Introduction}

A policy maker must decide between implementing a new policy or preserving the status quo. Her data provide information about the potential benefits of these two options. Unfortunately, these data only \emph{partially identify} payoff-relevant parameters and may therefore not reveal, even in large samples, the correct course of action. Such \emph{treatment choice problems with partial identification} have recently received growing interest; for example, see \cite{d2021policy}, \cite{ishihara2021}, \cite{yata2021}, \cite*{christensen2022optimal}, \cite{kido2022distributionally} or \cite{manski2022identification}. Several interesting problems that arise in empirical research can be recast  using this framework. See \cite{MQS} and the end of this section for references.

This paper applies the \emph{robust Bayes} approach---which interpolates between Bayesian and agnostic minimax analyses by evaluating minimax risk over a set of priors---to a class of treatment choice problems with partial identification. The use of the robust Bayes approach has drawn recent attention in problems that feature partial identification \citep{GiacominiKitagawa,GKR,christensen2022optimal}. Indeed, due to partial identification, integrating Bayesian and minimax elements into  decision making can be particularly attractive (see \citealt{Poirier,MoonSchorfheide} and references therein). 

The robust Bayes approach can be applied \emph{ex-ante} or \emph{ex-post}, depending on whether the multiple priors are used to evaluate payoffs before or after seeing the data.\footnote{\cite{GKR} discuss both notions; they refer to the ex-ante and ex-post problems as ``Gamma-minimax'' and ``Conditional Gamma-minimax'', respectively. \cite{christensen2022optimal} focus on the ex-post problem for treatment choice problems with partial identification in a restricted class of decision rules.} These concepts represent two different ways of resolving model ambiguity and sampling uncertainty, and both have been proposed to improve Bayesian robustness in decision problems.  

By the well-known \emph{dynamic consistency} of Bayesian decision making, ex-ante and ex-post robust Bayes coincide with each other---and with standard Bayes optimality---if the set of priors is a singleton. Indeed, this equivalence is used to calculate Bayes optimal decisions in practice because ex-post rules are usually computed but ex-ante Bayes optimality is claimed. As pointed out for the present context by \cite{GKR}, they do not in general agree otherwise.\footnote{For an estimation problem with a quadratic loss, \citet[][Appendix B]{kitagawa2012estimation} derives the ex-post $\Gamma$-minimax estimator and shows that it is not ex-ante $\Gamma$-minimax optimal.} 
However, to what extent this inequivalence affects treatment choice problem with partial identification is far from clear. The class of priors that we consider has a Cartesian product structure resembling \textit{rectangularity}, a condition under which maximin welfare loss is known to be dynamically consistent.\footnote{See \citet{epstein2003recursive} and also \citet{wakai2007note}, \citet{amarante2019recursive}, and references therein.} While this result does not apply here---the priors are not rectangular in the strict technical sense, and the theoretical results were not established for regret loss---one might wonder if the conclusion holds anyway or else, what qualitative and quantitative relationships exist between ex-ante and ex-post robust Bayes in the presence of partial identification. Using a convenient class of priors allows us to exhaustively answer these questions in a case that we believe holds some interest. 

To this end, we use the same framework as \citet{yata2021} and \citet{MQS} but impose a simple instance of \citeauthor{GiacominiKitagawa}'s (\citeyear{GiacominiKitagawa}) set of priors, namely a symmetric and uniform two-point prior for reduced-form parameters and no restriction at all on unidentified parameters given reduced-form parameters. Working with regret, we formally define ex-ante and ex-post robust Bayes and, following \citet{berger1985statistical}, label them as ``$\Gamma$-minimax regret'' ($\Gamma$-MMR) and ``$\Gamma$-posterior expected regret'' ($\Gamma$-PER).\footnote{Among others, see \citet{Savage51}, \citet{manski2004statistical}, \citet{stoye2012new}, and \citet{MQS} for justifications of focusing on regret in treatment choice problems. In particular, while minimax loss can be an attractive alternative to minimax regret, it leads to trivial recommendations in treatment choice settings including our examples.}  We then precisely characterize when these notions coincide and when they disagree. The main qualitative insights are as follows:
\begin{itemize}
    \item Ex-ante $\Gamma$-MMR and ex-post $\Gamma$-PER criteria do frequently disagree. This conclusion does not hinge on whether or not randomization rules are allowed. If randomization is permitted, the criteria differ whenever updating the prior for the reduced-form parameter does not (ex-ante almost surely) resolve ambiguity regarding the sign of the optimal action. Even if randomization is not allowed, the criteria will disagree whenever the \emph{identification power} of the model (in a sense we make precise) is sufficiently small compared to the \emph{informativeness of the data}.
    \item Randomized rules are general solution concepts in both $\Gamma$-MMR and $\Gamma$-PER problems. That is, optimal rules often can or even must randomize even if regret is evaluated ex-post. For the ex-ante $\Gamma$-MMR problem, whenever the identification power of the model is sufficiently small compared to the informativeness of the data, (1) infinitely many optimal rules exist, and (2) many of them are randomized. For the ex-post $\Gamma$-PER criterion, one might conjecture by analogy to single-prior Bayes inference that the optimal rules do not randomize. However, as long as (1) there exists ambiguity regarding the sign of the optimal action at the location of the two-point prior for the reduced-form parameter and (2) randomized rules are allowed, the optimal rule is unique and is randomized. Hence, exclusion of randomized rules is with loss of generality and cannot be justified by simply evaluating regret ex-post---it must come from other considerations, be they second-order preferences or logistical or convenience concerns. 
\end{itemize}

Our point is not to advocate for either notion of robust Bayes criterion. We also do not aim to solve for robust Bayes criteria for more general sets of priors as this would get much more involved but (we suspect) not much more instructive to illustrate the points discussed above. What we hope to illustrate is when, and how, $\Gamma$-MMR and $\Gamma$-PER criteria differ. We also relate these results to timing assumptions in a fictitious game between the policy maker and an adversarial Nature.

Several auxiliary findings might be of independent interest. First, for $\Gamma$-MMR, even if we restrict the set of decision rules to be a class of non-randomized threshold rules based on the ``efficient'' linear index, the optimal threshold is not always zero. Given the apparent symmetry of the problem, we find this feature rather surprising. Second, whenever the dimension of the signal is larger than $1$, there always exist (regardless of the parameter space and the variance of the signals) non-randomized linear-index threshold rules (with a threshold equal to zero) that are $\Gamma$-MMR optimal (among all decision rules). This is in stark contrast to \cite{MQS}, in which no linear index rule is globally minimax regret optimal if the degree of partial identification is severe. The intuition is that the prior much reduces the state space; the signal space then becomes so rich relative to the state space that even linear threshold rules can effectively mimic randomization.

The literature on treatment choice with partially identified parameter has been growing  since \cite{manski2004statistical} and \cite{Dehejia2005}. For partial identification with known distribution of data, \cite{Manski2000,manski2005social,manski2007identification} and \citet{Stoye07} find minimax regret optimal treatment rules. For finite-sample minimax regret results with model ambiguity and sampling uncertainty, see \cite{stoye2012minimax,stoye2012new}, \cite{yata2021}, \cite{ishihara2021} and \cite{MQS}; \citeauthor{kido2023locally}'s (\citeyear{kido2023locally}) analysis is asymptotic. Bayes and robust Bayes approaches are analyzed by \citet{chamberlain2012}, \citet{GiacominiKitagawa}, \citet{GKR}, \citet{christensen2022optimal}, among others. Earlier investigations of ex-ante and ex-post $\Gamma$-minimax estimators include \citet{dasgupta1989frequentist} and \citet{betro1992conditional}. See \cite{vidakovic2000gamma} for a review.  For different settings with point-identified welfare, finite- and large-sample results on optimal treatment choice rules were derived by \cite{canner}, \citet{ChenGug}, \citet{HiranoPorter2009,HiranoPorter2020}, \citet*{kitagawa2022treatment},  \citet{schlag2006eleven}, \citet{stoye2009minimax}, and \citet{tetenov2012statistical}. There is also a large literature on optimal policy learning with covariates containing results with point identified \citep{BhattacharyaDupas2012,kitagawa2018should,KT19,MT17,KW20,AW20,KST21,ida2022choosing} as well as partially identified \citep{kallus2018confounding,ben2021safe,ben2022policy,d2021policy, christensen2022optimal,adjaho2022externally,kido2022distributionally,lei2023policy} parameters. \citet{GugQuantile} and \citet{ManskiTetenovJER} analyze related problems but focus on quantile, as opposed to expected, loss; \citet{song} considers partial identification but mean squared error regret loss.

The rest of this paper is organized as follows. Section \ref{sec:frame} sets up the problem, provides examples, and defines both versions of robust Bayes optimality. Section \ref{sec:main} contains complete solutions for all aforementioned scenarios and relates them to timing assumptions in the ``Games against Nature'' interpretation of minimax theory. Section \ref{sec:conclusion} concludes. Proofs and auxiliary results are collected in the Appendix.

\section{Framework}\label{sec:frame}

\subsection{Actions, Payoffs, Statistical Model, and Decisions}

Our setup follows \cite{MQS}, who in turn follow \cite{Ferguson67} and others. Consider a policy maker  who needs to choose an \emph{action} $a\in[0,1]$ interpreted as probability of assigning treatment in the target population.\footnote{Randomization could be i.i.d. across future potential treatment recipients, \emph{fractional} in the sense of randomly assigning a certain fraction of the treatment population (in this sense, $a\in[0,1]$ can also be interpreted as the fraction of the population receiving the treatment), or an ``all or nothing'' randomization for the entire treatment population. While these might not be practically equivalent in all applications, they are in the current decision theoretic framework. See \cite{manski2007admissible} for an exception in the related literature.} Her payoff when taking action $a\in[0,1]$ is captured by the welfare function \begin{equation}\label{eq:welfare}
W(a,\theta):=aW(1,\theta)+(1-a)W(0,\theta),    
\end{equation}
where $\theta\in\Theta$ is an unknown state of the world or \emph{parameter} and the functions $W(1,\cdotp):\Theta\rightarrow\mathbb{R}$ and $W(0,\cdotp):\Theta\rightarrow\mathbb{R}$ are known. Here, we may interpret $W(1,\theta)$ and  $W(0,\theta)$ as the welfare of actions $a=1$ (treating everyone in the population) and action $a=0$ (treating no one in the population). Therefore, \eqref{eq:welfare} implies that welfare is linear in actions, a standard assumption in the literature. Denote by $U(\theta):=W(1,\theta)-W(0,\theta)$ the \emph{welfare contrast} at $\theta$. If $U(\theta)$ were known to the policy maker, her optimal action would simply be 
\begin{equation}\label{eq:infeasible}
\mathbf{1}\left\{ U(\theta)\geq0\right\}.
\end{equation}
The policy maker does not know $U(\theta)$ but can learn about $\theta$. Specifically, we assume that she observes a random vector $Y\in\mathbb{R}^{n}$ with multivariate normal distribution 
\begin{equation}\label{eq:normal_model}
Y\sim N(m(\theta),\Sigma),
\end{equation}
where the function $m(\cdotp):\Theta\rightarrow\mathbb{R}^{n}$ and the positive definite matrix $\Sigma$ are known.

Our focus is on the case when the data is not entirely informative about the sign of $U(\theta)$: Even if the policy maker perfectly learned $m(\theta)$, she could not (necessarily) pin down the sign of $U(\theta)$. To formally model such \emph{treatment choice problems with (decision-relevant) partial identification}, let
\begin{equation}
M:=\left\{ \mu\in\mathbb{R}^{n}:m(\theta)=\mu,\theta\in\Theta\right\} \label{eq:M}
\end{equation}
collect all the means of $Y$ that can be generated as $\theta$ ranges over $\Theta$.
We refer to elements $\mu\in M$ as $\emph{reduced-form}$ parameters
because they are identified in the statistical model \eqref{eq:normal_model} without further assumptions.
Define the \emph{identified set} for the welfare contrast given $\mu$ as 
\begin{equation}
I(\mu):=\left\{ u\in\mathbb{R}:U(\theta)=u,m(\theta)=\mu,\theta\in\Theta\right\} \label{eq:I_mu}
\end{equation}
and the corresponding upper and lower bounds as 
\begin{equation}
\overline{I}(\mu):=\sup I(\mu),\quad\underline{I}(\mu):=\inf I(\mu).\label{eq:I_bound}
\end{equation}

Henceforth, when we refer to a  treatment choice problem with partial identification, we mean there exists some nonempty open set in $M$ such that for all $\mu$ in that open set, $\underline{I}(\mu)  <0<\overline{I}(\mu)$. For simplicity, we also assume that the infimum and supremum in \eqref{eq:I_mu} are attained.

A \emph{decision rule} $d:\mathbb{R}^{n}\rightarrow[0,1]$ is a (measurable) mapping from data $Y$ to the unit interval $[0,1]$. We call $d$ \emph{non-randomized} if it (almost surely, a.s.) maps into $\{0,1\}$; otherwise, we call $d$ \emph{randomized}, including if it randomizes for some but not all data realizations. We use $\mathcal{D}_n$ to denote the set of all decision rules, and we consider decision rules the same if they a.s. agree. As a result, a rule is $\emph{unique}$ only up to a.s. agreement. The \emph{oracle policy} $\textbf{1}\{U(\theta) \geq 0\}$ is of special interest and for any given $\theta$ is contained in $\mathcal{D}_n$, but is not feasible in the statistical sense because $U(\theta)$ is not known.  
In general, there will not be an unambiguously best feasible decision rule, a problem that gave rise to a large literature on different optimality criteria and their implementation. Before introducing the robust Bayes approach, we give two examples that fit into our general framework.

\subsection{Examples}

\begin{example}[\citeauthor{stoye2012minimax}, \citeyear{stoye2012minimax}]\label{ex:stoye}

The one-dimensional version of the general setup and has been frequently analyzed before \citep{manski2000identification,brock2006profiling,stoye2012minimax,tetenov2009measuring,kitagawa2023treatment}. One motivation for it is to think of a policy maker who uses experimental data to choose whether to implement a new policy in a population of interest, but is concerned about the external validity of the experiment at hand. The treatment effect of action $a=1$ is $\mu^*\in\mathbb{R}$, while the effect of action $a=0$ is normalized to $0$; thus, the policy maker's expected payoff equals $W(a,\mu^*):= a \cdot \mu^*$. The policy maker  observes a realization of the one-dimensional statistic
\begin{equation} \label{eq:stoye_data}
\hat{\mu} \sim N(\mu,\sigma^2),
\end{equation}
where $\sigma>0$ is known and where $\mu\in\mathbb{R}$ is an \emph{identifiable treatment effect}, i.e. it could be perfectly learned from infinite data. In this example, $\theta=(\mu,\mu^*)^{\top}$,  $\Theta\subseteq \mathbb{R}^2$, $m(\theta)=\mu$, and  $U(\theta)=\mu^*$.

Since the target population and the population from which data \eqref{eq:stoye_data} is collected can be different, partial identification naturally arises. We assume that the identifiable and true treatment effects are constrained by
$\left\vert \mu^*-\mu \right\vert \leq k$
for some known $k \geq 0$, implying
\[
I(\mu)=[\mu-k,\mu+k],\quad \overline{I}(\mu)=\mu+k,\quad \underline{I}(\mu)=\mu-k,\quad\forall \mu \in \mathbb{R}.
\]
The planner must choose a statistical decision rule $d\in\mathcal{D}_1:\mathbb{R}\rightarrow[0,1]$ that maps observed data $\hat{\mu}$ to an action $a\in[0,1]$. 
Taking \eqref{eq:stoye_data}  as an approximation, this stylized example could reflect model uncertainty (e.g., a treatment effect is estimated in a possibly somewhat misspecified model), external validity concerns (e.g., a randomized clinical trial was performed on volunteers), or a shift in the environment (e.g., we transfer estimates from study populations to treatment populations with slightly different covariates or are concerned about distributional drift over time). \qed
\end{example}

\begin{example}[\citeauthor{ishihara2021}, \citeyear{ishihara2021}]\label{ex:ik}

This example is taken from \citeauthor{ishihara2021}'s (\citeyear{ishihara2021}; see also \citet{manski2020towards}) ``evidence aggregation'' framework. A policy maker is interested in implementing a new policy in country $i=0$ and observes estimates of the policy's effect for countries $i=1,...,n$. Let $Y=(Y_1,...,Y_n)^\top \in \mathbb{R}^n$ denote these estimates and let $(x_0,\ldots,x_n)$ be nonrandom, $d$-dimensional baseline covariates. The policy maker is willing to extrapolate from her data by assuming that the welfare contrast of interest equals $U(\theta)=\theta(x_0)$ and that 
\begin{equation*}
Y = 
\begin{pmatrix}
Y_1 \\
\vdots \\
Y_n
\end{pmatrix}
\sim N(m(\theta), \Sigma), \quad m(\theta) = 
\begin{pmatrix}
\theta(x_1) \\
\vdots \\
\theta(x_n)
\end{pmatrix}, 
\quad
\Sigma =  
\operatorname{diag}(\sigma_1^2,\ldots,\sigma_n^2),
\end{equation*} 
where $\theta: \mathbb{R}^d \rightarrow \mathbb{R}$ is an unknown Lipschitz function with known constant $C$. For notational simplicity, we can further write $\mu_i$ for $\theta(x_i)$. Thus, $Y_i \sim N(\mu_i,\sigma_i^2)$, $Y\sim N(\mu,\Sigma)$ and $U(\theta)=\mu_0$. The policy question is: Given data $Y$, what proportion of the population in country $i=0$ should be assigned the new policy?

Let $\left\Vert \beta\right\Vert:=\sqrt{\beta^{\top}\beta}$ be the Euclidean norm of a vector $\beta$.
In this example, the identified set for the welfare contrast $\mu_0$ is 
\begin{equation*}
I(\mu) = \{ u \in \mathbb{R} :  \left\vert \mu_i-u \right\vert \leq C\left\Vert x_i-x_0\right\Vert, ~~ i=1,\ldots,n \}.
\end{equation*}
The lower and upper bounds on $I(\mu)$ are simply  intersection bounds:
\[ \underline{I}(\mu) = \max_{i=1,\ldots,n} \{ \mu_i - C\left\Vert x_i-x_0 \right\Vert  \},  \quad  \overline{I}(\mu) = \min_{i=1,\ldots,n} \left \{ \mu_i + C \left\Vert x_i-x_0 \right\Vert \right \}.
\]\qed
\end{example}

\subsection{Robust Bayes Optimality}\label{sec:bayes.optimality}

Our setting up to here is as in \citet{MQS}. We now connect it to the robust Bayes literature by imposing a \emph{set of priors} $\Gamma$ on $\theta$.  Following \cite{GiacominiKitagawa}, we choose a particular single proper prior $\pi_{\mu}$ for $\mu\in M$ but leave the conditional prior of $\theta$ given $\mu$, denoted as $\pi_{\theta \mid \mu}$, unrestricted except for
\begin{equation}\label{eq:conditional.prior}
\pi_{\theta \mid \mu}\{U(\theta)\in I(\mu)\}=1,\quad \pi_{\mu}\text{-a.s.}    
\end{equation}
Then, the class of priors $\Gamma$ consists of all priors on $\theta$ induced by the single prior $\pi_{\mu}$ and any conditional prior $\pi_{\theta \mid \mu}$ that meets \eqref{eq:conditional.prior}. Intuitively, we choose a single prior on the point-identified parameter and place no new restriction on the partially identified parameter $U(\theta)$. One can pick any proper prior $\pi_{\mu}$; for tractability, we let $\pi_{\mu}$ be supported on two symmetric points $\{\priorlocation,-\priorlocation\}$ with equal probability, where $\priorlocation\in\mathbb{R}^n$ is chosen by the decision maker. Henceforth, $\Gamma$ is understood to refer to the implied  set of priors: 
\begin{equation}\label{eq:Gamma}
\Gamma:=\left\{\pi_\theta=\int \pi_{\theta\mid \mu}d\pi_{\mu}: \pi_{\mu} \sim \text{unif}(\{-\priorlocation,\priorlocation\}),\pi_{\theta \mid \mu}\text{ satisfies \eqref{eq:conditional.prior}}\right\}. 
\end{equation}
 
While the set of priors $\Gamma$ broadly puts us into the ``robust Bayes'' territory, it still does not pin down a uniquely best decision rule because the sign of $U(\theta)$ can remain ambiguous. Let
\[
L(a,\theta):=\sup_{a^{\prime}\in[0,1]} W({a^{\prime},\theta}) - W(a,\theta)=U(\theta)\left\{ \mathbf{1}\{U(\theta)\geq0\}-a\right\}
\]
be the \emph{regret} of action $a\in[0,1]$. We evaluate decision rules $d$ by their \emph{expected regret}, defined as 
\begin{eqnarray}\label{eq:expected.regret}
R(d,\theta) &:=&  \mathbb{E}_{m(\theta)}[L(d(Y),\theta)] \notag \\
&=&  U(\theta)\left\{ \mathbf{1}\{U(\theta)\geq0\}-\mathbb{E}_{m(\theta)}[d(Y)]\right\},
\end{eqnarray}
where for any $x\in\mathbb{R}^n$, $\mathbb{E}_{x}[\cdot]$ denotes expectation taken over $Y\sim N(x,\Sigma)$.

Even for given set of priors $\Gamma$ given and commitment to expected regret, the Robust Bayes literature contains multiple decision criteria that do not in general agree. The difference lies in when (and how) the expectations regarding the unknown parameter $\theta\in\Theta$ are taken. For decision rule $d\in \mathcal{D}_n$, let
\[r(d,\pi):=\int_{\theta\in\Theta} R(d,\theta)d\pi(\theta)
\]
be its Bayes expected regret under a prior $\pi\in\Gamma$. Following \citet[][Definition 12, p. 216]{berger1985statistical}, we introduce the first robust Bayes optimality notion. 

\begin{definition}[\emph{Ex-ante} $\Gamma$-Minimax Regret]\label{def:MMR}
A decision rule $d^* \in \mathcal{D}_n$ is \emph{$\Gamma$-minimax regret} (henceforth $\Gamma$-MMR) optimal if
\begin{equation}\label{eq:gamma.mmr}
\sup_{\pi \in \Gamma}r(d^*,\pi)= \inf_{d \in \mathcal{D}_n} \sup_{\pi \in \Gamma} r(d,\pi).  
\end{equation}

\end{definition}

An alternative ``posterior'' robustness notion is also common in Bayesian analysis. For each action $a\in [0,1]$, define \emph{posterior expected regret} under prior $\pi$ \citep[][Definition 8, p. 159]{berger1985statistical}
\[
\rho(a,\pi_{\theta\mid Y}):=\int_{\tilde{\theta}\in\Theta}L(a,\tilde{\theta})d\pi_{\theta \mid Y}(\tilde{\theta}),
\]
where $\pi_{\theta\mid Y}$ is the posterior distribution of $\theta$ given prior $\pi$ and data $Y$.\footnote{In our setting, information from data $Y$ does not revise the conditional prior $\pi_{\theta\mid\mu}$ \citep{GiacominiKitagawa}. For any event $A$ in the $\sigma$-algebra of $\Theta$, we therefore have $\pi_{\theta\mid Y}(A)=\int\pi_{\theta\mid\mu}(A)d\pi_{\mu\mid Y}$, where $\pi_{\mu\mid Y}$ is the posterior distribution of $\mu$ given $Y$.} Then we have the following, alternative optimality criterion \citep[][Definition 10, p. 205]{berger1985statistical}:
\begin{definition}[\emph{Ex-post} $\Gamma$-Posterior Expected Regret]\label{def:PER}
A decision rule $d^* \in \mathcal{D}_n$ is \emph{$\Gamma$-posterior expected regret} (henceforth $\Gamma$-PER) optimal if
\begin{equation} \label{eq:gamma.per.2}
\sup_{\pi \in \Gamma}\rho(d^*(Y),\pi_{\theta\mid Y}) =\inf_{a \in [0,1]} \sup_{\pi \in \Gamma} \rho(a,\pi_{\theta\mid Y}),\quad \forall Y \in \mathbb{R}^n.
\end{equation}
If the decision maker's action space is restricted to $\{0,1\}$, i.e., randomization is not allowed, the above definitions are revised by replacing $[0,1]$ with $\{0,1\}$.
\end{definition}

The labeling of $\Gamma$-MMR as ``ex-ante'' versus $\Gamma$-PER as ``ex-post'' can be related to the timing of a fictitious game against an adversarial Nature; see Section \ref{sec:timing} for additional discussion. If $\Gamma$ were a singleton, the criteria would coincide and would also agree with (single-prior) Bayes optimality.
They do not in general agree otherwise. The term ``Gamma minimax'' usually (and even ``robust Bayes'' more often than not) refers to $\Gamma$-MMR; for example, see \citet[][Section 4.7.6]{berger1985statistical}.\footnote{\citet*{GKR} discuss both criteria for general loss functions and refer to Definitions \ref{def:MMR} and \ref{def:PER} as the ``unconditional $\Gamma$-minimax'' and ``conditional $\Gamma$-minimax'' problems, respectively. For treatment choice problems with partial identification, \cite{christensen2022optimal} optimize the $\Gamma$-PER criterion, restricting the action space to be $\{0,1\}$.}

While it is not our agenda to advocate for either criterion, some possible considerations are as follows. The ex-ante approach may be perceived as more suitable if the planner has commitment power and has also been justified axiomatically \citep{hayashi2008regret,stoye2011axioms}. We also find some numerical and theoretical evidence that the $\Gamma$-MMR rule may have desirable  \textit{frequentist properties}, e.g. in states of the world off the prior's support; see discussions at the end of Section \ref{sec:ex.post}. Regarding computational feasibility, the ex-post approach is often easier due to the applicability of backward induction and is routinely employed to quantify posterior robustness of statistical decisions. On the other hand, recent work on numerical discovery of minimax rules \citep{fernandez2024epsilon,guggenberger2025numerical} may render the ex-ante approach more scalable. See additional discussions in  Appendix \ref{sec:generality}.

The two-point structure of $\pi_\mu$ in \eqref{eq:Gamma} appears in several related treatment choice problems with minimax regret optimality criteria. For example, the least favorable prior takes such a form in a completely unconstrained minimax regret problem with point identified \citep{stoye2009minimax} or partially identified \citep{stoye2012minimax,yata2021,MQS} welfare contrast. In  analogously constrained minimax regret problems in which we restrict $\mu\in[-\left|\priorlocation\right|,\left|\priorlocation\right|]$, extending the analyses in the preceding literature would also imply a similar two-point symmetric structure for the least favorable prior. Given these precedents, we think our choice of $\pi_\mu$ represents a general feature of this class of  minimax regret decisions, in addition to offering computational tractability.

\section{Main Results}\label{sec:main}

In this section, we solve for the two versions of robust Bayes optimality under the set of priors \eqref{eq:Gamma}. Following \citet{yata2021}, we assume:
\begin{assumption}\label{asm:1}
\begin{itemize}
\item[(i)]  $\Theta$ is convex, centrosymmetric (i.e., $\theta\in\Theta$ implies $-\theta\in\Theta$) and nonempty.
\item[(ii)] $m(\cdot)$ and $U(\cdot)$ are linear.
\end{itemize}
\end{assumption}

These conditions are restrictive but encompass many examples of empirical relevance; see this paper's introduction, \citet{MQS}, and \citet{yata2021}. Exploiting symmetry of the setting, we also set
\[\overline{I}(\priorlocation)+\underline{I}(\priorlocation)>0.\]
This is a normalization because, by Lemma \ref{lemma:b1}, Assumption \ref{asm:1} implies
$\overline{I}(-\bar{\mu})+\underline{I}(-\bar{\mu})=-(\underline{I}(\bar{\mu})+\overline{I}(\bar{\mu}))$ and we could always replace $m(\cdot)\mapsto-m(\cdot)$; furthermore, the case of $\overline{I}(\bar{\mu})+\underline{I}(\bar{\mu})=0$ gives rise to trivial solutions.\footnote{In this case, the identified set for $U(\theta)$ does not change with true value of $\mu$, and under all decision criteria considered here, the solution will be a no-data rule that simply flips a coin.}

We say a rule is a \emph{linear-index threshold rule} if it has the form $\mathbf{1}\left\{ \beta^{\top}Y\geq c\right\} $
for some $\beta\in\mathbb{R}^{n}$ and $c\in\mathbb{R}$. Linear-index threshold rules are nonrandomized. They are of particular interest because they form a complete class when $U(\theta)$ is point-identified \citep{karlin1956theory} and have received particular attention in the recent literature \citep{ishihara2021,MQS}.  
For both $\Gamma$-MMR and $\Gamma$-PER, we  will clarify when linear-index threshold rules are optimal and when they are not. 
For reasons that will become obvious, the following linear-index threshold rule is of particular interest:
\begin{equation}\label{eq:efficient.linear.index}
d_{w,0}^{*}:=d_{w,0}^{*}(Y):=\mathbf{1}\left\{ w^{\top}Y\geq0\right\} ,\text{ }w:=\Sigma^{-1}\priorlocation.   
\end{equation}
For each vector $\beta\in\mathbb{R}^n$, let $\left\Vert \beta\right\Vert_{\Sigma}:=\sqrt{\beta^{\top}\Sigma\beta}$. Thus, $\wnorm=\sqrt{w^{\top}\Sigma w}=\sqrt{\priorlocation^{\top}\Sigma^{-1}\priorlocation}$. Denote by $\Phi(\cdot)$ the standard normal c.d.f. and by $\Phi^{-1}(\cdot)$ its inverse, i.e. the corresponding quantile function. 

\subsection{Ex-ante Robust Bayes Optimality}

\begin{thm}
[$\Gamma$-MMR optimal decisions]\label{thm:1}
Consider a treatment choice problem with welfare function \eqref{eq:welfare}, statistical model \eqref{eq:normal_model}, and set of priors \eqref{eq:Gamma}, that satisfy Assumption \ref{asm:1}.
\begin{itemize}
\item[(i)] If
\begin{equation}\label{eq:condition-1}
\frac{\overline{I}(\priorlocation)}{\overline{I}(\priorlocation)-\underline{I}(\priorlocation)}\geq\Phi(\wnorm),
\end{equation}

then $d_{w,0}^{*}$  is uniquely $\Gamma$-MMR optimal.
\item[(ii)] If
\begin{equation}
\frac{\overline{I}(\priorlocation)}{\overline{I}(\priorlocation)-\underline{I}(\priorlocation)}<\Phi(\wnorm),\label{eq:condition-large}
\end{equation}
then a rule $d\in\mathcal{D}_{n}$
attains $\Gamma$-MMR if, and only if, it implies 
\begin{eqnarray}
\mathbb{E}_{-\priorlocation}[d^{*}(Y)] & = & \frac{-\underline{I}(\priorlocation)}{\overline{I}(\priorlocation)-\underline{I}(\priorlocation)},\label{eq:mmr.large.condition}\\
\mathbb{E}_{\priorlocation}[d^{*}(Y)] & = & \frac{\overline{I}(\priorlocation)}{\overline{I}(\priorlocation)-\underline{I}(\priorlocation)}.\label{eq:mmr.large.condition.cont}
\end{eqnarray}
In particular, any rule of the form $\mathbf{1}\left\{ w^{\top}Y\geq c\right\}$ for some $c\in\mathbb{R}$ is \emph{not} $\Gamma$-MMR optimal. The following rules, and any convex combination of them, are $\Gamma$-MMR optimal:
\begin{align}
d^*_\text{RT}&:=\Phi\left(\frac{w^{\top}Y}{\tilde{\sigma}}\right),\\
 d^*_\text{linear}&:=\begin{cases}
0, & w^{\top}Y<-\rho^{*},\\
\frac{w^{\top}Y+\rho^{*}}{2\rho^{*}}, & -\rho^{*}\leq w^{\top}Y\leq\rho^{*},\\
1, & w^{\top}Y>\rho^{*},
\end{cases}\\
d^*_\text{step}&:=\begin{cases}
\frac{1}{2} - \beta^*, & w^{\top}Y<0,\\
\frac{1}{2} + \beta^*, & w^{\top}Y\geq0,\\
\end{cases}
\end{align}
where
\[\tilde{\sigma}=\sqrt{\left[\frac{\wnorm ^{2}}{\Phi^{-1}\left(\frac{\overline{I}(\priorlocation)}{\overline{I}(\priorlocation)-\underline{I}(\priorlocation)}\right)}\right]^{2}-\wnorm ^{2}},\]
$\rho^{*}>0$ is unique and solves $\int_{0}^{1}\Phi\left(\frac{2\rho^{*}x-\rho^{*}-\wnorm^{2}}{\wnorm }\right)dx=\frac{-\underline{I}(\priorlocation)}{\overline{I}(\priorlocation)-\underline{I}(\priorlocation)}$, and 
\[
\beta^{*}=\frac{\frac{\overline{I}(\bar{\mu})}{\overline{I}(\bar{\mu})-\underline{I}(\bar{\mu})}-\frac{1}{2}}{2\Phi\left(\wnorm \right)-1}\in\left(0,\frac{1}{2}\right).
\]

\item[(iii)] In case \eqref{eq:condition-large}, among linear-index threshold rules of the form
$\mathbf{1}\left\{ w^{\top}Y\geq c\right\}$, the optimal thresholds are  $\pm c^{*}$, where
\[
c^{*}:=\wnorm^2-\wnorm\Phi^{-1}\left(\frac{\overline{I}(\priorlocation)}{\overline{I}(\priorlocation)-\underline{I}(\priorlocation)}\right)>0.
\]
\item[(iv)] In case \eqref{eq:condition-large} and if $n>1$, the following linear-index threshold rule is also $\Gamma$-MMR
optimal:
\[
d_{w_{t^*},0}^{*}(Y):=\mathbf{1}\left\{ w_{t^*}^{\top}Y\geq0\right\} ,
\]
where $w_{t^*}=\Sigma^{-1}\left(t^*\priorlocation+(1-t^*)\dot{\mu}\right)$,
 $\dot{\mu}\neq0$ is such that $\dot{\mu}^{\top}\Sigma^{-1}\priorlocation=0$,
\begin{equation}\label{eq:t.star}
t^*:=\frac{1}{1\pm\sqrt{\frac{(1-s^{*})}{s^{*}}\frac{\wnorm^{2}}{\wdotnorm ^{2}}}},
\end{equation}
and 
\[
s^{*}:=\frac{\left[\Phi^{-1}\left(\frac{\overline{I}(\priorlocation)}{\overline{I}(\priorlocation)-\underline{I}(\priorlocation)}\right)\right]^{2}}{\wnorm^{2}}\in(0,1).
\]
Thus, $d_{w_{t^*},0}^{*}$ is also optimal among all linear threshold rules.
\end{itemize}
\end{thm}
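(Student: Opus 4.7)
The plan is to reduce the infinite-dimensional minimax problem to a finite-dimensional convex optimization in the two ``test probabilities'' $(p,q) := (\mathbb{E}_{-\bar{\mu}}[d(Y)], \mathbb{E}_{\bar{\mu}}[d(Y)])$, solve that convex problem explicitly, and then verify case-by-case that the rules listed in parts~(i)--(iv) realize its optimum. Using that $\pi_{\mu}$ is the two-point uniform distribution on $\{\pm\bar{\mu}\}$ and that $\pi_{\theta\mid\mu}$ is unrestricted except for $U(\theta)\in I(\mu)$ $\pi_{\mu}$-a.s., the inner supremum over $\Gamma$ collapses to picking the worst point-mass $U\in I(\mu)$ at each $\mu\in\{-\bar{\mu},\bar{\mu}\}$. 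Under Assumption~\ref{asm:1}, linearity of $U$ and centrosymmetry of $\Theta$ give $I(-\bar{\mu})=-I(\bar{\mu})$, so with $A:=\overline{I}(\bar{\mu})$ and $B:=-\underline{I}(\bar{\mu})$,
\begin{equation*}
\sup_{\pi\in\Gamma} r(d,\pi) \;=\; \tfrac{1}{2}\max\{A(1-q),\,Bq\} \;+\; \tfrac{1}{2}\max\{B(1-p),\,Ap\}.
\end{equation*}
This is jointly convex and separable in $(p,q)$, with unique unconstrained minimizer $(p^{\star},q^{\star})=(1-q^{\star},q^{\star})$, $q^{\star}:=\overline{I}(\bar{\mu})/(\overline{I}(\bar{\mu})-\underline{I}(\bar{\mu}))$, in the genuinely partially identified case (the point-identified case $B\le 0$ is trivial).

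Next I characterize the feasible set $F\subset[0,1]^{2}$ of achievable $(p,q)$. Since the log-likelihood ratio between $N(\bar{\mu},\Sigma)$ and $N(-\bar{\mu},\Sigma)$ is monotone in $w^{\top}Y$, standard Neyman--Pearson identifies the upper Pareto boundary of $F$ as the ROC $q = h(p) := \Phi(\Phi^{-1}(p)+2\|w\|_{\Sigma})$, and $(1-q^{\star},q^{\star})\in F$ iff $q^{\star}\le\Phi(\|w\|_{\Sigma})$---immediately splitting the analysis into cases~(i) and~(ii). In case~(ii) the unconstrained minimum is feasible, yielding the iff characterization \eqref{eq:mmr.large.condition}--\eqref{eq:mmr.large.condition.cont}; threshold rules $\mathbf{1}\{w^{\top}Y\ge c\}$ live strictly on the ROC boundary and therefore cannot hit the interior point $(1-q^{\star},q^{\star})$. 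I then verify the three displayed optimal rules by direct Gaussian computations of the two required moments: the identity $\mathbb{E}[\Phi(a+bZ)]=\Phi(a/\sqrt{1+b^{2}})$ for standard normal $Z$ pins down $\tilde{\sigma}$ for $d^{*}_{\text{RT}}$; the moments of $d^{*}_{\text{step}}$ are affine in $\beta^{*}$ and give the closed form; and expanding the piecewise Gaussian integral for $d^{*}_{\text{linear}}$ yields the defining equation for $\rho^{*}$, whose unique solution exists by monotonicity in $\rho^{*}$. Convex combinations inherit the moment conditions by linearity. In case~(i) the unconstrained optimum is infeasible and must be projected onto the upper ROC; the first-order condition there reduces to $h'(p)=1$. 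Since $\log h'(p) = -2\|w\|_{\Sigma}\Phi^{-1}(p)-2\|w\|_{\Sigma}^{2}$ is strictly decreasing in $p$, this equation has the unique solution $p=\Phi(-\|w\|_{\Sigma})$, and the constrained optimum is $(1-\Phi(\|w\|_{\Sigma}),\Phi(\|w\|_{\Sigma}))$, attained by $d^{*}_{w,0}$; a.s.\ uniqueness of the rule follows from atomlessness of the Gaussian likelihood ratio and the uniqueness part of Neyman--Pearson.

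For part~(iii), substituting $p(c)=\Phi(-\|w\|_{\Sigma}-c/\|w\|_{\Sigma})$ and $q(c)=\Phi(\|w\|_{\Sigma}-c/\|w\|_{\Sigma})$ into the objective and tracking which arm of each max is active, direct differentiation shows that $c=0$ is a critical point by symmetry of $\phi$ but is \emph{not} a local minimum in case~(ii); moving $c$ up to the regime boundary where $q(c)=q^{\star}$ strictly decreases the objective, and any further increase is strictly harmful, so the optimum lies at $c^{\star}=\|w\|_{\Sigma}^{2}-\|w\|_{\Sigma}\Phi^{-1}(q^{\star})$, with $-c^{\star}$ appearing as its mirror by the reflection symmetry of the likelihood-ratio test. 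For part~(iv), I compute the mean and variance of $w_{t}^{\top}Y$ under $Y\sim N(\pm\bar{\mu},\Sigma)$; the orthogonality $\dot{\mu}^{\top}\Sigma^{-1}\bar{\mu}=0$ kills the cross term, giving mean $\pm t\|w\|_{\Sigma}^{2}$ and variance $S^{2}:=t^{2}\|w\|_{\Sigma}^{2}+(1-t)^{2}\|\Sigma^{-1}\dot{\mu}\|_{\Sigma}^{2}$. Setting $\Phi(t\|w\|_{\Sigma}^{2}/S)=q^{\star}$ and squaring yields a quadratic in $t/(1-t)$ whose two roots give the stated $\pm$ formula for $t^{\star}$; case~(ii) guarantees $s^{\star}\in(0,1)$ so the square root is well-defined, and $n>1$ is needed so that a nonzero $\dot{\mu}$ with $\dot{\mu}^{\top}\Sigma^{-1}\bar{\mu}=0$ exists. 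Because $d^{*}_{w_{t^{\star}},0}$ then satisfies both moment conditions, part~(ii) yields its $\Gamma$-MMR optimality among all decision rules, hence a fortiori among linear-index threshold rules.

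The main obstacle is the uniqueness claim in case~(i): because $h$ is strictly concave while the objective is convex, the restriction of the objective to the upper ROC is not globally convex (it mixes concave and convex pieces), so uniqueness does not follow from convexity alone. My plan is to verify it by partitioning $p\in[0,1]$ according to which arms of the two max-functions are active; within each piece, the restricted objective is either strictly monotone or strictly convex (the latter via strict monotonicity of $\log h'(p)$), and pasting these together with continuity at the kinks pins down a single global minimizer. The bookkeeping for part~(iii) is the other delicate step: which of the four max-arms is active depends on both $c$ and on whether we are in case~(i) or~(ii), and the non-monotonicity around $c=0$---a saddle-like feature driven entirely by the symmetry $\phi(x)=\phi(-x)$---is exactly what generates the perhaps-counterintuitive ``nonzero optimal threshold'' phenomenon.
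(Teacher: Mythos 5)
Your proposal is correct in substance but follows a genuinely different route from the paper. The paper proceeds by guess-and-verify in the fictitious zero-sum game: it exhibits a two-point least favorable prior in case (i) (respectively a four-point, data-uninformative prior in case (ii)), verifies the saddle point, and obtains uniqueness in case (i) from the fact that a unique best response to a least favorable prior is the unique minimax rule. You instead reduce the problem to a two-dimensional optimization over the feasible set of operating characteristics $(p,q)=(\mathbb{E}_{-\bar{\mu}}[d],\mathbb{E}_{\bar{\mu}}[d])$, whose Pareto frontier is the binormal ROC curve $q=\Phi(\Phi^{-1}(p)+2\lVert w\rVert_{\Sigma})$ by Neyman--Pearson. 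This buys you something the paper's argument obscures: the case split \eqref{eq:condition-1} versus \eqref{eq:condition-large} is \emph{discovered} as the feasibility condition for the unconstrained minimizer $(1-q^{\star},q^{\star})$ rather than verified after the fact, and the ``only if'' direction of (ii) falls out immediately from uniqueness of that minimizer over $[0,1]^{2}$, where the paper instead invokes Nature's indifference condition. Your verifications of $d^{*}_{\text{RT}}$, $d^{*}_{\text{linear}}$, $d^{*}_{\text{step}}$ and of parts (iii) and (iv) then coincide with the paper's computations (cf.\ Lemmas \ref{lemma:a1-3-1} and \ref{lemma:a1-4-1}); all the closed forms you report check out.

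What the paper's route buys, conversely, is exactly the step you flag as your main obstacle: uniqueness in case (i). Your plan --- restrict attention to the upper ROC, partition by which arms of the two maxima are active, and use strict concavity of the ROC to get strict convexity of $\tfrac{A}{2}\left(1-h(p)+p\right)$ on the middle piece with interior minimum at $p=\Phi(-\lVert w\rVert_{\Sigma})$, flanked by strictly monotone pieces --- does close the gap, but two small additions are needed: first, an argument that any feasible point off the upper ROC with $q<q^{\star}$ (or, symmetrically, with $p>1-q^{\star}$) is strictly improvable, so the constrained optimum must lie on that boundary; second, a remark that the knife-edge $q^{\star}=\Phi(\lVert w\rVert_{\Sigma})$ belongs to case (i) even though the target point is then just barely feasible --- it sits on the ROC and is still uniquely attained by $d^{*}_{w,0}$ via the uniqueness part of Neyman--Pearson. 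Neither point is a flaw in the approach, only bookkeeping that the paper sidesteps with the game-theoretic shortcut.
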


Theorem \ref{thm:1} reveals that the $\Gamma$-MMR rules qualitatively change depending on whether condition (\ref{eq:condition-1}) is met or not. This condition admits an intuitive interpretation: Up to clamping to the unit interval (i.e., values outside this interval are mapped to its edges), the left-hand side of (\ref{eq:condition-1}) equals the unique minimax regret optimal rule for \emph{known} $\mu$ \citep{Manski2007}; therefore, it arguably measures the model's identification strength. The right-hand side of (\ref{eq:condition-1}) can be interpreted as the informativeness of the data about which of $\{-\priorlocation,\priorlocation\}$ obtained. Therefore, Theorem \ref{thm:1} says that if the model's identification power is sufficiently large compared to the informativeness of the data, then, the unique $\Gamma$-MMR optimal rule is $d_{w,0}^{*}$, a non-randomized linear index rule with threshold $0$ that effectively ignores the partial identification issue. In contrast, if the model's identification power is small compared to the informativeness of data, there are infinitely many $\Gamma$-MMR optimal rules, all of which satisfy (\ref{eq:mmr.large.condition}) and (\ref{eq:mmr.large.condition.cont}). Examples include suitably smoothed versions of $d_{w,0}^{*}$ like $d_{\text{RT}}^{*}$ and $d_{\text{linear}}^{*}$ (the functional forms of which showed up in \citealt{MQS}) as well as $d_{\text{step}}^{*}$  (which is a new result).

To understand how these two ``regimes'' arise, it is helpful to think of a zero-sum game against an adversarial Nature in which a MMR decision rule and a distribution over parameter values (the \textit{least favorable prior}) form a Nash equilibrium.\footnote{Analyzing this game is also how results are formally proved. See \cite{wald45} for what may be the first clear statement of this and \citet[][Theorem 1.4, Chap. 5]{lehmann2006theory} for a formalization. Within the literature on decisions under partial identification, the proof technique was first explicitly used in \cite{Stoye07} and the equilibrium structure with a noninformative and an informative regime was first encountered in \citet{stoye2012minimax}.} If identification is strong in the sense of \eqref{eq:condition-1}, this game has an \textit{informative} equilibrium: the least favorable prior evenly randomizes over two points $(\mu,U(\theta)) \in \{(\bar{\mu},\overline{I}(\bar{\mu})),(-\bar{\mu},\underline{I}(-\bar{\mu}))\}$, to which the unique Bayes response (and, therefore, uniquely optimal MMR rule) is  $d_{w,0}^*$. In all other cases, the equilibrium is \textit{uninformative}: the least favorable prior is supported on four points $(\mu, U(\theta))\in\{(\bar{\mu},\overline{I}(\bar{\mu})),(\bar{\mu},\underline{I}(\bar{\mu})),(-\bar{\mu},\overline{I}(-\bar{\mu})),(-\bar{\mu},\underline{I}(-\bar{\mu}))\}$ with a probability profile such that the posterior expectation of $U(\theta)$ always equals $0$. While any decision rule best responds to this prior, not every decision rule is MMR because the least favorable prior must also best respond to the decision rule. In this very structured setting, the latter is guaranteed by conditions \eqref{eq:mmr.large.condition} and \eqref{eq:mmr.large.condition.cont}, establishing claim (ii) and nonuniqueness of optimal decision rules. Indeed, even the nonrandomized threshold rules from Theorem \ref{thm:1}(iv) do not reflect any updating along the game's equilibrium path; they rather use uninformative features of the data as randomization device. 

In the uninformative equilibrium, we encounter several additional findings. First, despite the problem's apparent symmetry, $d_{w,0}^*$ is  \emph{not} optimal even among linear-index threshold rules that use the index $w^{\top}Y$. Instead, Theorem \ref{thm:1}(iii) characterizes exactly two optimal thresholds, one positive and one negative. Second, while the particular linear-index threshold rule  $d_{w,0}^*$  is not $\Gamma$-MMR optimal, in higher dimensional problems ($n>1$), there do exist linear-index threshold rules that are. This finding is in stark contrast to \citet{MQS}, who find that, in a large class of special cases, \emph{no} linear index threshold rule is MMR optimal. The crucial difference in settings is that the set of priors much constrains the decision theoretic problem's state space; as a result, the signal space is much richer than the state space, and this can be exploited to mimic randomization without nominally randomizing. Compare \citeauthor{manski2022identification}'s (\citeyear{manski2022identification}) abstract observation that, if a policy maker is not allowed to explicitly randomize, sampling uncertainty can be beneficial by providing an implicit randomization device. We note that this phenomenon is reminiscient of classic ``purification'' results in game theory \citep{DWW,purify}.\footnote{We thank Elliot Lipnowski for reminding us of this literature.} In contrast, it is not deeply related to the general intuition that ``Bayesians don't randomize.''

We next apply Theorem \ref{thm:1} to Example \ref{ex:stoye} and immediately get the following results.
\begin{corollary}[$\Gamma$-MMR decisions in Example \ref{ex:stoye}]\label{coro:1}
In Example \ref{ex:stoye}, the following statements are true:
    \begin{itemize}
        \item[(i)] If
        \begin{equation}\label{eq:condition}
  \frac{\priorlocation+k}{2k}\geq\Phi\left(\priorlocation/\sigma\right),
\end{equation}
then 
\[
d_0^*(\cdot):=\mathbf{1}\{\hat{\mu}\geq0\}
\]is the unique $\Gamma$-MMR optimal decision rule. \item[(ii)] If \eqref{eq:condition} fails, then a rule $d \in \mathcal{D}_1$ attains $\Gamma$-MMR if, and only if, it implies
\begin{eqnarray}
    \mathbb{E}[d(\hat{\mu}) \mid \mu=-\priorlocation] &=& \frac{-\priorlocation+k}{2k},\label{eq:stoye.large.condition.1} \\
    \mathbb{E}[d(\hat{\mu}) \mid \mu=\priorlocation] &=& \frac{\priorlocation+k}{2k}.\label{eq:stoye.large.condition.2}
\end{eqnarray}
In particular, no linear threshold rule is $\Gamma$-MMR optimal. The following rules, and any convex combination of them,  are all  $\Gamma$-MMR optimal:
\begin{align}
d^*_\text{RT}&:=\Phi\left(\frac{\hat{\mu}}{\tilde{\sigma}}\right),\\
 d^*_\text{linear}&:=\begin{cases}
0, &\hat{\mu}<-\frac{\sigma^{2}\rho^{*}}{\bar{\mu}},\\
\frac{\bar{\mu}\hat{\mu}+\sigma^{2}\rho^{*}}{2\sigma^{2}\rho^{*}}, & -\frac{\sigma^{2}\rho^{*}}{\bar{\mu}}\leq \hat{\mu}\leq\frac{\sigma^{2}\rho^{*}}{\bar{\mu}},\\
1, & \hat{\mu}>\frac{\sigma^{2}\rho^{*}}{\bar{\mu}},
\end{cases}\\
d^*_\text{step}&:=\begin{cases}
\frac{1}{2} - \frac{\priorlocation}{2k\left(2\Phi\left(\frac{\priorlocation}{\sigma}\right)-1\right)}, & \hat{\mu}<0,\\
 \frac{1}{2} + \frac{\priorlocation}{2k\left(2\Phi\left(\frac{\priorlocation}{\sigma}\right)-1\right)}, & \hat{\mu}\geq0,\\
\end{cases}
\end{align}
where 
\[\tilde{\sigma}=\sigma\sqrt{\left[\frac{\priorlocation}{\sigma \Phi^{-1}\left(\frac{\priorlocation+k}{2k}\right)}\right]^{2}-1 },\]
and $\rho^{*}>0$ is unique and solves $\int_{0}^{1}\Phi\left(\frac{2\rho^{*}x-\rho^{*}-(\nicefrac{\priorlocation}{\sigma}) ^{2}}{\nicefrac{\priorlocation}{\sigma} }\right)dx=\frac{-\priorlocation+k}{2k}$.

\item[(iii)] In case (ii), the best linear threshold rules in terms of $\Gamma$-minimax regret  are $\mathbf{1}\{\hat{\mu}\geq \pm c^*\}$, where
\begin{equation*}
     c^* = \priorlocation - \sigma \Phi^{-1}\left(\frac{\priorlocation+k}{2k}\right).
 \end{equation*}
    \end{itemize}
\end{corollary}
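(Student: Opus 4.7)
The proof is a direct specialization of Theorem \ref{thm:1} to the scalar setting of Example \ref{ex:stoye}, so my plan is to build a dictionary between the two and verify each part of the corollary follows by substitution.

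I would begin by translating all quantities. In Example \ref{ex:stoye}, $Y=\hat{\mu}$, $n=1$, $\Sigma=\sigma^{2}$, and the normalization $\overline{I}(\priorlocation)+\underline{I}(\priorlocation)>0$ reduces to $2\priorlocation>0$, so $\priorlocation>0$. Since $\overline{I}(\priorlocation)=\priorlocation+k$ and $\underline{I}(\priorlocation)=\priorlocation-k$, we have $\overline{I}(\priorlocation)-\underline{I}(\priorlocation)=2k$ and $\overline{I}(\priorlocation)/[\overline{I}(\priorlocation)-\underline{I}(\priorlocation)]=(\priorlocation+k)/(2k)$. Also $w=\Sigma^{-1}\priorlocation=\priorlocation/\sigma^{2}$ and $\wnorm=\priorlocation/\sigma$. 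Substituting into condition (\ref{eq:condition-1}) gives exactly (\ref{eq:condition}), and into $d_{w,0}^{*}$ gives $\mathbf{1}\{(\priorlocation/\sigma^{2})\hat{\mu}\geq0\}=\mathbf{1}\{\hat{\mu}\geq0\}$ since $\priorlocation>0$. This proves part (i).

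Part (ii) then follows similarly. The moment conditions (\ref{eq:mmr.large.condition})--(\ref{eq:mmr.large.condition.cont}) transcribe into (\ref{eq:stoye.large.condition.1})--(\ref{eq:stoye.large.condition.2}) directly, and the non-optimality of rules $\mathbf{1}\{w^{\top}Y\geq c\}$ in Theorem \ref{thm:1}(ii) carries over because every one-dimensional threshold rule $\mathbf{1}\{\hat{\mu}\geq c\}$ equals $\mathbf{1}\{w^{\top}Y\geq c\priorlocation/\sigma^{2}\}$. The main bit of bookkeeping is rescaling the index: since $w^{\top}Y=(\priorlocation/\sigma^{2})\hat{\mu}$, the cutoffs $w^{\top}Y=\pm\rho^{*}$ translate to $\hat{\mu}=\pm\sigma^{2}\rho^{*}/\priorlocation$, giving the piecewise form of $d^{*}_{\text{linear}}$, and the defining integral for $\rho^{*}$ in Theorem \ref{thm:1} specializes to the one stated in the corollary. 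For $d^{*}_{\text{RT}}$, I would write $\Phi(w^{\top}Y/\tilde{\sigma}_{\text{thm}})=\Phi(\hat{\mu}/\tilde{\sigma}_{\text{coro}})$ with $\tilde{\sigma}_{\text{coro}}=(\sigma^{2}/\priorlocation)\tilde{\sigma}_{\text{thm}}$, and then check algebraically (factoring $\priorlocation^{2}/\sigma^{2}$ out of the square root) that this matches $\sigma\sqrt{[\priorlocation/(\sigma\Phi^{-1}((\priorlocation+k)/(2k)))]^{2}-1}$. For $d^{*}_{\text{step}}$, plugging the specialized values into $\beta^{*}=[\overline{I}(\priorlocation)/(\overline{I}(\priorlocation)-\underline{I}(\priorlocation))-1/2]/(2\Phi(\wnorm)-1)$ yields $\beta^{*}=\priorlocation/[2k(2\Phi(\priorlocation/\sigma)-1)]$, matching the corollary.

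For part (iii), Theorem \ref{thm:1}(iii) provides optimal linear-index cutoffs $\pm c^{*}$ on $w^{\top}Y$. Converting to a cutoff on $\hat{\mu}$, the threshold becomes $(\sigma^{2}/\priorlocation)c^{*}=(\sigma^{2}/\priorlocation)[\wnorm^{2}-\wnorm\Phi^{-1}((\priorlocation+k)/(2k))]=\priorlocation-\sigma\Phi^{-1}((\priorlocation+k)/(2k))$, which matches the stated $c^{*}$. Note that there is no part (iv) analogue in the corollary because $n=1$ precludes picking a nonzero $\dot{\mu}$ orthogonal to $w$ in the appropriate inner product. I do not anticipate any substantive obstacle; the only real nuisance is careful bookkeeping of the rescaling between the theorem's $w^{\top}Y$-indexed rules and the corollary's $\hat{\mu}$-indexed rules, particularly in the algebraic simplification of $\tilde{\sigma}$.
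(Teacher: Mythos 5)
Your proposal is correct and follows the same route as the paper, which simply notes $\overline{I}(\priorlocation)=\priorlocation+k$, $\underline{I}(\priorlocation)=\priorlocation-k$, $\Sigma=\sigma^2$ and invokes Theorem \ref{thm:1}(i)--(iii); your additional bookkeeping of the rescaling $w^\top Y=(\priorlocation/\sigma^2)\hat{\mu}$ and the simplification of $\tilde{\sigma}$, $\rho^*$, $\beta^*$ and $c^*$ all check out. The only (very minor) point left implicit in both your write-up and the paper's is that "no linear threshold rule" in part (ii) also covers decreasing rules $\mathbf{1}\{\hat{\mu}\leq t\}$, which are ruled out because the two moment conditions would then require $t>\priorlocation$ and $t<-\priorlocation$ simultaneously.
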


\begin{figure}[http]
{\centering
\includegraphics[width=0.9\linewidth]{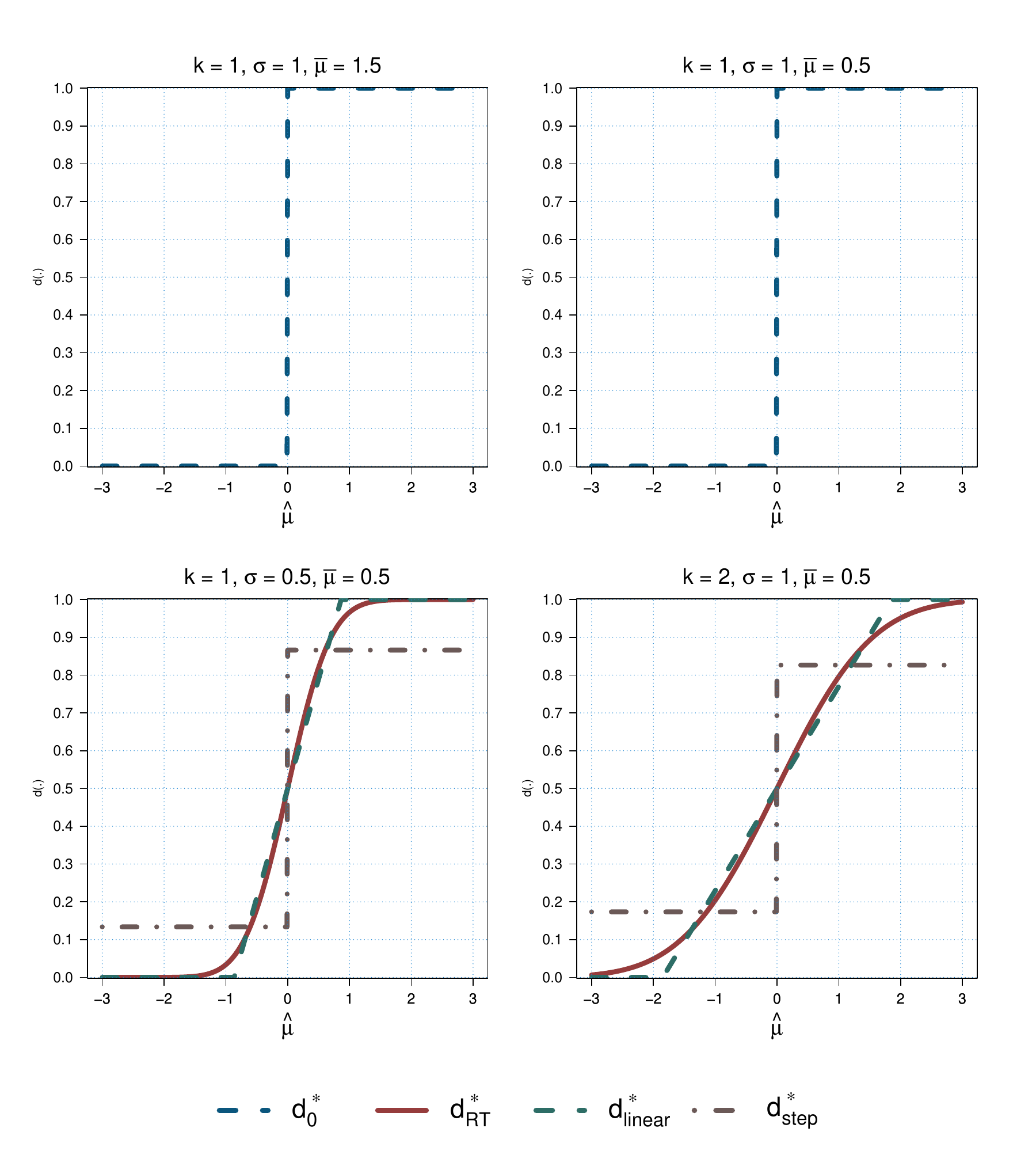}
    \caption{$\Gamma$-MMR optimal rules in Example \ref{ex:stoye}}
    \label{fig:ex-ante-example}}

    {\raggedright \footnotesize \textit{Notes}: This figure reports the $\Gamma$-MMR optimal rules in Example \ref{ex:stoye} for various combinations of parameter values. In the top two panels, the combinations of parameter values satisfy \eqref{eq:condition}. Therefore, the unique $\Gamma$-MMR optimal rule is $d^*_0$. In the bottom two panels, \eqref{eq:condition} fails. As a result,  $d^*_0$ is no longer $\Gamma$-MMR optimal. Instead, $d^*_{\text{RT}}$, $d^*_{\text{linear}}$ and $d^*_{\text{step}}$ are all $\Gamma$-MMR optimal. }
\end{figure}
Note that there is no analog to Theorem \ref{thm:1}'s case (iv); indeed, no linear threshold rule is optimal in case (ii). This is because the scalar nature of the signal $Y$ shuts down the aforementioned purification mechanism. See Figure \ref{fig:ex-ante-example} for an illustration of different $\Gamma$-MMR optimal rules in Example \ref{ex:stoye} for selected parameter values of $k,\sigma$ and $\bar{\mu}$.

\subsection{Ex-post Robust Bayes Optimality}\label{sec:ex.post}

\begin{thm}[$\Gamma$-PER optimal decisions]\label{thm:2}
Suppose all conditions in Theorem \ref{thm:1} hold true. Then: 
\begin{itemize}
\item[(i)] If $\underline{I}(\priorlocation)<0<\overline{I}(\priorlocation)$, the unique $\Gamma$-PER optimal rule is 
\[
d_{\operatorname{PER}}^{*}(Y)=\begin{cases}
\frac{-\underline{I}(\priorlocation)}{\overline{I}(\priorlocation)-\underline{I}(\priorlocation)}, & \text{if }w^{\top}Y<0,
\\
\frac{\overline{I}(\priorlocation)}{\overline{I}(\priorlocation)-\underline{I}(\priorlocation)}, & \text{if }w^{\top}Y\geq0.
\end{cases}
\]
Otherwise, 
$d_{w,0}^{*}$
is $\Gamma$-PER optimal.
\item[(ii)]   $d_{w,0}^{*}$ is always the
$\Gamma$-PER optimal non-randomized rule. 
\end{itemize}
\end{thm}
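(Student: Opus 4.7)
The plan is to work out the posterior over $\theta$ induced by each $\pi \in \Gamma$ and then minimize worst-case posterior expected regret pointwise in $Y$. Since $\pi_{\mu} = \mathrm{unif}\{-\priorlocation, \priorlocation\}$ and $Y \mid \mu \sim N(\mu,\Sigma)$, Bayes' rule yields a two-point posterior on $\mu$ with
\[
p(Y) := \pi_{\mu\mid Y}(\priorlocation) = \frac{\exp(2 w^{\top}Y)}{1+\exp(2 w^{\top}Y)},
\]
so that $p(Y) \gtreqless 1/2$ if and only if $w^{\top}Y \gtreqless 0$. By the footnote after \eqref{eq:conditional.prior}, $\pi_{\theta\mid Y} = p(Y)\,\pi_{\theta\mid\priorlocation} + (1-p(Y))\,\pi_{\theta\mid -\priorlocation}$, and because the adversary can choose $\pi_{\theta\mid\priorlocation}$ and $\pi_{\theta\mid -\priorlocation}$ independently, the supremum decomposes as
\[
\sup_{\pi \in \Gamma} \rho(a, \pi_{\theta \mid Y}) = p(Y)\, M_{+}(a) + (1-p(Y))\, M_{-}(a),
\]
where $M_{\pm}(a) := \sup_{\pi_{\theta\mid\pm\priorlocation}} \mathbb{E}_{\pi_{\theta\mid\pm\priorlocation}}[L(a,\theta)]$.

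Next I would compute $M_{\pm}(a)$ explicitly. Writing $L(a,\theta) = U(\theta)\{\mathbf{1}\{U(\theta) \geq 0\} - a\}$, the addends $U(1-a)\mathbf{1}\{U\geq 0\}$ and $-Ua\mathbf{1}\{U<0\}$ are each nonnegative for $a\in[0,1]$, so a point-mass adversary at an endpoint of the compact interval $I(\pm\priorlocation)$ attains the sup. Combined with the centrosymmetry $I(-\priorlocation) = -I(\priorlocation)$ from Assumption \ref{asm:1}, this gives in case (i), $\underline{I}(\priorlocation) < 0 < \overline{I}(\priorlocation)$,
\[
M_{+}(a) = \max\{\overline{I}(\priorlocation)(1-a),\, -\underline{I}(\priorlocation) a\}, \qquad M_{-}(a) = \max\{-\underline{I}(\priorlocation)(1-a),\, \overline{I}(\priorlocation) a\}.
\]
The objective $p M_{+} + (1-p) M_{-}$ is then piecewise linear in $a$ with kinks at $a^{*}_{-} = -\underline{I}(\priorlocation)/(\overline{I}(\priorlocation)-\underline{I}(\priorlocation))$ and $a^{*}_{+} = \overline{I}(\priorlocation)/(\overline{I}(\priorlocation)-\underline{I}(\priorlocation))$, satisfying $a^{*}_{-} < 1/2 < a^{*}_{+}$ by the normalization $\overline{I}(\priorlocation) + \underline{I}(\priorlocation) > 0$. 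A short case check shows it is strictly decreasing on $[0, a^{*}_{-}]$, linear with slope $(1-2p)\,\overline{I}(\priorlocation)$ on the middle piece, and strictly increasing on $[a^{*}_{+}, 1]$. Hence the unique minimizer is $a^{*}_{+}$ when $p(Y) > 1/2$ and $a^{*}_{-}$ when $p(Y) < 1/2$, coinciding with $d^{*}_{\operatorname{PER}}$ almost surely because $\{w^{\top}Y = 0\}$ has Lebesgue measure zero. In the complementary subcase $\underline{I}(\priorlocation) \geq 0$ (the normalization excludes $\overline{I}(\priorlocation) \leq 0$), the objective collapses to $\overline{I}(\priorlocation)[p + (1-2p)a]$, minimized at $a=1$ when $w^{\top}Y > 0$ and at $a=0$ when $w^{\top}Y < 0$, recovering $d^{*}_{w,0}$.

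For part (ii) I would redo the minimization over $a \in \{0,1\}$. The two endpoint values are $p(Y)\overline{I}(\priorlocation) - (1-p(Y))\underline{I}(\priorlocation)$ and $-p(Y)\underline{I}(\priorlocation) + (1-p(Y))\overline{I}(\priorlocation)$, whose difference equals $(2p(Y)-1)(\overline{I}(\priorlocation) + \underline{I}(\priorlocation))$. By the normalization, this difference has the sign of $w^{\top}Y$, so $a=1$ strictly beats $a=0$ iff $w^{\top}Y > 0$, and $d^{*}_{w,0}$ is the unique non-randomized $\Gamma$-PER optimum up to $Y$-almost-sure equality.

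The main obstacle I anticipate is justifying step 2: the decomposition of the supremum and the reduction to point-mass adversaries. Because Definition \ref{def:PER} takes the sup pointwise in $Y$, the adversary's conditional priors may effectively depend on $Y$, which is what licenses the independent optimization over $\pi_{\theta\mid\pm\priorlocation}$; and because $U \mapsto U(\mathbf{1}\{U\geq 0\} - a)$ is upper semicontinuous on the compact interval $I(\pm\priorlocation)$ for each $a$, the sup over conditional priors equals the maximum over point masses at its two extremizers. Once these measure-theoretic points are pinned down, what remains is the explicit piecewise-linear minimization above.
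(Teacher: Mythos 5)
Your proposal is correct and follows essentially the same route as the paper's proof (Lemmas \ref{lemma:a2-1} and \ref{lemma:a2-2}): decompose the worst-case posterior expected regret over the two-point posterior on $\mu$, reduce the adversary's conditional priors to the endpoints of $I(\pm\priorlocation)$, and minimize the resulting piecewise-linear function of $a$ with kinks at $-\underline{I}(\priorlocation)/(\overline{I}(\priorlocation)-\underline{I}(\priorlocation))$ and $\overline{I}(\priorlocation)/(\overline{I}(\priorlocation)-\underline{I}(\priorlocation))$. The only cosmetic difference is that you normalize by the marginal likelihood and work with the posterior probability $p(Y)$ rather than the unnormalized likelihoods $f(Y\mid\pm\priorlocation)$, which changes nothing.
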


The results of Theorem \ref{thm:2} offer some important clarifications regarding $\Gamma$-PER in treatment choice problems with partial identification. First, even if regret is evaluated according to the posterior distribution,  it is not necessarily true that optimal rules are non-randomized. In fact, whenever there is model ambiguity regarding the sign of $U(\theta)$ (i.e., $\underline{I}(\priorlocation)<0<\overline{I}(\priorlocation)$), the unique $\Gamma$-PER optimal rule is randomized. Therefore, restricting the action space to $\{0,1\}$ in such problem is \emph{not} without loss of generality even under the $\Gamma$-PER criterion. Comparing results with Theorem \ref{thm:1} also allows for instructive observations on when $\Gamma$-PER and $\Gamma$-MMR optimal rules agree or disagree; we elaborate these in  Corollary \ref{coro:3}. Applying Theorem \ref{thm:2} to Example \ref{ex:stoye}, we finally obtain:

\begin{corollary}[$\Gamma$-PER rules in Example \ref{ex:stoye}] \label{coro:2}
In Example \ref{ex:stoye}, the following statements are true:
\begin{itemize}
        \item[(i)] If $\priorlocation<k$, then $\Gamma$-PER is uniquely minimized by
         \[d^*_{\operatorname{PER}}(\hat{\mu}) = \begin{cases}
        \frac{k + \bar{\mu}}{2k} & \text{if } \hat{\mu} \geq 0\\
        \frac{k-\bar{\mu}}{2k} & \text{if } \hat{\mu} < 0
\end{cases}.\]
\item[(ii)] If $\priorlocation\geq k$, then $d_0^*$ is $\Gamma$-PER optimal. Furthermore, it is always the $\Gamma$-PER optimal threshold rule.
\end{itemize}
\end{corollary}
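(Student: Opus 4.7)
The plan is to derive Corollary \ref{coro:2} as a direct specialization of Theorem \ref{thm:2} to Example \ref{ex:stoye}. The first step is to verify that all hypotheses of Theorem \ref{thm:2} (and hence Theorem \ref{thm:1}) are met in this scalar setup: $\Theta = \{(\mu,\mu^*) : |\mu^* - \mu| \leq k\}$ is convex, centrosymmetric, and nonempty, while $m(\theta) = \mu$ and $U(\theta) = \mu^*$ are linear, so Assumption \ref{asm:1} holds. The normalization $\overline{I}(\priorlocation) + \underline{I}(\priorlocation) > 0$ reduces to $2\priorlocation > 0$, so we may assume $\priorlocation > 0$ throughout. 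The key dictionary entries are $\overline{I}(\priorlocation) = \priorlocation + k$, $\underline{I}(\priorlocation) = \priorlocation - k$, $\overline{I}(\priorlocation) - \underline{I}(\priorlocation) = 2k$, and $w = \Sigma^{-1}\priorlocation = \priorlocation/\sigma^2$, so that $\{w^\top Y \geq 0\} = \{\hat{\mu} \geq 0\}$ since $\priorlocation > 0$. This converts the abstract indicator $\mathbf{1}\{w^\top Y \geq 0\}$ in Theorem \ref{thm:2} into $d_0^*(\hat{\mu}) = \mathbf{1}\{\hat{\mu} \geq 0\}$.

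For part (i), I would invoke Theorem \ref{thm:2}(i) under the condition $\underline{I}(\priorlocation) < 0 < \overline{I}(\priorlocation)$, which in the current setup becomes $\priorlocation - k < 0 < \priorlocation + k$, i.e., $\priorlocation < k$. Substituting the above dictionary entries into the formula for $d_{\operatorname{PER}}^*$ yields the two-level rule with values $(k-\priorlocation)/(2k)$ for $\hat{\mu} < 0$ and $(k+\priorlocation)/(2k)$ for $\hat{\mu} \geq 0$, matching the claim exactly. Uniqueness then transfers directly from the uniqueness clause in Theorem \ref{thm:2}(i).

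For part (ii), when $\priorlocation \geq k$ the identified set satisfies $\underline{I}(\priorlocation) \geq 0$, so we are in the ``otherwise'' branch of Theorem \ref{thm:2}(i), which asserts $\Gamma$-PER optimality of $d_{w,0}^* = \mathbf{1}\{\hat{\mu} \geq 0\} = d_0^*$. The second claim of (ii) --- that $d_0^*$ is always the $\Gamma$-PER optimal non-randomized rule, regardless of whether $\priorlocation < k$ or $\priorlocation \geq k$ --- follows from Theorem \ref{thm:2}(ii) applied verbatim, after the same translation $d_{w,0}^* = d_0^*$.

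Because each statement in the corollary is literally a reparametrization of Theorem \ref{thm:2}, there is no substantive obstacle; the proof is essentially bookkeeping. The only point requiring a brief remark is the sign of $\priorlocation$: without the normalization $\overline{I}(\priorlocation) + \underline{I}(\priorlocation) > 0$ the equivalence $\{w^\top Y \geq 0\} = \{\hat{\mu} \geq 0\}$ could fail, and the statement of the corollary would need adjustment. With $\priorlocation > 0$ established up front, the translation is immediate and unambiguous.
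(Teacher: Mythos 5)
Your proposal is correct and matches the paper's approach: the paper's own proof of Corollary \ref{coro:2} is simply ``Directly follows from Theorem \ref{thm:2},'' and your write-up carries out exactly that specialization, correctly translating $\overline{I}(\priorlocation)=\priorlocation+k$, $\underline{I}(\priorlocation)=\priorlocation-k$, $w=\priorlocation/\sigma^2$, and the condition $\underline{I}(\priorlocation)<0<\overline{I}(\priorlocation)$ into $\priorlocation<k$. The extra remark about the normalization forcing $\priorlocation>0$ (so that $\{w^\top Y\geq 0\}=\{\hat{\mu}\geq 0\}$) is a useful piece of bookkeeping the paper leaves implicit.
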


\begin{figure}  
{\centering
\includegraphics[width=0.9\linewidth]{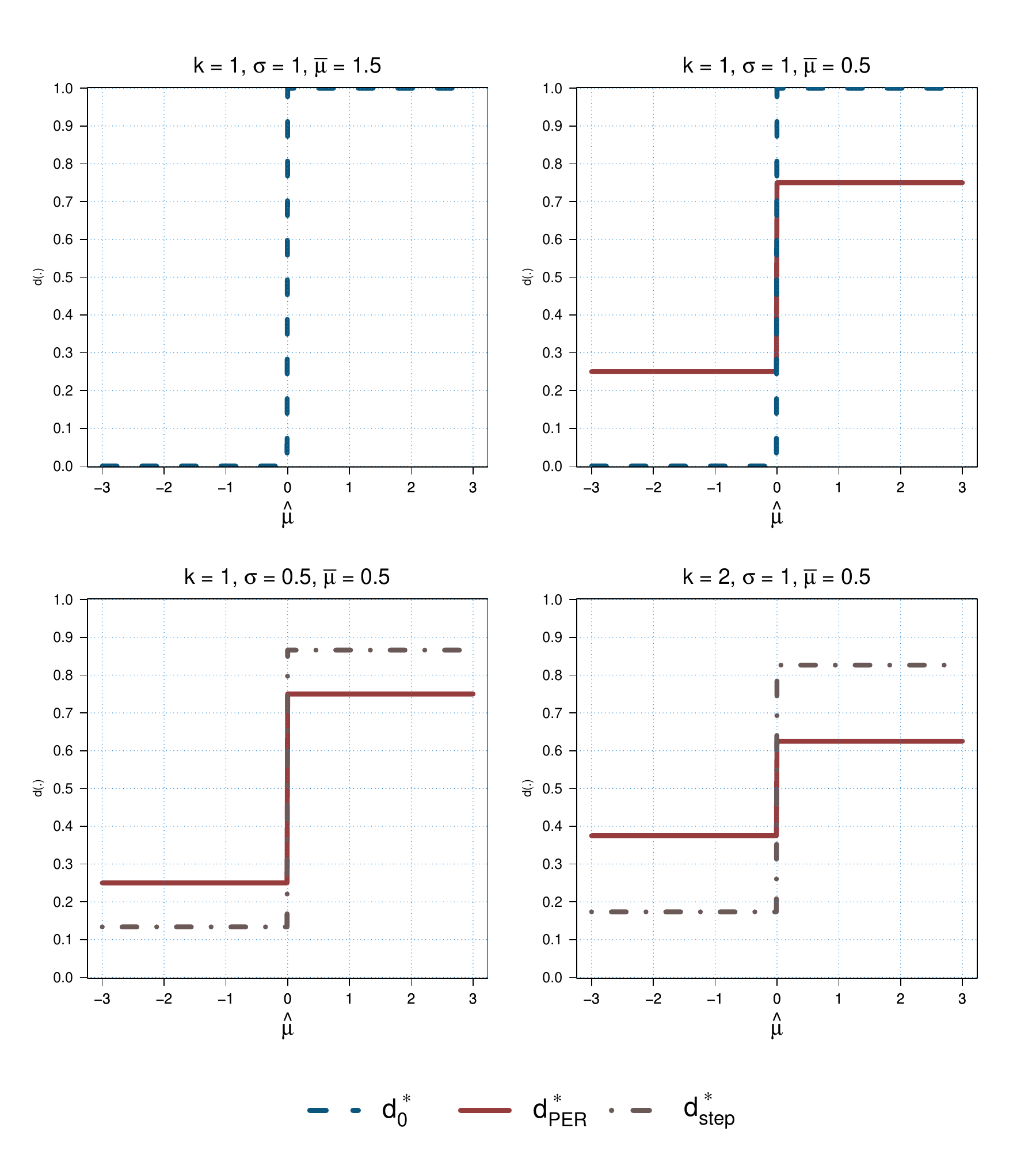}
    \caption{$\Gamma$-PER optimal rules in Example \ref{ex:stoye}}
    \label{fig:ex-post-example}}
    
 {\raggedright \footnotesize \textit{Notes}: This figure reports the $\Gamma$-PER optimal rules for the same parameter values considered in Figure \ref{fig:ex-ante-example}. In the top left panel, $\Gamma$-PER and $\Gamma$-MMR optimal rules coincide and are both $d^*_{0}$. For the rest of the panels, the unique $\Gamma$-PER optimal rules are all  $d^*_{\text{PER}}$ and are different from any $\Gamma$-MMR optimal rules.
}

\end{figure}

In Figure \ref{fig:ex-post-example}, we depict $\Gamma$-PER optimal rules for Example \ref{ex:stoye} with the same parameter values considered in Figure \ref{fig:ex-ante-example}. We see clearly that  $\Gamma$-MMR and -PER optimal rules coincide (if randomization is allowed) only in the special case when $k\leq\bar{\mu}$, an observation we generalize in Corollary \ref{coro:3}. More specifically, in the top left panel of Figure \ref{fig:ex-post-example}, as $k\leq\bar{\mu}$, $\Gamma$-MMR and -PER coincide and are the non-randomized threshold rule $d^*_{0}$. In the top right panel, $k>\bar{\mu}$ and the $\Gamma$-PER optimal rule becomes $d^*_{\text{PER}}$. However, since \eqref{eq:condition} still holds, $d^*_0$ is still $\Gamma$-MMR optimal. For the bottom two panels, as it still holds $k>\bar{\mu}$, $d^*_{\text{PER}}$ is still $\Gamma$-PER optimal. However, the associated parameter values imply \eqref{eq:condition} fails. As a result, $d^*_{0}$ is no longer $\Gamma$-MMR optimal and many  $\Gamma$-MMR rules exist. But even in theses cases, $\Gamma$-MMR and -PER rules differ, as among the class of step function rules (which contain $d^*_{\text{PER}}$), only $d^*_{\text{step}}$ is $\Gamma$-MMR optimal, still different from $d^*_{\text{PER}}$.

\begin{figure}[http]
\includegraphics[width=1\linewidth]{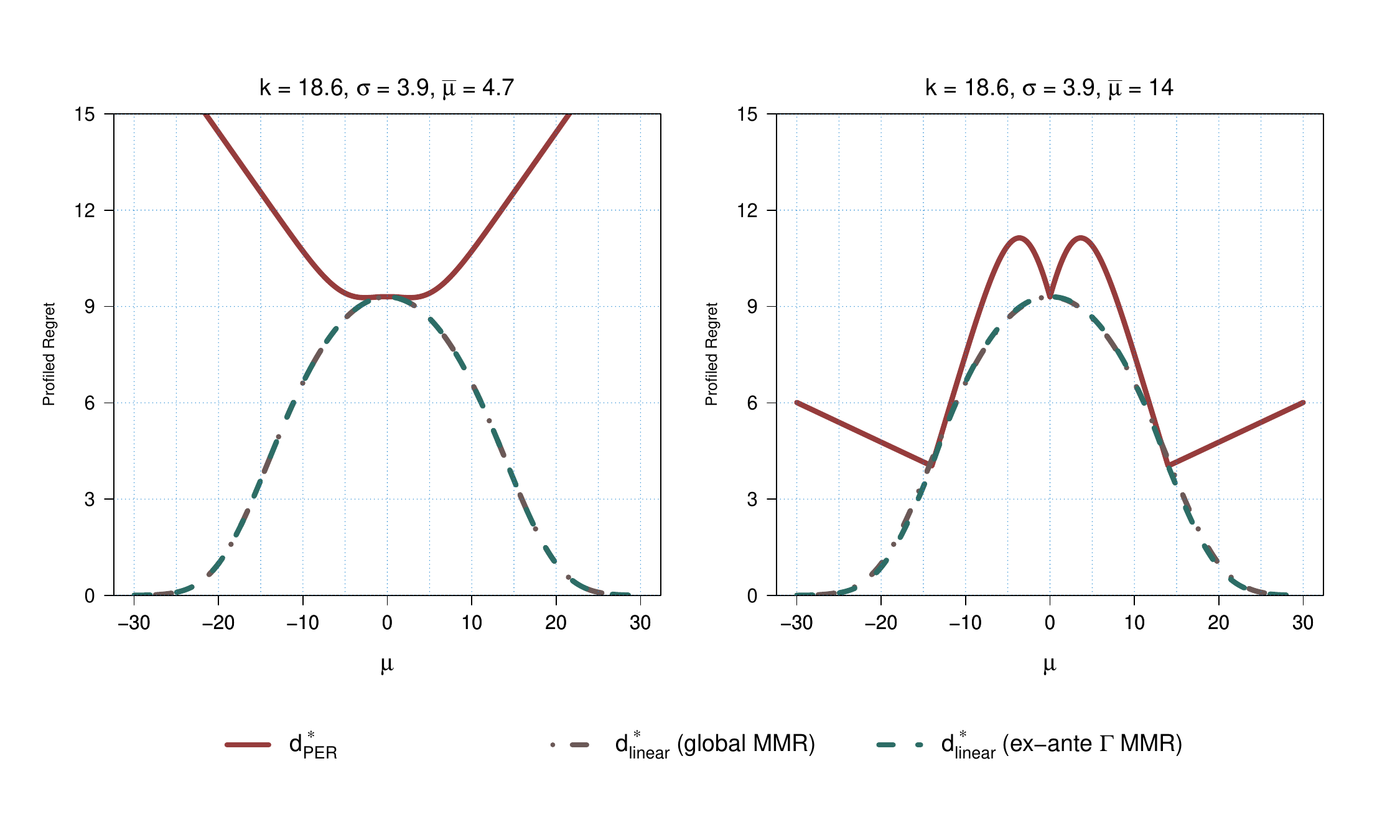}
\caption{Profiled regret of $\Gamma$-PER and other rules in Example \ref{ex:stoye}}\label{fig:profiled-regret-example}
 {\raggedright \footnotesize \textit{Notes}: This figure reports the (frequentist) profiled regrets, as a function of the true mean $\mu$ of data $\hat{\mu}$, of $\Gamma$-PER rule, $\Gamma$-MMR linear rule, and the least randomizing global MMR optimal rule (\citealt{MQS}) for the parameter values considered in \citet[Figure 2, ][]{MQS}. In the left plot,  we see $\Gamma$-PER rule is visually dominated. In both plots, the profiled regret of  $\Gamma$-MMR linear rule and the least randomizing global MMR optimal rule look very similar and are essentially overlapping with each other. 
}
\end{figure}

Applying \citet[Theorem 1,][]{MQS} to the current setting, we can conclude that  all rules, including both $\Gamma$-MMR and $\Gamma$-PER optimal rules, are at least admissible. Moreover, by definition, the $\Gamma$-MMR rule will have the lower value of the (ex ante) game.
  Therefore, it might be more useful to compare the (frequentist) profiled regrets \citep{MQS} of ex-post $\Gamma$-PER and other rules as a function of the true but unknown mean $\mu$ of data $\hat{\mu}$. In the context of Example \ref{ex:stoye}, we can report $\bar{R}(d,\mu):=\sup_{\mu^*\in[\mu-k,\mu+k]}R(d,\mu,\mu^*)$ as $\mu$ varies for each rule $d$, where $R(d,\mu,\mu^*)$ is the expected regret of rule $d$ as a function of $\mu$ and $\mu^{*}$. 
We report several findings (we emphasize that these are not obvious: we took an ex-ante perspective but did not
restrict ourselves to those values of $\mu$ that the prior allows). First, in both panels of Figure \ref{fig:profiled-regret-example}, the profiled expected regret of the $\Gamma$-PER rule exceeds that of the $\Gamma$-MMR rule. Therefore, in this particular example, $\Gamma$-MMR arguably outperforms $\Gamma$-PER from a broader frequentist point of view. Second, we in fact prove that,  whenever $\bar{\mu}$ is sufficiently small and $k$ is sufficiently large, the $\Gamma$-PER rule is  profiled-regret dominated, i.e., there exists a rule $d\neq d^*_{\text{PER}}$  such that 
\begin{equation}\label{eq:profiled.regret.dominance.main.text}
\overline{R}(d,\mu)\leq\overline{R}(d^*_{\text{PER}},\mu)
\end{equation}
for all $\mu\in\mathbb{R}$ with the inequality strict for some $\mu$;  see Lemma \ref{lem:profiled.regret} in Appendix \ref{sec:addtional.results}  for an exact statement. Intuitively, the $\Gamma$-PER  rule mixes between a coin flip rule and the naive threshold rule ($d_0^{*}$). When $\bar{\mu}$ is small, $\Gamma$-PER rule is more analogous to the coin flip rule, which \citet{MQS} show to be dominated in terms of profiled regret; also, its profiled regret fails to vanish as $\mu\to\pm\infty$ even though the optimal treatment is known ex ante in this case. 
Furthermore, Figure \ref{fig:profiled-regret-example} reveals that the $\Gamma$-MMR rule $d^{*}_{\text{linear}}$, though not globally MMR optimal, in some cases has essentially the same profiled risk function as the least randomizing global MMR optimal rule \citep{MQS}. This may lend a Robust Bayes interpretation to the latter.

\subsection{Ex-ante and Ex-post in the Game Against Nature}\label{sec:timing}

The different approaches analyzed here can all be expressed as different timing assumptions in the statistical game. The distinction is quite obvious for ex-ante versus ex-post: In the former (and original Waldian) perspective, this game is simultaneous move; in particular, Nature moves before data $Y$ are realized. In the ex-post perspective, Nature sees $Y$ before choosing a prior. It is immediately clear that this may be easier to solve because it allows for backward induction. There is also an immediate sense that solutions might not agree, as we indeed found.

But whether the decision maker is allowed to (or at least wants to) randomize or not can equally be thought of as changing the game's timing, providing another way to think about the action space for the ex-ante and ex-post approaches. The more standard perspective is that the decision maker may randomize over decision rules and Nature must move before learning the outcome of this randomization. By the nature of zero-sum games, this setup will frequently yield randomized solutions. In consequence, it is essential to define the  decision maker's action space as $[0,1]$. In contrast, if Nature is allowed to move after the decision maker's randomization is realized, then any incentive to randomize is gone and we may as well restrict the action space to $\{0,1\}$. %
 
Theorems \ref{thm:1} and \ref{thm:2} clarify that these distinctions actually matter in an interesting example. We spell this out in Corollary \ref{coro:3}, which considers both cases  when randomization is allowed and not allowed. The bottom line is that the assessment is quite sensitive to how the problem is set up. If underlying parameters lead to sufficiently small identification power, the criteria disagree. 

\begin{corollary}\label{coro:3}
Consider a treatment choice problem with welfare function \eqref{eq:welfare} and statistical model \eqref{eq:normal_model}  that satisfies Assumption \ref{asm:1}.
\begin{itemize}
    \item[(i)]  Suppose randomization is allowed. Then the $\Gamma$-MMR and $\Gamma$-PER optimal rules coincide if, and only if, $\underline{I}(\priorlocation)\geq0$.
    \item[(ii)] Suppose randomization is \emph{not} allowed. Then the $\Gamma$-MMR and $\Gamma$-PER optimal rules coincide if, and only if,
    \begin{equation}\label{eq:discussion.1}
\frac{\overline{I}(\priorlocation)}{\overline{I}(\priorlocation)-\underline{I}(\priorlocation)}\geq \Phi(\wnorm).
\end{equation}
\end{itemize}
\end{corollary}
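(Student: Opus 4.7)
The argument splits into cases determined by the sign of $\underline{I}(\priorlocation)$ and by which of \eqref{eq:condition-1}, \eqref{eq:condition-large} is in force, and then reads the answer off Theorems \ref{thm:1} and \ref{thm:2}.

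For part (i), if $\underline{I}(\priorlocation) \geq 0$ then by the normalization $\overline{I}(\priorlocation) + \underline{I}(\priorlocation) > 0$ the ratio $\overline{I}(\priorlocation)/(\overline{I}(\priorlocation)-\underline{I}(\priorlocation))$ is at least $1 > \Phi(\wnorm)$, so \eqref{eq:condition-1} holds and Theorems \ref{thm:1}(i) and \ref{thm:2}(i) both pick out $d^*_{w,0}$, making the two sets of optima agree. Conversely, if $\underline{I}(\priorlocation) < 0 < \overline{I}(\priorlocation)$ then Theorem \ref{thm:2}(i) makes the unique $\Gamma$-PER rule the randomized $d^*_{\operatorname{PER}}$. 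Under \eqref{eq:condition-1} the unique $\Gamma$-MMR rule is the non-randomized $d^*_{w,0}$, so the sets disagree. Under \eqref{eq:condition-large}, I plan to evaluate $\mathbb{E}_{\priorlocation}[d^*_{\operatorname{PER}}]$ using $w^\top Y \sim N(\wnorm^2, \wnorm^2)$: writing $p = -\underline{I}(\priorlocation)/(\overline{I}(\priorlocation)-\underline{I}(\priorlocation))$ and $q = 1-p$, this expectation equals $p + (q-p)\Phi(\wnorm)$, strictly less than $q$ since $q > p$ (by the normalization) and $\Phi(\wnorm) < 1$. Hence $d^*_{\operatorname{PER}}$ violates \eqref{eq:mmr.large.condition.cont}, is not $\Gamma$-MMR optimal, and the sets again disagree.

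For part (ii), Theorem \ref{thm:2}(ii) identifies $d^*_{w,0}$ as the unique (a.s.) non-randomized $\Gamma$-PER rule. When \eqref{eq:discussion.1} (equivalently \eqref{eq:condition-1}) holds, Theorem \ref{thm:1}(i) makes $d^*_{w,0}$ uniquely $\Gamma$-MMR optimal, so the non-randomized sets coincide. When \eqref{eq:condition-large} holds, Theorem \ref{thm:1}(ii) excludes $d^*_{w,0}$ from the globally $\Gamma$-MMR optimal set. To upgrade to the restricted (non-randomized) problem, I plan to exhibit a non-randomized rule that attains the global $\Gamma$-MMR infimum and hence strictly dominates $d^*_{w,0}$. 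Concretely, I apply Lyapunov's theorem to the non-atomic vector measure $A \mapsto (P_{-\priorlocation}(A), P_{\priorlocation}(A))$ to conclude that its range $R \subseteq [0,1]^2$ is convex; combining convexity with the Neyman-Pearson upper envelope $\beta_{\max}(\alpha) = \Phi(2\wnorm - \Phi^{-1}(1-\alpha))$ yields that under \eqref{eq:condition-large} the target $(p, q)$ sits strictly inside $R$, so some measurable $A^*$ achieves $P_{-\priorlocation}(A^*) = p$ and $P_{\priorlocation}(A^*) = q$. The indicator $\mathbf{1}_{A^*}$ is non-randomized, satisfies \eqref{eq:mmr.large.condition}--\eqref{eq:mmr.large.condition.cont}, and so strictly dominates $d^*_{w,0}$ in $\Gamma$-MMR. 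Thus the restricted $\Gamma$-MMR set does not contain $d^*_{w,0}$, while the non-randomized $\Gamma$-PER set equals $\{d^*_{w,0}\}$, and they disagree.

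The main obstacle is the Lyapunov step in part (ii). There is an initially puzzling tension: Theorem \ref{thm:1}(ii) excludes every $\mathbf{1}\{w^\top Y \geq c\}$, which are precisely the Neyman-Pearson rules for the pair $(N(-\priorlocation,\Sigma), N(\priorlocation,\Sigma))$. The resolution is that under \eqref{eq:condition-large} the target $(p, q)$ lies \emph{strictly} below the Neyman-Pearson upper envelope---this is in fact a restatement of \eqref{eq:condition-large}---so the witnessing $A^*$ lies off the likelihood-ratio boundary and can be a more exotic shape (for $n > 1$, Theorem \ref{thm:1}(iv)'s $\mathbf{1}\{w_{t^*}^\top Y \geq 0\}$ already supplies an explicit $A^*$; for $n = 1$, a union of a half-line and a bounded interval suffices). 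Once this witness is in hand, the remainder of the proof is just a matter of bookkeeping.
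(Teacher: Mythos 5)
Your proof is correct and, for most of the corollary, follows the paper's own route: part (i) and the ``if'' direction of part (ii) are read off Theorems \ref{thm:1} and \ref{thm:2} exactly as in the paper, and your direct computation that $\mathbb{E}_{\priorlocation}[d^*_{\operatorname{PER}}]=p+(q-p)\Phi(\wnorm)<q$ (so that $d^*_{\operatorname{PER}}$ violates \eqref{eq:mmr.large.condition.cont}) is just an explicit version of the paper's appeal to Step 4 of the proof of Theorem \ref{thm:1}(ii), which identifies the unique $\Gamma$-MMR optimal step rule as $d^*_{\text{step}}\neq d^*_{\operatorname{PER}}$. The one place you genuinely diverge is the ``only if'' direction of part (ii) when $n=1$. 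The paper needs much less there: it suffices that \emph{some} non-randomized rule strictly beats $d^*_{w,0}$, and Theorem \ref{thm:1}(iii) already supplies one, namely $\mathbf{1}\{w^{\top}Y\geq c^*\}$ with $c^*\neq 0$. You instead invoke Lyapunov's convexity theorem for the non-atomic vector measure $A\mapsto(P_{-\priorlocation}(A),P_{\priorlocation}(A))$ to produce a non-randomized rule attaining the \emph{global} MMR infimum; your key observation that $(p,q)$ lies strictly below the Neyman--Pearson envelope precisely when \eqref{eq:condition-large} holds is correct (it reduces to $\Phi^{-1}(q)<\wnorm$), so the witness $A^*$ exists and the argument goes through. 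This is more than the corollary requires, but it buys a genuinely stronger fact the paper does not state: even for $n=1$, where no linear threshold rule is $\Gamma$-MMR optimal, there exist non-threshold, non-randomized rules that are globally $\Gamma$-MMR optimal. If you only want the corollary as stated, citing Theorem \ref{thm:1}(iii) is the shorter path.
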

Thus, if randomization is allowed, $\Gamma$-MMR and $\Gamma$-PER optimal rules coincide only in the somewhat trivial case in which we a priori know that $U(\theta)$ and $\mu$ have the same sign, so that optimal treatment choice reduces to Bayesian inference on the point identified $\mu$. They disagree in all other cases, including in settings where there are infinitely many $\Gamma$-MMR optimal rules.

One might expect more agreement once randomization is excluded; after all, this leads to a much simpler action space. Part (ii) shows that there is some truth to this: The condition for agreement changes from $\underline{I}(\priorlocation)\geq0$ to the strictly weaker \eqref{eq:discussion.1}. However, the criteria continue to disagree in many cases. It may be instructive to think of these cases in terms of ``comparative statics.'' For example, consider holding all parameters of the problem fixed but scaling the signal variance $\Sigma$ by a positive scalar, say to reflect a change in sample size. Then \eqref{eq:discussion.1} will hold if, and only if, $\Sigma$ is \emph{large} enough; hence, as long as $\underline{I}(\priorlocation)\geq0$, increasing sample size will eventually cause disagreement between $\Gamma$-MMR and -PER even if randomization is excluded. Similarly, for fixed $\Sigma$, $\Gamma$-MMR and -PER rules will always disagree if the model's identification power is \emph{small} enough.

\section{Conclusion}\label{sec:conclusion}
We studied treatment choice problems that display partial identification through the lens of the robust Bayes criteria. To do so, we take the general framework of \citet{yata2021} and others and embed in it a simple example of the set of priors advocated by \cite{GiacominiKitagawa}. We describe and contrast (ex-ante) $\Gamma$-minimax regret and (ex-post) $\Gamma$-posterior expected regret and analytically derive optimal solutions with and without randomization.

Our results contain two key messages that we think are valuable to the literature. First, with partial identification and multiple priors, ex-ante and ex-post assessments do \emph{not} agree in general, whether or not randomized rules are allowed. This may at first seem expected due to dynamic inconsistency of multiple prior Bayes criteria, but was not obvious in view of the set of prior's specific structure. Second, randomization can be optimal in both ex-ante and ex-post problems---it is with loss of generality to exclude them even when regret is evaluated ex-post. The contrast between the results also illustrates a need to better understand the comparative advantages---whether from a theoretical or practical perspective---of using one criterion over the other.

An obvious limitation lies in our use of a convenient but restrictive prior on $\mu$. As we discovered in Section \ref{sec:ex.post}, the $\Gamma$-MMR rule also performs well across other values of $\mu$; however, this may be related to the fact that the globally least favorable prior in this setting has a two-point structure on $\mu$ as well and would obviously not generalize to arbitrary uses of restrictive priors. We provide some insight on more general priors in Appendix \ref{sec:generality}. In short, the tractability advantage of $\Gamma$-PER may become pronounced in such settings, although we hope that recent computational developments \citep{fernandez2024epsilon,guggenberger2025numerical} will attenuate this concern.

\bibliographystyle{ecta}
\bibliography{refs}

@article{guggenberger2025numerical,
  title={On the numerical approximation of minimax regret rules via fictitious play},
  author={Guggenberger, Patrik and Huang, Jiaqi},
  journal={arXiv preprint arXiv:2503.10932},
  year={2025}
}

@article{Stoye07, 
title={Minimax Regret Treatment Choice With Incomplete Data and many Treatments},
volume={23},
DOI={10.1017/S0266466607070089},
number={1},
journal={Econometric Theory},
publisher={Cambridge University Press},
author={Stoye, J{\"o}rg},
year={2007},
pages={190--199}
}

@article{manski2004statistical,
  title={Statistical treatment rules for heterogeneous populations},
  author={Charles F. Manski},
  journal={Econometrica},
  volume={72},
  number={4},
  pages={1221--1246},
  year={2004},
  publisher={Wiley Online Library}
}

@article{kitagawa2018should,
  title={Who should be treated? {E}mpirical welfare maximization methods for treatment choice},
  author={Kitagawa, Toru and Tetenov, Aleksey},
  journal={Econometrica},
  volume={86},
  number={2},
  pages={591--616},
  year={2018},
  publisher={Wiley Online Library}
}

@article{tetenov2012statistical,
  title={Statistical treatment choice based on asymmetric minimax regret criteria},
  author={Tetenov, Aleksey},
  journal={Journal of Econometrics},
  volume={166},
  number={1},
  pages={157--165},
  year={2012},
  publisher={Elsevier}
}

@article{stoye2009minimax,
  title={Minimax regret treatment choice with finite samples},
  author={Stoye, J{\"o}rg},
  journal={Journal of Econometrics},
  volume={151},
  number={1},
  pages={70--81},
  year={2009},
  publisher={Elsevier}
}

@article{stoye2012minimax,
  title={Minimax regret treatment choice with covariates or with limited validity of experiments},
  author={Stoye, J{\"o}rg},
  journal={Journal of Econometrics},
  volume={166},
  number={1},
  pages={138--156},
  year={2012},
  publisher={Elsevier}
}

@Article{manski2020towards,
  author    = {Charles F. Manski},
  title     = {Towards credible patient-centered meta-analysis},
  journal   = {Epidemiology},
  year      = {2020},
volume={31},
issue={3},
pages={345-352},
  publisher = {LWW},
}

@Article{Savage51,
  author  = {L. Savage},
  title   = {The theory of statistical decision},
  journal = {Journal of the American Statistical Association},
  year    = {1951},
  volume  = {46},
  pages   = {55-67},
}

@Article{Dehejia2005,
  author    = {Dehejia},
  title     = {Program evaluation as a decision problem},
  journal   = {Journal of Econometrics},
  year      = {2005},
  volume    = {125},
  pages     = {141-173},
  owner     = {Toru},
  timestamp = {2015.01.26},
}

@Article{HiranoPorter2009,
  author    = {Hirano, Keisuke and Porter, Jack R.},
  title     = {Asymptotics for statistical treatment rules},
  journal   = {Econometrica},
  year      = {2009},
  volume    = {77},
  number    = {5},
  pages     = {1683--1701},
  issn      = {1468-0262},
  doi       = {10.3982/ECTA6630},
  publisher = {Blackwell Publishing Ltd},
  url       = {http://dx.doi.org/10.3982/ECTA6630},
}

@Article{MT17,
  author    = {Eric Mbakop and Max Tabord-Meehan},
  title     = {Model selection for treatment choice: Penalized welfare maximization},
  journal   = {Econometrica},
  year      = {2021},
  volume    = {89},
  number    = {2},
  pages     = {825--848},
  owner     = {Toru},
  timestamp = {2017.03.05},
}

@Article{BhattacharyaDupas2012,
  author  = {Debopam Bhattacharya and Pascaline Dupas},
  title   = {Inferring welfare maximizing treatment assignment under budget constraints},
  journal = {Journal of Econometrics},
  year    = {2012},
  volume  = {167},
  number  = {1},
  pages   = {168-196},
  issn    = {0304-4076},
  doi     = {http://dx.doi.org/10.1016/j.jeconom.2011.11.007},
  url     = {http://www.sciencedirect.com/science/article/pii/S0304407611002697},
}

@Article{Manski2000,
  author  = {Charles F. Manski},
  title   = {Identification problems and decisions under ambiguity: empirical analysis of treatment response and normative analysis of treatment choice},
  journal = {Journal of Econometrics},
  year    = {2000},
  volume  = {95},
  pages   = {415-442},
}

@Article{Manski2007,
  author  = {Charles F. Manski},
  title   = {Minimax-regret treatment choice with missing outcome data},
  journal = {Journal of Econometrics},
  year    = {2007},
  volume  = {139},
  pages   = {105-115},
}

@Article{AW20,
  author  = {Susan Athey and Stefan Wager},
  title   = {Efficient policy learning with observational data},
  journal = {Econometrica},
  year    = {2021},
  volume  = {89},
  number  = {1},
  pages   = {133-161},
}

@Article{KT19,
  author    = {Toru Kitagawa and Aleksey Tetenov},
  title     = {Equality-Minded Treatment Choice},
  journal   = {Journal of Business Economics and Statistics},
  year      = {2021},
  volume    = {39},
  number    = {2},
  pages     = {561-574},
  owner     = {Toru},
  timestamp = {2017.03.05},
}

@Article{KST21,
  author  = {Toru Kitagawa and Shosei Sakaguchi and Aleksey Tetenov},
  title   = {Constrained classification and policy learning},
  journal = {arXiv preprint},
  year    = {2021},
}

@incollection{HiranoPorter2020,
title = {Asymptotic analysis of statistical decision rules in econometrics},
editor = {Steven N. Durlauf and Lars Peter Hansen and James J. Heckman and Rosa L. Matzkin},
series = {Handbook of Econometrics},
publisher = {Elsevier},
volume = {7},
pages = {283-354},
year = {2020},
booktitle = {Handbook of Econometrics, Volume 7A},
issn = {1573-4412},
doi = {https://doi.org/10.1016/bs.hoe.2020.09.001},
url = {https://www.sciencedirect.com/science/article/pii/S1573441220300040},
author = {Keisuke Hirano and Jack R. Porter},
keywords = {Statistical decision theory, Treatment assignment rules, Limit experiments, Risk},
abstract = {Statistical decision rules map data into actions. Point estimators, inference procedures, and forecasting methods can be viewed as statistical decision rules. However, other types of rules are possible, such as rules for assigning individuals to treatments based on covariates, and methods for designing auctions. We discuss heuristics for constructing statistical decision rules, and survey results that characterize the properties of various classes of decision rules. Particular attention is paid to developing large-sample approximations to the distributions and associated risk properties of statistical decision rules.}
}

@Article{KW20,
  author  = {Toru Kitagawa and Guanyi Wang},
  title   = {Who should get vaccinated? Individualized allocation of vaccines over SIR network},
  journal = {Journal of Econometrics},
volume = {232},
number = {1},
pages = {109-131},
year = {2023}
}

@article{manski2007admissible,
  title={Admissible treatment rules for a risk-averse planner with experimental data on an innovation},
  author={Manski, Charles F. and Tetenov, Aleksey},
  journal={Journal of Statistical Planning and Inference},
  volume={137},
  number={6},
  pages={1998--2010},
  year={2007},
  publisher={Elsevier}
}

@techreport{schlag2006eleven,
  title={{ELEVEN} - Tests needed for a Recommendation},
  author={Schlag, Karl H},
  year={2006},
  institution={European University Institute Working Paper, ECO No. 2006/2},
  note={\url{https://cadmus.eui.eu/bitstream/handle/1814/3937/ECO2006-2.pdf}}
}

@book{lehmann2006theory,
  title={Theory of point estimation},
  author={Lehmann, Erich L and Casella, George},
  year={1998},
  edition = {Second},
  publisher={Springer Science \& Business Media}
}

@article{ishihara2021,
  title={Evidence Aggregation for Treatment Choice},
  author={Takuya Ishihara and Toru Kitagawa},
    year={2021},
   journal={arXiv preprint arXiv:2108.06473}
}

@article{yata2021,
  title={Optimal Decision Rules Under Partial Identification},
  author={Kohei Yata},
  year={2021},
  journal={arXiv preprint arXiv:2111.04926}
}

@article{christensen2022optimal,
  title={Optimal Discrete Decisions when Payoffs are Partially Identified},
  author={Christensen, Timothy and Moon, Hyungsik Roger and Schorfheide, Frank},
  journal={arXiv preprint arXiv:2204.11748},
  year={2022}
}

@article{adjaho2022externally,
  title={Externally Valid Treatment Choice},
  author={Adjaho, Christopher and Christensen, Timothy},
  journal={arXiv preprint arXiv:2205.05561},
  year={2022}
}

@article{karlin1956theory,
  title={The theory of decision procedures for distributions with monotone likelihood ratio},
  author={Karlin, Samuel and Rubin, Herman},
  journal={The Annals of Mathematical Statistics},
  pages={272--299},
  year={1956},
  publisher={JSTOR}
}

@book{manski2005social,
  title={Social choice with partial knowledge of treatment response},
  author={Charles F. Manski},
  year={2005},
  publisher={Princeton University Press}
}

@book{manski2007identification,
  title={Identification for prediction and decision},
  author={Charles F. Manski},
  year={2007},
  publisher={Harvard University Press}
}

@article{manski2022identification,
title={Identification and Statistical Decision Theory},
journal={Econometric Theory},
author={Charles F. Manski},
year={2024},
pages={in press}}

@article{ida2022choosing,
  title={Choosing Who Chooses: Selection-Driven Targeting in Energy Rebate Programs},
  author={Ida, Takanori and Ishihara, Takunori and Ito, Koichiro and Kido, Daido and Kitagawa, Toru and Sakaguchi, Shosei and Sasaki, Shusaku},
  year={2025},
  journal={Econometrica},
pages={forthcoming}
}

@article{d2021policy,
  title={Policy Learning Under Ambiguity},
  author={D'Adamo, Riccardo},
  journal={arXiv preprint arXiv:2111.10904},
  year={2021}
}

@article{kallus2018confounding,
  title={Confounding-robust policy improvement},
  author={Kallus, Nathan and Zhou, Angela},
  journal={Advances in neural information processing systems},
  volume={31},
  year={2018}
}

@article{stoye2012new,
  title={New perspectives on statistical decisions under ambiguity},
  author={Stoye, J{\"o}rg},
  journal={Annu. Rev. Econ.},
  volume={4},
  number={1},
  pages={257--282},
  year={2012},
  publisher={Annual Reviews}
}

@article{MoonSchorfheide,
author = {Moon, Hyungsik Roger and Schorfheide, Frank},
title = {Bayesian and Frequentist Inference in Partially Identified Models},
journal = {Econometrica},
volume = {80},
number = {2},
pages = {755-782},
keywords = {Bayesian inference, frequentist confidence sets, partially identified models, entry games},
doi = {https://doi.org/10.3982/ECTA8360},
url = {https://onlinelibrary.wiley.com/doi/abs/10.3982/ECTA8360},
eprint = {https://onlinelibrary.wiley.com/doi/pdf/10.3982/ECTA8360},
abstract = {A large-sample approximation of the posterior distribution of partially identified structural parameters is derived for models that can be indexed by an identifiable finite-dimensional reduced-form parameter vector. It is used to analyze the differences between Bayesian credible sets and frequentist confidence sets. We define a plug-in estimator of the identified set and show that asymptotically Bayesian highest-posterior-density sets exclude parts of the estimated identified set, whereas it is well known that frequentist confidence sets extend beyond the boundaries of the estimated identified set. We recommend reporting estimates of the identified set and information about the conditional prior along with Bayesian credible sets. A numerical illustration for a two-player entry game is provided.},
year = {2012}
}

@article{Poirier,
 ISSN = {02664666, 14694360},
 URL = {http://www.jstor.org/stable/3533214},
 abstract = {A Bayesian analysis of a nonidentified model is always possible if a proper prior on all the parameters is specified. There is, however, no Bayesian free lunch. The "price" is that there exist quantities about which the data are uninformative, i.e., their marginal prior and posterior distributions are identical. In the case of improper priors the analysis is problematic--resulting posteriors can be improper. This study investigates both proper and improper cases through a series of examples.},
 author = {Dale J. Poirier},
 journal = {Econometric Theory},
 number = {4},
 pages = {483--509},
 publisher = {Cambridge University Press},
 title = {Revising Beliefs in Nonidentified Models},
 urldate = {2022-12-23},
 volume = {14},
 year = {1998}
}

@article{kitagawa2022treatment,
  title={Treatment Choice with Nonlinear Regret},
  author={Kitagawa, Toru and Lee, Sokbae and Qiu, Chen},
  journal={arXiv preprint arXiv:2205.08586},
  year={2022}
}

@article{Wald45,
 ISSN = {0003486X},
 URL = {http://www.jstor.org/stable/1969022},
 author = {Abraham Wald},
 journal = {Annals of Mathematics},
 number = {2},
 pages = {265--280},
 publisher = {Annals of Mathematics},
 title = {Statistical Decision Functions Which Minimize the Maximum Risk},
 urldate = {2023-11-29},
 volume = {46},
 year = {1945}
}

@article{canner,
 ISSN = {01621459},
 URL = {http://www.jstor.org/stable/2283593},
 abstract = {A decision theoretic approach to the design of a clinical trial is considered for the situation in which a total of N patients with a disease receive one of two treatments and the responses to the treatments are dichotomous. Two costs are considered: The cost of treating a patient with the inferior treatment and the cost of conducting the trial. Minimax and Bayes procedures are used to determine the optimum size of a fixed sample trial. Bayes solutions for the optimum sample size are given for a variety of beta prior distributions and various values of N. Minimax solutions are given for a variety of regions over which the minimaxing was performed, and for various values of N. The optimum sample sizes are found to be asymptotically proportional to N1/2 using the Bayes procedure and N2/3 using the minimax procedure. The consequences of erring in specifying the value of N are explored.},
 author = {Paul L. Canner},
 journal = {Journal of the American Statistical Association},
 number = {329},
 pages = {293--306},
 publisher = {[American Statistical Association, Taylor & Francis, Ltd.]},
 title = {Selecting One of Two Treatments when the Responses are Dichotomous},
 urldate = {2022-12-20},
 volume = {65},
 year = {1970}
}

@article{GKR,
author={Giacomini, R. and T. Kitagawa and M. Read},
title={Robust Bayesian Analysis for Econometrics},
journal={CEPR Discussion Paper No. 16488},
year={2021}}

@article{GiacominiKitagawa,
author = {Giacomini, Raffaella and Kitagawa, Toru},
title = {Robust Bayesian Inference for Set-Identified Models},
journal = {Econometrica},
volume = {89},
number = {4},
pages = {1519-1556},
keywords = {Multiple priors, identified set, credible region, consistency, asymptotic coverage, identifying restrictions, impulse-response analysis},
doi = {https://doi.org/10.3982/ECTA16773},
url = {https://onlinelibrary.wiley.com/doi/abs/10.3982/ECTA16773},
eprint = {https://onlinelibrary.wiley.com/doi/pdf/10.3982/ECTA16773},
abstract = {This paper reconciles the asymptotic disagreement between Bayesian and frequentist inference in set-identified models by adopting a multiple-prior (robust) Bayesian approach. We propose new tools for Bayesian inference in set-identified models and show that they have a well-defined posterior interpretation in finite samples and are asymptotically valid from the frequentist perspective. The main idea is to construct a prior class that removes the source of the disagreement: the need to specify an unrevisable prior for the structural parameter given the reduced-form parameter. The corresponding class of posteriors can be summarized by reporting the ‘posterior lower and upper probabilities’ of a given event and/or the ‘set of posterior means’ and the associated ‘robust credible region’. We show that the set of posterior means is a consistent estimator of the true identified set and the robust credible region has the correct frequentist asymptotic coverage for the true identified set if it is convex. Otherwise, the method provides posterior inference about the convex hull of the identified set. For impulse-response analysis in set-identified Structural Vector Autoregressions, the new tools can be used to overcome or quantify the sensitivity of standard Bayesian inference to the choice of an unrevisable prior.},
year = {2021}
}

@book{Ferguson67,
	author = {Ferguson, T.S.},
	publisher = {Academic Press New York},
	title = {Mathematical Statistics: A Decision Theoretic Approach},
	volume = {7},
	year = {1967}}

@article{ben2021safe,
  title={Safe policy learning through extrapolation: Application to pre-trial risk assessment},
  author={Ben-Michael, Eli and Greiner, D James and Imai, Kosuke and Jiang, Zhichao},
  journal={arXiv preprint arXiv:2109.11679},
  year={2021}
}

@article{lei2023policy,
  title={Policy Learning under Biased Sample Selection},
  author={Lei, Lihua and Sahoo, Roshni and Wager, Stefan},
  journal={arXiv preprint arXiv:2304.11735},
  year={2023}
}

@article{ben2022policy,
  title={Policy learning with asymmetric utilities},
  author={Ben-Michael, Eli and Imai, Kosuke and Jiang, Zhichao},
  journal={arXiv preprint arXiv:2206.10479},
  year={2022}
}

@article{kido2022distributionally,
  title={Distributionally Robust Policy Learning with Wasserstein Distance},
  author={Kido, Daido},
  journal={arXiv preprint arXiv:2205.04637},
  year={2022}
}

@incollection{chamberlain2012,
    author = {Chamberlain, Gary},
    isbn = {9780199559084},
    title = "{Bayesian Aspects of Treatment Choice}",
    booktitle = "{The Oxford Handbook of Bayesian Econometrics}",
    publisher = {Oxford University Press},
    year = {2011},
    month = {09},
    abstract = "{This article discusses the Bayesian approach to decision theory. It focuses on the case of an individual deciding between treatments. It deals with the role of information that is available about other individuals through a propensity score. It also shows the reason for absence of propensity score in the likelihood function but its appearance in the prior. A prior distribution leads to a closed-form expression for the decision rule. The parametric model plays the role of a prior distribution that can be dominated by the data. The next section examines the role of the propensity score in a random effects model with normal distributions for the outcomes and the random effects. It takes up the extension to the case of treatment selection based on unobservables. The main aim of this article is to estimate an average treatment effect for a particular covariate cell.}",
    doi = {10.1093/oxfordhb/9780199559084.013.0002},
    url = {https://doi.org/10.1093/oxfordhb/9780199559084.013.0002},
    eprint = {https://academic.oup.com/book/0/chapter/334661705/chapter-ag-pdf/44445461/book\_38595\_section\_334661705.ag.pdf},
}

@book{berger1985statistical,
	author = {Berger, J.O.},
	publisher = {Springer},
	title = {Statistical Decision Theory and Bayesian Analysis},
	year = {1985}}

@article{stoye2011axioms,
  title={Axioms for minimax regret choice correspondences},
  author={Stoye, J{\"o}rg},
  journal={Journal of Economic Theory},
  volume={146},
  number={6},
  pages={2226--2251},
  year={2011},
  publisher={Elsevier}
}

@article{MQS,
      title={Decision Theory for Treatment Choice Problems with Partial Identification}, 
      author={Montiel Olea, José Luis and Qiu, Chen and Stoye, J{\"o}rg},
      journal={Review of Economic Studies},
      year={2025},
      pages={forthcoming},
      archivePrefix={arXiv},
      primaryClass={econ.EM}
}

@article{song, title={Point Decisions for Interval-Identified Parameters}, volume={30}, DOI={10.1017/S0266466613000327}, number={2}, journal={Econometric Theory}, publisher={Cambridge University Press}, author={Song, Kyungchul}, year={2014}, pages={334-356}}

@article{tetenov2009measuring,
  title={Measuring precision of statistical inference on partially identified parameters},
  author={Tetenov, Aleksey},
  journal={Discuss. Pap., Coll. Carlo Alberto, Torino},
  year={2012}
}

@article{kitagawa2023treatment,
  title={Treatment choice, mean square regret and partial identification},
  author={Kitagawa, Toru and Lee, Sokbae and Qiu, Chen},
  journal={The Japanese Economic Review},
  volume={74},
  number={4},
  pages={573--602},
  year={2023},
  publisher={Springer}
}

@article{brock2006profiling,
  title={Profiling problems with partially identified structure},
  author={Brock, William A},
  journal={The Economic Journal},
  volume={116},
  number={515},
  pages={F427--F440},
  year={2006},
  publisher={Oxford University Press Oxford, UK}
}

@article{manski2000identification,
  title={Identification problems and decisions under ambiguity: Empirical analysis of treatment response and normative analysis of treatment choice},
  author={Manski, Charles F},
  journal={Journal of Econometrics},
  volume={95},
  number={2},
  pages={415--442},
  year={2000},
  publisher={Elsevier}
}

@article{kido2023locally,
  title={Locally Asymptotically Minimax Statistical Treatment Rules Under Partial Identification},
  author={Kido, Daido},
  journal={arXiv preprint arXiv:2311.08958},
  year={2023}
}

@article{purify,
title={The Dvoretzky-Wald-Wolfowitz Theorem and Purification in Atomless Finite-action Games},
author={Khan, M. Ali and Rath, Kali P. and Sun, Yeneng},
year={2006},
journal={International Journal of Game Theory},
volume={34},
issue={1},
pages={91-104}}

@article{DWW,
author = {A. Dvoretzky and A. Wald and J. Wolfowitz},
title = {{Elimination of Randomization in Certain Statistical Decision Procedures and Zero-Sum Two-Person Games}},
volume = {22},
journal = {The Annals of Mathematical Statistics},
number = {1},
publisher = {Institute of Mathematical Statistics},
pages = {1 -- 21},
year = {1951},
doi = {10.1214/aoms/1177729689},
URL = {https://doi.org/10.1214/aoms/1177729689}
}

@article{ChenGug,
  title={A note on minimax regret rules with multiple treatments in finite samples},
  author={Chen, Haoning and Guggenberger, Patrik},
  year={2025},
  journal={Econometric Theory},
pages={forthcoming}
}

@techreport{GugQuantile,
  title={Minimax regret treatment rules with finite samples when a quantile is the object of interest},
  author={Guggenberger, Patrik and Mehta, Nihal and Pavlov, Nikita},
  year={2024},
  institution={The Pennsylvania State University}
}

@article{ManskiTetenovJER,
title={Statistical decision theory respecting stochastic dominance},
author={Manski, Charles F and Tetenov, Aleksey},
year={2023},
journal={The Japanese Economic Review},
volume={74},
pages={447-469}}

@article{kitagawa2012estimation,
  title={Estimation and inference for set-identified parameters using posterior lower probability},
  author={Kitagawa, Toru},
  journal={Manuscript, UCL},
  year={2012}
}

@article{betro1992conditional,
  title={Conditional!`-minimax actions under convex losses},
  author={Betr{\`o}, Bruno and Ruggeri, Fabrizio},
  journal={Communications in Statistics-Theory and Methods},
  volume={21},
  number={4},
  pages={1051--1066},
  year={1992},
  publisher={Taylor \& Francis}
}

@article{dasgupta1989frequentist,
  title={Frequentist behavior of robust Bayes estimates of normal means},
  author={DasGupta, Anirban and Studden, William J},
  journal={Statistics \& Risk Modeling},
  volume={7},
  number={4},
  pages={333--362},
  year={1989},
  publisher={OLDENBOURG WISSENSCHAFTSVERLAG}
}

@article{vidakovic2000gamma,
  title={$\Gamma$-minimax: a paradigm for conservative robust Bayesians},
  author={Vidakovic, Brani},
  journal={Robust bayesian analysis},
  pages={241--259},
  year={2000},
  publisher={Springer}
}

@article{fernandez2024epsilon,
  title={$\epsilon$-Minimax Solutions of Statistical Decision Problems},
  author={Aradillas Fern{\'a}ndez, Andr{\'e}s and Blanchet, Jos{\'e} and Montiel Olea, Jos{\'e} Luis and Qiu, Chen and Stoye, J{\"o}rg and Tan, Lezhi},
    year={2025},
      eprint={2509.08107},
journal={arXiv preprint arXiv:2509.08107},
      archivePrefix={arXiv},
      primaryClass={econ.EM},
      url={https://arxiv.org/abs/2509.08107}, 
}

@article{amarante2019recursive,
  title={Recursive maxmin preferences and rectangular priors: a simple proof},
  author={Amarante, Massimiliano and Siniscalchi, Marciano},
  journal={Economic Theory Bulletin},
  volume={7},
  number={1},
  pages={125--129},
  year={2019},
  publisher={Springer}
}

@article{epstein2003recursive,
  title={Recursive multiple-priors},
  author={Epstein, Larry G and Schneider, Martin},
  journal={Journal of Economic Theory},
  volume={113},
  number={1},
  pages={1--31},
  year={2003},
  publisher={Elsevier}
}

@article{wakai2007note,
  title={A note on recursive multiple-priors},
  author={Wakai, Katsutoshi},
  journal={Journal of Economic Theory},
  volume={135},
  number={1},
  pages={567--571},
  year={2007},
  publisher={New York: Academic Press.}
}

@article{hayashi2008regret,
  title={Regret aversion and opportunity dependence},
  author={Hayashi, Takashi},
  journal={Journal of economic theory},
  volume={139},
  number={1},
  pages={242--268},
  year={2008},
  publisher={Elsevier}
}

\newpage
\appendix

\section{Proofs of Main Results}

Proofs will frequently claim and verify equilibria of the fictitious game against Nature. Recall that, from basic facts about zero-sum games, if a decision rule \emph{uniquely} best responds to some least favorable prior, it must be the \emph{unique} equilibrium rule.

\subsection{Proof of Theorem \ref{thm:1}}

\subsubsection{Statement (i)}

The expected regret of decision rule $d$ is 
\[
R(d,\theta)=U(\theta)\left(\mathbf{1}\{U(\theta)\geq0\}-\mathbb{E}_{m(\theta)}[d(Y)]\right), \quad \theta\in\Theta.
\]
Recall that, by the definition of $\Gamma$, we have $\pi_{\mu}\sim\text{unif}(\{-\bar{\mu},\bar{\mu}\})$ and, given $\mu=\pm\priorlocation$, $\pi_{\theta\mid\mu}(U(\theta)\in I(\mu))=1$.
Therefore, the Bayes expected regret of $d$ under prior $\pi\in\Gamma$ equals
\begin{align*}
r(d,\pi) & =\frac{1}{2}\cdot\left[\int_{\tilde{\theta}\in\Theta}U(\tilde{\theta})\left(\mathbf{1}\{U(\tilde{\theta})\geq0\}-\mathbb{E}_{\bar{\mu}}[d]\right)d\pi_{\theta\mid\bar{\mu}}(\tilde{\theta})\right]\\
 & +\frac{1}{2}\cdot\left[\int_{\tilde{\theta}\in\Theta}U(\tilde{\theta})\left(\mathbf{1}\{U(\tilde{\theta})\geq0\}-\mathbb{E}_{-\bar{\mu}}[d]\right)d\pi_{\theta\mid-\bar{\mu}}(\tilde{\theta})\right],
\end{align*}
where $\mathbb{E}_{\bar{\mu}}[d]:=\mathbb{E}_{\bar{\mu}}[d(Y)]$,  $\mathbb{E}_{-\bar{\mu}}[d]:=\mathbb{E}_{-\bar{\mu}}[d(Y)]$. One can easily solve for
\begin{align*}
\sup_{\pi\in\Gamma}r(d,\pi) & =\frac{1}{2}\max\left\{ \overline{I}(\bar{\mu})(1-\mathbb{E}_{\bar{\mu}}[d]),-\underline{I}(\bar{\mu})\mathbb{E}_{\bar{\mu}}[d]\right\} \\
 & +\frac{1}{2}\max\left\{ \overline{I}(-\bar{\mu})(1-\mathbb{E}_{-\bar{\mu}}[d]),-\underline{I}(-\bar{\mu})\mathbb{E}_{-\bar{\mu}}[d]\right\}. 
\end{align*}
The least favorable prior $\pi^*$ equals 
\begin{equation}
\pi^{*}=\left\{ \pi^{*}_{\mu}\sim\text{unif}(\{-\bar{\mu},\bar{\mu}\}),\pi^{*}_{\theta\mid\bar{\mu}}(U(\theta)=\overline{I}(\bar{\mu}))=1,\pi^{*}_{\theta\mid-\bar{\mu}}(U(\theta)=\underline{I}(-\bar{\mu}))=1\right\}.\label{pf:thm1.1}
\end{equation}
Lemma \ref{lemma:a1-1} shows that the unique Bayes rule against $\pi^*$ is 
\[
d_{w,0}^{*}=\mathbf{1}\{w^{\top}Y\geq0\},\quad\text{where }w=\Sigma^{-1}\bar{\mu}.
\]
Lemma \ref{lemma:a1-2} establishes that 
$\sup_{\pi\in\Gamma}r(d_{w,0}^{*},\pi)=r(d_{w,0}^{*},\pi^{*})$ 
as long as \eqref{eq:condition-1} holds true. This establishes the claim.

\subsubsection{Statement (ii)}

\paragraph{Step 1}

We show that when \eqref{eq:condition-large} holds, any rule $d\in\mathcal{D}_{n}$
is $\Gamma$-MMR optimal if 
\begin{equation}
\mathbb{E}_{\bar{\mu}}[d]=\frac{\overline{I}(\bar{\mu})}{\overline{I}(\bar{\mu})-\underline{I}(\bar{\mu})},\quad\mathbb{E}_{-\bar{\mu}}[d]=\frac{-\underline{I}(\bar{\mu})}{\overline{I}(\bar{\mu})-\underline{I}(\bar{\mu})}.\label{pf:thm.1.2}
\end{equation}
The least favorable prior $\pi^{*}$ is such that
 $\mu\sim\text{unif}(\{-\bar{\mu},\bar{\mu}\})$,
and when $\mu=\bar{\mu}$, 
\[
U(\theta)=\begin{cases}
\overline{I}(\bar{\mu}), & \text{with probability (w.p.) }p_{1},\\
\underline{I}(\bar{\mu}), & \text{w.p. }1-p_{1},
\end{cases}
\]
where $p_{1}>0$ is such that $p_{1}\overline{I}(\bar{\mu})+(1-p_{1})\underline{I}(\bar{\mu})=0$,
and when $\mu=-\bar{\mu}$, 
\[
U(\theta)=\begin{cases}
\overline{I}(-\bar{\mu}), & \text{w.p. }p_{2},\\
\underline{I}(-\bar{\mu}), & \text{w.p. }1-p_{2},
\end{cases}
\]
where $p_{2}$ is such that $p_{2}\overline{I}(-\bar{\mu})+(1-p_{2})\underline{I}(-\bar{\mu})=0$.
Lemma \ref{lemma:a1-2-1} establishes that any decision rule is Bayes against this prior (intuitively because the data are uninformative), and Lemma \ref{lemma:a1-2-2}
further shows that, for any rule $d$ that satisfies \eqref{pf:thm.1.2}, $\sup_{\pi\in\Gamma}r(d,\pi)=r(d,\pi^*)$ obtains. This establishes the claim.

\paragraph{Step 2}

We next verify the ``only if'' statement. Recall that the least favorable prior $\pi^*$ must best respond to any optimal decision rule, i.e., for any MMR optimal
rule $d$, $\pi^*$ must solve 
\begin{align*}
\sup_{\pi\in\Gamma}r(d,\pi) & =\frac{1}{2}\max\left\{ \overline{I}(\bar{\mu})(1-\mathbb{E}_{\bar{\mu}}[d]),-\underline{I}(\bar{\mu})\mathbb{E}_{\bar{\mu}}[d]\right\} \\
 & +\frac{1}{2}\max\left\{ \overline{I}(-\bar{\mu})(1-\mathbb{E}_{-\bar{\mu}}[d]),-\underline{I}(-\bar{\mu})\mathbb{E}_{-\bar{\mu}}[d]\right\}.
\end{align*}
This, however, requires that Nature is indifferent between $\overline{I}(\bar{\mu})$ and $\underline{I}(\bar{\mu})$ when $\mu=\bar{\mu}$ and similarly between $\overline{I}(-\bar{\mu})$ and $\underline{I}(-\bar{\mu})$ when $\mu=-\bar{\mu}$.
That is, we must have 
\[
\overline{I}(\bar{\mu})(1-\mathbb{E}_{\bar{\mu}}[d])=-\underline{I}(\bar{\mu})\mathbb{E}_{\bar{\mu}}[d],\quad\overline{I}(-\bar{\mu})(1-\mathbb{E}_{-\bar{\mu}}[d])=-\underline{I}(-\bar{\mu})\mathbb{E}_{-\bar{\mu}}[d],
\]
which is equivalent to \eqref{pf:thm.1.2}. 

\paragraph{Step 3}

We show a rule of form $\mathbf{1}\left\{ w^{\top}Y\geq c\right\} $
for some $c\in\mathbb{R}$ cannot be $\Gamma$-MMR optimal when \eqref{eq:condition-large}
holds. Note $w=\Sigma^{-1}\bar{\mu}$. Thus, 
\[
\mathbb{E}_{\mu}[\mathbf{1}\{w^{\top}Y\geq c\}]=1-\Phi\left(\frac{c-w^{\top}\mu}{\sqrt{w^\top \Sigma w}}\right).
\]
Suppose by contradiction that a rule $\mathbf{1}\left\{ w^{\top}Y\geq c\right\} $ is optimal, then by statement (ii) we have
\begin{align}
\qquad & 1-\Phi\left(\frac{c-w^\top \bar{\mu}}{\sqrt{w^\top \Sigma w}}\right)=\frac{\overline{I}(\bar{\mu})}{\overline{I}(\bar{\mu})-\underline{I}(\bar{\mu})}\label{pf:thm1.3.1}\\
\qquad & 1-\Phi\left(\frac{c+w^\top \bar{\mu}}{\sqrt{w^\top \Sigma w}}\right)=\frac{-\underline{I}(\bar{\mu})}{\overline{I}(\bar{\mu})-\underline{I}(\bar{\mu})}.\label{pf:thm1.3.2}
\end{align}
By symmetry, \eqref{pf:thm1.3.1} and \eqref{pf:thm1.3.2} can both hold only if $c=0$. But \eqref{eq:condition-large} then implies that
\[
\Phi\left(\frac{w^\top \bar{\mu}}{\sqrt{w^\top \Sigma w}}\right)=\Phi(\wnorm)>\frac{\overline{I}(\bar{\mu})}{\overline{I}(\bar{\mu})-\underline{I}(\bar{\mu})},
\]
so that \eqref{pf:thm1.3.1} cannot in fact hold when $c=0$, a contradiction.

\paragraph{Step 4}

We verify that $d_{\text{RT}}^*$, $d_{\text{linear}}^*$ and $d_{\text{step}}^*$ are all $\Gamma$-MMR optimal. Due to symmetry,
it suffices to show that 
\[
\mathbb{E}_{\priorlocation}[d_{\text{RT}}^{*}]=\mathbb{E}_{\priorlocation}[d_{\text{linear}}^{*}]=\mathbb{E}_{\priorlocation}[d_{\text{step}}^{*}]=\frac{\overline{I}(\bar{\mu})}{\overline{I}(\bar{\mu})-\underline{I}(\bar{\mu})}.
\]
To see $\mathbb{E}_{\priorlocation}[d_{\text{RT}}^*]=\frac{\overline{I}(\bar{\mu})}{\overline{I}(\bar{\mu})-\underline{I}(\bar{\mu})}$, consider the random threshold rule $\mathbf{1}\left\{ w^\top Y\geq\xi\right\}$, where $\xi\sim N(0,\tilde{\sigma}^{2})$ is independent of $Y$. As $w^\top Y-\xi\sim N\left(\wnorm ^2,\wnorm^2+\tilde{\sigma}^2\right)$, algebra shows 
\[
\mathbb{E}_{\priorlocation}\left[\mathbf{1}\left\{ w^\top Y\geq\xi\right\} \right]=\frac{\overline{I}(\bar{\mu})}{\overline{I}(\bar{\mu})-\underline{I}(\bar{\mu})}
\]
as required. To see $\mathbb{E}_{\priorlocation}[d_{\text{linear}}^*]=\frac{\overline{I}(\bar{\mu})}{\overline{I}(\bar{\mu})-\underline{I}(\bar{\mu})}$,
let $\rho>0$ and 
\[
d_{\text{linear},\rho}:=\begin{cases}
0, & w^{\top}Y<-\rho,\\
\frac{w^{\top}Y+\rho^{*}}{2\rho^{*}}, & -\rho\leq w^{\top}Y\leq\rho,\\
1, & w^{\top}Y>\rho.
\end{cases}
\]
Applying Lemma B.7 in \citet{MQS} and integration by parts yield
\[
f(\rho):=\mathbb{E}_{\priorlocation}[d_{\text{linear},\rho}]=  1-\int_{0}^{1}\Phi\left(\frac{2\rho x-\rho-\wnorm ^{2}}{\wnorm }\right)dx
=  1-\frac{\wnorm }{2\rho}\int_{\frac{-\rho-\wnorm ^{2}}{\wnorm }}^{\frac{\rho-\wnorm ^{2}}{\wnorm }}\Phi\left(t\right)dt.
\]
Note that $\lim_{\rho\downarrow0}f(\rho)=1-\Phi(-\wnorm) = \Phi(\wnorm)>\frac{\overline{I}(\bar{\mu})}{\overline{I}(\bar{\mu})-\underline{I}(\bar{\mu})}$,
while L'Hopital's rule implies 
\begin{align*}
\lim_{\rho\rightarrow\infty}f(\rho)  =1-\frac{1}{2}\lim_{\rho\rightarrow\infty}\left\{\Phi\left(\frac{\rho-\wnorm ^{2}}{\wnorm }\right)-\Phi\left(\frac{-\rho-\wnorm ^{2}}{\wnorm }\right)\right\}
  =\frac{1}{2}<\frac{\overline{I}(\bar{\mu})}{\overline{I}(\bar{\mu})-\underline{I}(\bar{\mu})},
\end{align*}
where the last last inequality follows from $\overline{I}(\bar{\mu})+\underline{I}(\bar{\mu})>0$
and $1>\Phi\left(\wnorm \right)>\frac{\overline{I}(\bar{\mu})}{\overline{I}(\bar{\mu})-\underline{I}(\bar{\mu})}$.
Furthermore, by applying the chain rule, $\frac{\partial f(\rho)}{\partial\rho}<0$.
Therefore, $f(\cdotp)$ is strictly decreasing in $(0,\infty)$. We
conclude that there must exist some unique $\rho^{*}>0$ such that
$f(\rho^{*})=\frac{\overline{I}(\bar{\mu})}{\overline{I}(\bar{\mu})-\underline{I}(\bar{\mu})}$,
or equivalently, $\int_{0}^{1}\Phi\left(\frac{2\rho^{*}x-\rho^{*}-\wnorm ^{2}}{\wnorm }\right)dx=\frac{-\overline{I}(\bar{\mu})}{\overline{I}(\bar{\mu})-\underline{I}(\bar{\mu})}$,
which implies $\mathbb{E}_{\priorlocation}[d_{\text{linear}}^{*}]=\frac{\overline{I}(\bar{\mu})}{\overline{I}(\bar{\mu})-\underline{I}(\bar{\mu})}$.

Finally, we verify that $\mathbb{E}_{\priorlocation}[d_{\text{step}}^{*}]=\frac{\overline{I}(\bar{\mu})}{\overline{I}(\bar{\mu})-\underline{I}(\bar{\mu})}$.
For any $\beta\in\left(0,\frac{1}{2}\right)$, consider the following
step function rule: 
\[
d_{\text{step},\beta}:=\left(\frac{1}{2}-\beta\right)\mathbf{1}\left\{w^{\top}Y<0\right\}+\left(\frac{1}{2}+\beta\right)\mathbf{1}\left\{w^{\top}Y\geq0\right\}.
\]
One then has 
\begin{align*}
\mathbb{E}_{\priorlocation}[d_{\text{step},\beta}]= & \left(\frac{1}{2}-\beta\right)\Phi\left(-\wnorm \right)+\left(\frac{1}{2}+\beta\right)\left(1-\Phi\left(-\wnorm \right)\right)\\
= & \frac{1}{2}+\beta\left(2\Phi\left(\wnorm \right)-1\right).
\end{align*}
Setting $\mathbb{E}_{\priorlocation}[d_{\text{step},\beta^{*}}]=\frac{\overline{I}(\bar{\mu})}{\overline{I}(\bar{\mu})-\underline{I}(\bar{\mu})}$
yields $\beta^{*}=\frac{\frac{\overline{I}(\bar{\mu})}{\overline{I}(\bar{\mu})-\underline{I}(\bar{\mu})}-\frac{1}{2}}{2\Phi\left(\wnorm \right)-1}$.
As $\overline{I}(\bar{\mu})+\underline{I}(\bar{\mu})>0$
and (\ref{eq:condition-large}) holds, $\frac{\overline{I}(\bar{\mu})}{\overline{I}(\bar{\mu})-\underline{I}(\bar{\mu})}>\frac{1}{2}$
and therefore $\beta^{*}>0$. Furthermore, $\beta^{*}<\frac{1}{2}$
holds due to (\ref{eq:condition-large}) as well. Since $d_{\text{step}}^{*}=d_{\text{step},\beta^{*}}$,
we conclude that $\mathbb{E}_{\priorlocation}[d_{\text{step}}^{*}]=\frac{\overline{I}(\bar{\mu})}{\overline{I}(\bar{\mu})-\underline{I}(\bar{\mu})}$.

\subsubsection{Statement (iii)}

For any rule of form $d_{w,c}(Y):=\mathbf{1}\{w^{\top}Y\geq c\}$
where $w=\Sigma^{-1}\bar{\mu}$ and $c\in\mathbb{R}$, we may calculate
\[
\mathbb{E}_{\bar{\mu}}[d_{w,c}(Y)]=1-\Phi\left(\frac{c-w^{\top}\bar{\mu}}{\sqrt{w^{\top}\Sigma w}}\right)
\]
and (due to Lemma \ref{lemma:b1}, recalling $\wnorm ^{2}=w^{\top}\Sigma w=\bar{\mu}^{\top}\Sigma^{-1}\bar{\mu}$
) 
\begin{align*}
g(c):=\sup_{\pi\in\Gamma} r(d_{w,c},\pi) & =\frac{1}{2}\max\left\{ \overline{I}(\bar{\mu})\Phi\left(\frac{-\wnorm ^{2}+c}{\wnorm }\right),-\underline{I}(\bar{\mu})\Phi\left(\frac{\wnorm ^{2}-c}{\wnorm }\right)\right\} \\
 & +\frac{1}{2}\max\left\{ -\underline{I}(\bar{\mu})\Phi\left(\frac{\wnorm ^{2}+c}{\wnorm }\right),\overline{I}(\bar{\mu})\Phi\left(\frac{-\wnorm ^{2}-c}{\wnorm }\right)\right\} .
\end{align*}
Lemma \ref{lemma:a1-3-1} shows that $g$ is decreasing on $[0,c^*]$ and increasing on $[c^{*},\infty)$, implying that the optimal threshold rule is $d_{w,c^{*}}$ when $c\in[0,\infty)$. By symmetry, $d_{w,-c^*}$ is optimal when $c\in(-\infty,0]$, and $d_{w,-c^*}$ and $d_{w,c^*}$ share the same worst-case expected regret.

\subsubsection{Statement (iv)}

In case (\ref{eq:condition-large}) and when $n>1$, there exists $\dot{\mu}\neq\mathbf{0}$ is such that $\dot{\mu}^{\top}\Sigma^{-1}\bar{\mu}=0$,
i.e., $\dot{\mu}$ is orthogonal to $\Sigma^{-1}\bar{\mu}$. For any $t\in\mathbb{R}$, let 
\begin{equation}
d_{w_{t},0}(Y)=\mathbf{1}\{w_{t}^{\top}Y\geq0\},\text{ where }w_{t}=\Sigma^{-1}(t\bar{\mu}+(1-t)\dot{\mu}).\label{pf:thm1.4.1}
\end{equation}
Lemma \ref{lemma:a1-4-1} shows that when $t=t^{*}$, $\mathbb{E}_{\bar{\mu}}[d_{w_{t^{*}},0}(Y)]=\frac{\overline{I}(\bar{\mu})}{\overline{I}(\bar{\mu})-\underline{I}(\bar{\mu})}$.
By symmetry, one then also has $\mathbb{E}_{-\bar{\mu}}[d_{w_{t^{*}},c}(Y)]=\frac{-\underline{I}(\priorlocation)}{\overline{I}(\priorlocation)-\underline{I}(\priorlocation)}$.
Applying statement (ii) yields that $d_{w_{t^{*}},0}$ is MMR optimal.

\subsection{Proof of Corollary \ref{coro:1}}

In Example \ref{ex:stoye}, $\overline{I}(\priorlocation)=\priorlocation+k,\underline{I}(\priorlocation)=\priorlocation-k$,
$\Sigma=\sigma^{2}$. The results of the corollary follow directly from Theorem \ref{thm:1}(i)-(iii).

\subsection{Proof of Theorem \ref{thm:2}}

Recall that the $\Gamma$-PER optimal rule solves 
\[
\inf_{a\in[0,1]}\sup_{\pi\in\Gamma}\int_{\tilde{\theta}\in\Theta}L(a,\tilde{\theta})d\pi_{\theta \mid Y}(\tilde{\theta}),\hspace{1em}\forall Y\in\mathbb{R}^{n},
\]
where $L(a,\theta)=U(\theta)(\mathbf{1}\{U(\theta)\geq0\}-a)$, and $\pi_{\theta\mid Y}$ is the posterior distribution of $\theta$ given $Y$. If randomization is not allowed, the $\Gamma$-PER optimal rule solves
\[
\inf_{a\in\{0,1\}}\sup_{\pi\in\Gamma}\int_{\tilde{\theta}\in\Theta}L(a,\tilde{\theta})d\pi_{\theta \mid Y}(\tilde{\theta}),\hspace{1em}\forall Y\in\mathbb{R}^{n}.
\]
Statement (i) then follows from Lemma \ref{lemma:a2-1}; statement (ii) follows from Lemma \ref{lemma:a2-2}.

\subsection{Proof of Corollary \ref{coro:2}}

Directly follows from Theorem \ref{thm:2}.

\subsection{Proof of Corollary \ref{coro:3}}

\paragraph*{(i)}

``If'': If $\underline{I}(\bar{\mu})\geq0$, then Theorem \ref{thm:1}(i) and Theorem \ref{thm:2}(i) apply and establish that $d_{w,0}^*$ is both the unique $\Gamma$-MMR and the unique $\Gamma$-PER optimal rule.

``Only if'': If $\underline{I}(\bar{\mu})<0$, then $d_{\text{PER}}^{*}$ is uniquely $\Gamma$-PER optimal by Theorem \ref{thm:2}(i). If condition \eqref{eq:condition-1} holds as well, then Theorem \ref{thm:1}(i) implies that $d_{w,0}^*$ is uniquely $\Gamma$-MMR optimal; hence, $\Gamma$-MMR and $\Gamma$-PER optimal rules disagree. If condition \eqref{eq:condition-large} applies, then, by Theorem \ref{thm:1}(ii), a $\Gamma$-MMR optimal rule must satisfy \eqref{eq:mmr.large.condition}-\eqref{eq:mmr.large.condition.cont}. But $d_{\text{PER}}^*$ can be written as 
\[
d_{\operatorname{PER}}^{*}=d_{\operatorname{step},\beta_{\text{PER}}}=\begin{cases}
\frac{1}{2}-\beta_{\text{PER}}, & \text{if }w^{\top}Y<0,\\
\frac{1}{2}+\beta_{\text{PER}}, & \text{if }w^{\top}Y\geq0,
\end{cases}
\]
where $\beta_{\text{PER}}=\frac{\overline{I}(\priorlocation)}{\overline{I}(\priorlocation)-\underline{I}(\priorlocation)}-\frac{1}{2}$. Step 4 for the proof of Theorem \ref{thm:1}(ii) shows that the only $\Gamma$-MMR optimal rule of form $d_{\text{step},\beta}$ is $d_{\text{step}}^*\neq d_{\text{PER}}^*$.

\paragraph*{(ii)}

``If'': When randomization is not allowed, Theorem \ref{thm:2}(ii) shows that $d_{w,0}^*$ is always $\Gamma$-PER optimal. Under condition \eqref{eq:discussion.1}, Theorem \ref{thm:1}(i) establishes that $d_{w,0}^*$ is $\Gamma$-MMR optimal as well.

``Only if'': Again, $d_{w,0}^*$ is always $\Gamma$-PER optimal when randomization is not allowed. If \eqref{eq:discussion.1} fails (i.e., \eqref{eq:condition-large} holds) and $n>1$, it follows by Theorem \ref{thm:1}(iv) that the randomized rule $d_{w_{t^*},0}^*$ is $\Gamma$-MMR optimal. If $n=1$, Theorem \ref{thm:1}(iii) implies that $d_{w,0}^*$ is not optimal even among linear threshold rules. Therefore, when \eqref{eq:condition-large} holds, $\Gamma$-MMR and $\Gamma$-PER optimal rules disagree with and without randomization.

\section{Technical Lemmas Supporting Ex-ante Analysis}

\begin{lemma}\label{lemma:a1-1} The Bayes rule supported by prior \eqref{pf:thm1.1} is $d_{w,0}^*$. \end{lemma}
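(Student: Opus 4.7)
The plan is to write the Bayes expected regret under $\pi^{*}$ in closed form, then minimize it pointwise in $Y$ via the posterior odds for $\mu=\bar\mu$ versus $\mu=-\bar\mu$.

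First, I would simplify the regret. Under $\pi^{*}$, with probability $\tfrac12$ we have $\mu=\bar\mu$ and $U(\theta)=\overline{I}(\bar\mu)$, and with probability $\tfrac12$ we have $\mu=-\bar\mu$ and $U(\theta)=\underline{I}(-\bar\mu)$. Centrosymmetry of $\Theta$ together with linearity of $U$ and $m$ (Assumption \ref{asm:1}) gives $\underline{I}(-\bar\mu)=-\overline{I}(\bar\mu)$, and the normalization $\overline{I}(\bar\mu)+\underline{I}(\bar\mu)>0$ ensures $\overline{I}(\bar\mu)>0$. Plugging these into the definition of $L$ and taking expectations yields
\begin{equation*}
r(d,\pi^{*})=\tfrac{1}{2}\overline{I}(\bar\mu)\bigl(1-\mathbb{E}_{\bar\mu}[d]\bigr)+\tfrac{1}{2}\overline{I}(\bar\mu)\mathbb{E}_{-\bar\mu}[d].
\end{equation*}

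Next, I would recast this as a pointwise Bayes problem. Writing the expectations in terms of the marginal density $p(Y)=\tfrac12 p(Y\mid\bar\mu)+\tfrac12 p(Y\mid -\bar\mu)$ and the posterior probabilities $q(Y):=\Pr(\mu=\bar\mu\mid Y)$ and $1-q(Y)$, the criterion becomes
\begin{equation*}
r(d,\pi^{*})=\overline{I}(\bar\mu)\int\bigl[(1-d(Y))q(Y)+d(Y)(1-q(Y))\bigr]p(Y)\,dY + \text{const},
\end{equation*}
so that the integrand is linear in $d(Y)\in[0,1]$ and pointwise minimized by $d(Y)=1$ when $q(Y)>1/2$ and $d(Y)=0$ when $q(Y)<1/2$.

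Finally, I would compute $q(Y)$ explicitly. Under the normal model \eqref{eq:normal_model}, the log-likelihood ratio between $\mu=\bar\mu$ and $\mu=-\bar\mu$ simplifies to $2\bar\mu^{\top}\Sigma^{-1}Y=2w^{\top}Y$, so with the uniform prior on $\{\pm\bar\mu\}$ we get $q(Y)>1/2\iff w^{\top}Y>0$. Hence $d^{*}_{w,0}(Y)=\mathbf{1}\{w^{\top}Y\geq 0\}$ is a Bayes rule, and it is the unique Bayes rule up to $\Sigma$-a.s.\ equivalence because the tie set $\{w^{\top}Y=0\}$ has Lebesgue measure zero (since $w\neq 0$ as $\bar\mu\neq 0$ and $\Sigma$ is positive definite). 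No substantive obstacle arises; the only point requiring care is invoking centrosymmetry and linearity to reduce $\underline{I}(-\bar\mu)$ to $-\overline{I}(\bar\mu)$, which collapses the criterion to a standard two-point Bayes classification problem.
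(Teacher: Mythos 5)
Your proposal is correct and follows essentially the same route as the paper: both reduce the Bayes problem under $\pi^{*}$ to a pointwise minimization of a posterior expected regret that is linear in the action, with slope determined by the likelihood comparison $f(Y\mid\bar\mu)$ versus $f(Y\mid-\bar\mu)$, and then identify the resulting likelihood-ratio threshold with $w^{\top}Y\geq 0$. Your explicit computation of the posterior probability $q(Y)$ and the measure-zero tie set is just a slightly more spelled-out version of the paper's argument, including the uniqueness claim.
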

\begin{proof}
Note $\overline{I}(\bar{\mu})>0$ due to  $\overline{I}(\priorlocation)+\underline{I}(\priorlocation)>0$ and $\underline{I}(-\bar{\mu})=-\overline{I}(\bar{\mu})$ by Lemma \ref{lemma:b1}. Given $\pi^{*}$, the Bayes optimal rule must solve the posterior problem
\begin{eqnarray*}
    &\min_{a\in[0,1]}&\int_{\tilde{\theta}\in\Theta}L(a,\tilde{\theta})d\pi^*_{\theta \mid Y}(\tilde{\theta}), \\
    &&\int_{\tilde{\theta}\in\Theta}L(a,\tilde{\theta})d\pi^*_{\theta \mid Y}(\tilde{\theta})\propto\overline{I}(\bar{\mu})(1-a)\cdot\frac{1}{2}\cdot f(Y|\bar{\mu})+\underline{I}(-\bar{\mu})(-a)\cdot\frac{1}{2}\cdot f(Y|-\bar{\mu}),
\end{eqnarray*}
where $f(Y|\bar{\mu})$ and $f(Y|-\bar{\mu})$ are the likelihood of $Y$ at $\bar{\mu}$ and $-\bar{\mu}$. This problem is equivalent to
\begin{align*}
\min_{a\in[0,1]}\overline{I}(\bar{\mu})f(Y|\bar{\mu})+a\underbrace{\overline{I}(\bar{\mu})}_{>0}(f(Y|-\bar{\mu})-f(Y|\bar{\mu})).
\end{align*}
Since $\overline{I}(\bar{\mu})>0$,
the unique Bayes optimal rule is $\mathbf{1}\{f(Y|-\bar{\mu})-f(Y|\bar{\mu})\leq0\}$, which is equivalent to $d^*_{w,0}$ after further algebra. 
\end{proof}
\begin{lemma}\label{lemma:a1-2} Consider decision rule $d^*_{w,0}$
and prior $\pi^{*}$ defined in \eqref{pf:thm1.1}. If \eqref{eq:condition-1}
holds, then 
\begin{equation}
\sup_{\pi\in\Gamma}r(d_{w,0}^{*},\pi)=r(d_{w,0}^{*},\pi^{*}).\label{pf:lem.b2.1}
\end{equation}

\end{lemma}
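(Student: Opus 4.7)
My plan is to show that, under condition \eqref{eq:condition-1}, the two ``max'' terms in the general expression for $\sup_{\pi \in \Gamma} r(d,\pi)$ (as derived in the proof of Theorem \ref{thm:1}(i)) are each attained at the branch corresponding to $\pi^{*}$. Because $\pi^{*}$ puts $U(\theta)=\overline{I}(\bar{\mu})$ when $\mu=\bar{\mu}$ and $U(\theta)=\underline{I}(-\bar{\mu})$ when $\mu=-\bar{\mu}$, the claim $\sup_{\pi\in\Gamma} r(d^{*}_{w,0},\pi)=r(d^{*}_{w,0},\pi^{*})$ reduces to the two inequalities
\[
\overline{I}(\bar{\mu})\bigl(1-\mathbb{E}_{\bar{\mu}}[d^{*}_{w,0}]\bigr) \;\geq\; -\underline{I}(\bar{\mu})\,\mathbb{E}_{\bar{\mu}}[d^{*}_{w,0}],
\]
and the analogous inequality at $\mu=-\bar{\mu}$.

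The first step is a direct computation of $\mathbb{E}_{\pm\bar{\mu}}[d^{*}_{w,0}]$. Since $w=\Sigma^{-1}\bar{\mu}$, under $Y\sim N(\bar{\mu},\Sigma)$ we have $w^{\top}Y \sim N(\|w\|_\Sigma^{2},\|w\|_\Sigma^{2})$, which yields $\mathbb{E}_{\bar{\mu}}[d^{*}_{w,0}] = \Phi(\|w\|_\Sigma)$ and, symmetrically, $\mathbb{E}_{-\bar{\mu}}[d^{*}_{w,0}] = 1 - \Phi(\|w\|_\Sigma)$. Plugging these into the displayed inequality and rearranging turns it into exactly
\[
\frac{\overline{I}(\bar{\mu})}{\overline{I}(\bar{\mu})-\underline{I}(\bar{\mu})} \;\geq\; \Phi(\|w\|_\Sigma),
\]
which is condition \eqref{eq:condition-1}.

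The second step is to handle the $\mu=-\bar{\mu}$ branch. Here I invoke Lemma \ref{lemma:b1}, which gives the symmetry relations $\overline{I}(-\bar{\mu})=-\underline{I}(\bar{\mu})$ and $\underline{I}(-\bar{\mu})=-\overline{I}(\bar{\mu})$. Combining these with $\mathbb{E}_{-\bar{\mu}}[d^{*}_{w,0}]=1-\Phi(\|w\|_\Sigma)$ collapses the second inequality to the \emph{same} scalar condition as the first, so \eqref{eq:condition-1} again suffices. Finally, substituting $U(\theta)=\overline{I}(\bar{\mu})$ and $U(\theta)=\underline{I}(-\bar{\mu})=-\overline{I}(\bar{\mu})$ into $r(d^{*}_{w,0},\pi^{*})$ verifies that the sup is indeed attained by $\pi^{*}$.

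I do not anticipate a real obstacle here: the only ``content'' is algebraic reduction of the two max-branches to \eqref{eq:condition-1} via the explicit Gaussian CDF calculation and the symmetry Lemma \ref{lemma:b1}. The one thing to be careful about is the sign of $\overline{I}(\bar{\mu})$, which is positive under the normalization $\overline{I}(\bar{\mu})+\underline{I}(\bar{\mu})>0$, so that the rearrangement leading to \eqref{eq:condition-1} preserves inequality direction.
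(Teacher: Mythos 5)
Your proposal is correct and follows essentially the same route as the paper's proof: compute $\mathbb{E}_{\pm\bar{\mu}}[d^{*}_{w,0}]=\Phi(\pm\lVert w\rVert_{\Sigma})$ from the Gaussian index, reduce each max-branch of $\sup_{\pi\in\Gamma}r(d^{*}_{w,0},\pi)$ to the scalar inequality \eqref{eq:condition-1} using the symmetry relations of Lemma \ref{lemma:b1}, and confirm the supremum is attained at $\pi^{*}$. The only minor point worth noting is that the inequality-preserving division is by $\overline{I}(\bar{\mu})-\underline{I}(\bar{\mu})>0$ (rather than by $\overline{I}(\bar{\mu})$ itself), but this does not affect the argument.
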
 
\begin{proof}
As $w^{\top}Y\sim N(w^{\top}\mu,w^{\top}\Sigma w)$, algebra shows
\begin{align*}
\mathbb{E}_{\mu}[d_{w,0}^{*}] & =\Phi\left(\frac{w^{\top}\mu}{\sqrt{w^{\top}\Sigma w}}\right)
\end{align*}
for all $\mu\in M$. In particular, 
\begin{align*}
\mathbb{E}_{\bar{\mu}}[d_{w,0}^{*}]=\Phi\left(\frac{w^{\top}\bar{\mu}}{\sqrt{w^{\top}\Sigma w}}\right),\quad\mathbb{E}_{-\bar{\mu}}[d_{w,0}^{*}]=\Phi\left(-\frac{w^{\top}\bar{\mu}}{\sqrt{w^{\top}\Sigma w}}\right).
\end{align*}
It follows that 
\begin{align*}
\sup_{\pi\in\Gamma}r(d_{w,0}^{*},\pi) & =\frac{1}{2}\max\left\{ \overline{I}(\bar{\mu})\Phi\left(-\frac{w^{\top}\bar{\mu}}{\sqrt{w^{\top}\Sigma w}}\right),-\underline{I}(\bar{\mu})\Phi\left(\frac{w^{\top}\bar{\mu}}{\sqrt{w^{\top}\Sigma w}}\right)\right\} \\
 & +\frac{1}{2}\max\left\{ \overline{I}(-\bar{\mu})\Phi\left(\frac{w^{\top}\bar{\mu}}{\sqrt{w^{\top}\Sigma w}}\right),-\underline{I}(-\bar{\mu})\Phi\left(-\frac{w^{\top}\bar{\mu}}{\sqrt{w^{\top}\Sigma w}}\right)\right\},
\end{align*}
and 
\[
r(d_{w,0}^{*},\pi^{*})=\frac{1}{2}\overline{I}(\bar{\mu})\Phi\left(-\frac{w^{\top}\bar{\mu}}{\sqrt{w^{\top}\Sigma w}}\right)-\frac{1}{2}\underline{I}(-\bar{\mu})\Phi\left(-\frac{w^{\top}\bar{\mu}}{\sqrt{w^{\top}\Sigma w}}\right).
\]
So, \eqref{pf:lem.b2.1} holds as long as 
\begin{itemize}
\item[(i)] $\overline{I}(\bar{\mu})\Phi\left(-\frac{w^{\top}\bar{\mu}}{\sqrt{w^{\top}\Sigma w}}\right)\geq-\underline{I}(\bar{\mu})\Phi\left(\frac{w^{\top}\bar{\mu}}{\sqrt{w^{\top}\Sigma w}}\right),$ 
\item[(ii)] $\overline{I}(-\bar{\mu})\Phi\left(\frac{w^{\top}\bar{\mu}}{\sqrt{w^{\top}\Sigma w}}\right)\leq-\underline{I}(-\bar{\mu})\Phi\left(-\frac{w^{\top}\bar{\mu}}{\sqrt{w^{\top}\Sigma w}}\right),$ 
\end{itemize}
both of which are the same as \eqref{eq:condition-1} after further algebra, recalling that $\overline{I}(-\bar{\mu})=-\underline{I}(\bar{\mu})$ and $-\underline{I}(-\bar{\mu})=\overline{I}(\bar{\mu})$ by Lemma \ref{lemma:b1}, $w=\Sigma^{-1}\bar{\mu}$, and $\wnorm=\sqrt{\bar{\mu}^{\top}\Sigma^{-1}\bar{\mu}}$.\end{proof}

\begin{lemma}\label{lemma:a1-2-1} Consider the prior $\pi^{*}$
such that $\mu\sim\operatorname{unif}(\{-\bar{\mu},\bar{\mu}\})$,
and when $\mu=\bar{\mu}$, 
\[
U(\theta)=\begin{cases}
\overline{I}(\bar{\mu}), & \text{w.p. }p_{1},\\
\underline{I}(\bar{\mu}), & \text{w.p. }1-p_{1},
\end{cases}
\]
where $p_{1}$ is such that $p_{1}\overline{I}(\bar{\mu})+(1-p_{1})\underline{I}(\bar{\mu})=0$,
and when $\mu=-\bar{\mu}$, 
\[
U(\theta)=\begin{cases}
\overline{I}(-\bar{\mu}), & \text{w.p. }p_{2},\\
\underline{I}(-\bar{\mu}), & \text{w.p. }1-p_{2},
\end{cases}
\]
where $p_{2}$ is such that $p_{2}\overline{I}(-\bar{\mu})+(1-p_{2})\underline{I}(-\bar{\mu})=0$.
Given this prior, any decision rule is Bayes optimal under \eqref{eq:condition-large}. \end{lemma}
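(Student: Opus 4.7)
The plan is to compute directly the Bayes expected regret $r(d,\pi^*)$ for an arbitrary decision rule $d\in\mathcal{D}_n$ under the stated prior and show that it does not depend on $d$; this immediately yields that every rule is Bayes-optimal against $\pi^*$.

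First I would verify that the prior $\pi^*$ is well-defined, i.e.\ that $p_1,p_2\in(0,1)$. Under condition \eqref{eq:condition-large} together with the normalization $\overline{I}(\priorlocation)+\underline{I}(\priorlocation)>0$, I get $\overline{I}(\priorlocation)>0$ (averaging the normalization with $\overline{I}\geq\underline{I}$) and $\underline{I}(\priorlocation)<0$ (since the left-hand side of \eqref{eq:condition-large} is strictly less than $\Phi(\wnorm)<1$, so $\overline{I}(\priorlocation)<\overline{I}(\priorlocation)-\underline{I}(\priorlocation)$). Hence
\[
p_1=\frac{-\underline{I}(\priorlocation)}{\overline{I}(\priorlocation)-\underline{I}(\priorlocation)}\in(0,1),
\]
and by Lemma \ref{lemma:b1} ($\overline{I}(-\priorlocation)=-\underline{I}(\priorlocation)>0$, $\underline{I}(-\priorlocation)=-\overline{I}(\priorlocation)<0$), likewise $p_2\in(0,1)$.

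Next I would write out the Bayes expected regret. Since $\pi^*_\mu\sim\operatorname{unif}(\{-\priorlocation,\priorlocation\})$ and the conditional distributions of $U(\theta)$ given $\mu=\pm\priorlocation$ are two-point distributions,
\[
r(d,\pi^*)=\tfrac{1}{2}\!\int\! U(\tilde\theta)\bigl(\mathbf 1\{U(\tilde\theta)\geq 0\}-\mathbb{E}_{\priorlocation}[d]\bigr)\,d\pi^*_{\theta\mid\priorlocation}
+\tfrac{1}{2}\!\int\! U(\tilde\theta)\bigl(\mathbf 1\{U(\tilde\theta)\geq 0\}-\mathbb{E}_{-\priorlocation}[d]\bigr)\,d\pi^*_{\theta\mid-\priorlocation}.
\]
Splitting each integrand, the $\mathbb{E}_{\pm\priorlocation}[d]$-terms factor out as $-\mathbb{E}_{\pm\priorlocation}[d]\cdot\int U(\tilde\theta)\,d\pi^*_{\theta\mid\pm\priorlocation}$. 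By construction of $p_1$ and $p_2$,
\[
\int U(\tilde\theta)\,d\pi^*_{\theta\mid\priorlocation}=p_1\overline{I}(\priorlocation)+(1-p_1)\underline{I}(\priorlocation)=0, \qquad \int U(\tilde\theta)\,d\pi^*_{\theta\mid-\priorlocation}=0,
\]
so those terms vanish. What remains is
\[
r(d,\pi^*)=\tfrac12 p_1\overline{I}(\priorlocation)+\tfrac12 p_2\overline{I}(-\priorlocation),
\]
which is independent of $d$. (Intuitively, under $\pi^*$ the posterior over $U(\theta)$ conditional on $\mu$ has mean zero, so the data are uninformative for the purpose of choosing $a$.)

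Since every $d\in\mathcal{D}_n$ attains the same Bayes risk against $\pi^*$, every $d$ is Bayes against $\pi^*$, proving the lemma. There is no real obstacle here beyond being careful about the sign of $\overline{I}(\priorlocation),\underline{I}(\priorlocation)$ so that $p_1,p_2$ are valid probabilities; the rest is a one-line cancellation forced by the definitions of $p_1$ and $p_2$.
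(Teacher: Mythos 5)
Your proof is correct and rests on exactly the same cancellation as the paper's: the weights $p_1,p_2$ are chosen so that $U(\theta)$ has zero conditional mean given $\mu=\pm\priorlocation$, which annihilates every term through which the decision rule enters the objective. The only difference is presentational: the paper fixes $Y$ and shows the posterior expected regret is constant in the action $a$ (which also yields the slightly stronger pointwise-in-$Y$ statement), whereas you integrate over $Y$ and show the ex-ante Bayes regret $r(d,\pi^*)=\tfrac12 p_1\overline{I}(\priorlocation)+\tfrac12 p_2\overline{I}(-\priorlocation)$ is constant in $d$; both immediately give that every rule is Bayes against $\pi^*$, and your explicit check that $p_1,p_2\in(0,1)$ under \eqref{eq:condition-large} is a welcome addition the paper leaves implicit.
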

\begin{proof}
As \eqref{eq:condition-large} holds, we have $\underline{I}(\bar{\mu})<0<\overline{I}(\bar{\mu})$. Analogous to Lemma \ref{lemma:a1-1}, a Bayes rule must solve 
\begin{align*}
\min_{a\in[0,1]}\quad & \frac{1}{2}\cdot f(Y\mid \bar{\mu})\left[p_{1}\overline{I}(\bar{\mu})(1-a)+(1-p_{1})(-\underline{I}(\bar{\mu}))a\right]\\
+ & \frac{1}{2} \cdot f(Y\mid-\bar{\mu})\left[p_{2}\overline{I}(-\bar{\mu})(1-a)+(1-p_{2})(-\underline{I}(-\bar{\mu}))a\right].
\end{align*}
Since $p_{1}\overline{I}(\bar{\mu})+(1-p_{1})\underline{I}(\bar{\mu})=0$
and $p_{2}\overline{I}(-\bar{\mu})+(1-p_{2})\underline{I}(-\bar{\mu})=0$,
the objective is constant in $a$, hence the claim. 
\end{proof}
\begin{lemma}\label{lemma:a1-2-2} Consider the prior $\pi^{*}$
such that $\mu\sim\operatorname{unif}(\{-\bar{\mu},\bar{\mu}\})$,
when $\mu=\bar{\mu}$, 
\[
U(\theta)=\begin{cases}
\overline{I}(\bar{\mu}), & \text{w.p. }p_{1},\\
\underline{I}(\bar{\mu}), & \text{w.p. }1-p_{1},
\end{cases}
\]
where $p_{1}$ is such that $p_{1}\overline{I}(\bar{\mu})+(1-p_{1})\underline{I}(\bar{\mu})=0$,
and when $\mu=-\bar{\mu}$, 
\[
U(\theta)=\begin{cases}
\overline{I}(-\bar{\mu}), & \text{w.p. }p_{2},\\
\underline{I}(-\bar{\mu}), & \text{w.p. }1-p_{2},
\end{cases}
\]
where $p_{2}$ is such that $p_{2}\overline{I}(-\bar{\mu})+(1-p_{2})\underline{I}(-\bar{\mu})=0$.
Then, when \eqref{eq:condition-large} holds, we have
\[
\sup_{\pi\in\Gamma}r(d,\pi)=r(d,\pi^*)
\]
for any decision rule $d\in\mathcal{D}_{n}$ such that \eqref{pf:thm.1.2}
is true.
\end{lemma}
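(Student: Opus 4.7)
The plan is to compute both sides of the claimed equality in closed form and show that each reduces to the same scalar, namely $\frac{-\overline{I}(\bar{\mu})\,\underline{I}(\bar{\mu})}{\overline{I}(\bar{\mu})-\underline{I}(\bar{\mu})}$. The proof is essentially a bookkeeping exercise: $\pi^*$ is engineered so that, conditional on $\mu=\pm\bar{\mu}$, Nature is indifferent over the two extreme values of $U(\theta)$ (that is what $p_1,p_2$ enforce), while \eqref{pf:thm.1.2} is engineered so that, conditional on $\mu=\pm\bar{\mu}$, the decision maker is indifferent between Nature's two extreme choices. These two indifference conditions feed into the two computations and produce the same number.

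\textbf{Step 1: Compute $r(d,\pi^*)$.} Expand $r(d,\pi^*)=\tfrac12 \mathbb{E}_{\pi^*}[U(\theta)(\mathbf{1}\{U(\theta)\geq 0\}-\mathbb{E}_{\bar{\mu}}[d])\mid\mu=\bar{\mu}]+\tfrac12(\cdot\mid\mu=-\bar{\mu})$. On the $\mu=\bar{\mu}$ branch the expectation equals $p_1\overline{I}(\bar{\mu})(1-\mathbb{E}_{\bar{\mu}}[d])-(1-p_1)\underline{I}(\bar{\mu})\mathbb{E}_{\bar{\mu}}[d]$, which rearranges to $p_1\overline{I}(\bar{\mu})-\mathbb{E}_{\bar{\mu}}[d]\bigl(p_1\overline{I}(\bar{\mu})+(1-p_1)\underline{I}(\bar{\mu})\bigr)=p_1\overline{I}(\bar{\mu})$, since the bracketed term is zero by the definition of $p_1$. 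Substituting $p_1=\frac{-\underline{I}(\bar{\mu})}{\overline{I}(\bar{\mu})-\underline{I}(\bar{\mu})}$ yields $\frac{-\overline{I}(\bar{\mu})\underline{I}(\bar{\mu})}{\overline{I}(\bar{\mu})-\underline{I}(\bar{\mu})}$. The $\mu=-\bar{\mu}$ branch is handled analogously, invoking Lemma \ref{lemma:b1} so that $\overline{I}(-\bar{\mu})=-\underline{I}(\bar{\mu})$ and $\underline{I}(-\bar{\mu})=-\overline{I}(\bar{\mu})$; this branch collapses to the same value. Averaging gives $r(d,\pi^*)=\frac{-\overline{I}(\bar{\mu})\underline{I}(\bar{\mu})}{\overline{I}(\bar{\mu})-\underline{I}(\bar{\mu})}$.

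\textbf{Step 2: Compute $\sup_{\pi\in\Gamma}r(d,\pi)$.} I would reuse the formula already derived at the start of the proof of Theorem \ref{thm:1},
\[
\sup_{\pi\in\Gamma}r(d,\pi)=\tfrac12\max\{\overline{I}(\bar{\mu})(1-\mathbb{E}_{\bar{\mu}}[d]),-\underline{I}(\bar{\mu})\mathbb{E}_{\bar{\mu}}[d]\}+\tfrac12\max\{\overline{I}(-\bar{\mu})(1-\mathbb{E}_{-\bar{\mu}}[d]),-\underline{I}(-\bar{\mu})\mathbb{E}_{-\bar{\mu}}[d]\}.
\]
Plug in $\mathbb{E}_{\bar{\mu}}[d]=\frac{\overline{I}(\bar{\mu})}{\overline{I}(\bar{\mu})-\underline{I}(\bar{\mu})}$ from \eqref{pf:thm.1.2}: direct algebra shows the two arguments of the first $\max$ are both equal to $\frac{-\overline{I}(\bar{\mu})\underline{I}(\bar{\mu})}{\overline{I}(\bar{\mu})-\underline{I}(\bar{\mu})}$, so the $\max$ is unambiguous. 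The same computation for the second $\max$, again via Lemma \ref{lemma:b1} and \eqref{pf:thm.1.2}, gives the same value. Summing and halving reproduces $\frac{-\overline{I}(\bar{\mu})\underline{I}(\bar{\mu})}{\overline{I}(\bar{\mu})-\underline{I}(\bar{\mu})}$, matching Step 1 and finishing the proof.

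\textbf{Where condition \eqref{eq:condition-large} enters, and main obstacle.} The condition is used only to guarantee $\underline{I}(\bar{\mu})<0<\overline{I}(\bar{\mu})$ (since $\Phi(\|w\|_\Sigma)<1$ forces $\overline{I}(\bar{\mu})<\overline{I}(\bar{\mu})-\underline{I}(\bar{\mu})$, i.e.\ $\underline{I}(\bar{\mu})<0$; combined with the normalization $\overline{I}(\bar{\mu})+\underline{I}(\bar{\mu})>0$ this pins down both signs). This ensures $p_1,p_2\in(0,1)$ and that the two numbers in \eqref{pf:thm.1.2} lie in $[0,1]$, so both $\pi^*$ and the hypothesized $d$ are legitimate. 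There is no substantive analytic difficulty; the only real hazard is sign tracking when converting $\overline{I}(-\bar{\mu}),\underline{I}(-\bar{\mu})$ to $\overline{I}(\bar{\mu}),\underline{I}(\bar{\mu})$ via Lemma \ref{lemma:b1}, since these signs flip under $\bar{\mu}\mapsto -\bar{\mu}$.
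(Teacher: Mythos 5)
Your proposal is correct and follows essentially the same route as the paper's proof: both sides are computed explicitly and shown to equal $\frac{-\overline{I}(\bar{\mu})\underline{I}(\bar{\mu})}{\overline{I}(\bar{\mu})-\underline{I}(\bar{\mu})}$, with Nature's indifference (via $p_1,p_2$) collapsing $r(d,\pi^*)$ and the decision maker's indifference (via \eqref{pf:thm.1.2}) collapsing the two maxima in $\sup_{\pi\in\Gamma}r(d,\pi)$. Your write-up in fact supplies the sign bookkeeping and the role of \eqref{eq:condition-large} more explicitly than the paper's ``algebra shows.''
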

\begin{proof}
Note again that \eqref{eq:condition-large} implies $\underline{I}(\bar{\mu})<0$ and $\overline{I}(\bar{\mu})>0$. For any decision rule $d\in\mathcal{D}_{n}$ such that \eqref{pf:thm.1.2}
is true, algebra shows
\[
\overline{I}(\bar{\mu})(1-\mathbb{E}_{\bar{\mu}}[d])=-\underline{I}(\bar{\mu})\mathbb{E}_{\bar{\mu}}[d]=-\frac{\overline{I}(\bar{\mu})\underline{I}(\bar{\mu})}{\overline{I}(\bar{\mu})-\underline{I}(\bar{\mu})},
\]
\[
\overline{I}(-\bar{\mu})(1-\mathbb{E}_{-\bar{\mu}}[d])=-\underline{I}(-\bar{\mu})\mathbb{E}_{-\bar{\mu}}[d]=-\frac{\overline{I}(\bar{\mu})\underline{I}(\bar{\mu})}{\overline{I}(\bar{\mu})-\underline{I}(\bar{\mu})},
\]
implying $\sup_{\pi\in\Gamma}r(d,\pi)=-\frac{\overline{I}(\bar{\mu})\underline{I}(\bar{\mu})}{\overline{I}(\bar{\mu})-\underline{I}(\bar{\mu})}$.
Meanwhile, for any $d\in\mathcal{D}_{n}$ such that \eqref{pf:thm.1.2} is true, we may also calculate the following that yields the desired conclusion:
\[
r(d,\pi^*)=\frac{1}{2}p_{1}\overline{I}(\bar{\mu})+\frac{1}{2}p_{2}\overline{I}(-\bar{\mu})=-\frac{\overline{I}(\bar{\mu})\underline{I}(\bar{\mu})}{\overline{I}(\bar{\mu})-\underline{I}(\bar{\mu})}=\sup_{\pi\in\Gamma}r(d,\pi).
\]\end{proof}

\begin{lemma}\label{lemma:a1-3-1} In case (\ref{eq:condition-large}),
the function 
\begin{align*}
g(c) & =\frac{1}{2}\max\left\{ \overline{I}(\bar{\mu})\Phi\left(\frac{-\wnorm ^{2}+c}{\wnorm }\right),-\underline{I}(\bar{\mu})\Phi\left(\frac{\wnorm ^{2}-c}{\wnorm }\right)\right\} \\
 & +\frac{1}{2}\max\left\{ -\underline{I}(\bar{\mu})\Phi\left(\frac{\wnorm ^{2}+c}{\wnorm }\right),\overline{I}(\bar{\mu})\Phi\left(\frac{-\wnorm ^{2}-c}{\wnorm }\right)\right\} 
\end{align*}
is decreasing in $[0,c^{*}]$ and increasing in $[c^{*},\infty)$. \end{lemma}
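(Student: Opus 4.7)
The plan is to reduce to cleaner notation by setting $A:=\overline{I}(\priorlocation)>0$, $B:=-\underline{I}(\priorlocation)>0$ (positivity of $B$ from \eqref{eq:condition-large}; positivity of $A$ from the normalization $\overline{I}(\priorlocation)+\underline{I}(\priorlocation)>0$) and $s:=\wnorm$. In this notation the hypothesis \eqref{eq:condition-large} is $A/(A+B)<\Phi(s)$, equivalently $A(1-\Phi(s))<B\Phi(s)$, and $c^*=s^{2}-s\Phi^{-1}(A/(A+B))$, which is strictly positive because $A/(A+B)>1/2$ by the normalization.

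First, I would dispose of the second maximum in $g$. For every $c\geq 0$, monotonicity of $\Phi$ plus the inequality above yield $A\Phi((-s^{2}-c)/s)\leq A(1-\Phi(s))<B\Phi(s)\leq B\Phi((s^{2}+c)/s)$, so the second max is identically $B\Phi((s^{2}+c)/s)$ on $[0,\infty)$, a strictly increasing function. For the first maximum, the two competing terms are $A\Phi((-s^{2}+c)/s)$ (strictly increasing) and $B\Phi((s^{2}-c)/s)$ (strictly decreasing). The decreasing term strictly dominates at $c=0$ (same inequality), and using $\Phi(x)+\Phi(-x)=1$ the equation $A\Phi((c-s^{2})/s)=B\Phi((s^{2}-c)/s)$ simplifies to $\Phi((s^{2}-c)/s)=A/(A+B)$, whose unique solution is $c=c^*$. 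Hence on $[0,c^*]$ the first max equals $B\Phi((s^{2}-c)/s)$, while on $[c^*,\infty)$ it equals $A\Phi((-s^{2}+c)/s)$.

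Combining, on $[c^*,\infty)$ the function $g$ is a sum of two strictly increasing pieces and hence is increasing. On $[0,c^*]$ we have $g(c)=\tfrac{B}{2}\bigl[\Phi((s^{2}-c)/s)+\Phi((s^{2}+c)/s)\bigr]$, with derivative $\tfrac{B}{2s}\bigl[\phi((s^{2}+c)/s)-\phi((s^{2}-c)/s)\bigr]$, where $\phi$ is the standard normal density. For any $c>0$ one has $|s^{2}-c|<s^{2}+c$, so unimodality of $\phi$ around $0$ gives $\phi((s^{2}+c)/s)<\phi((s^{2}-c)/s)$ and the derivative is strictly negative; this argument is uniform in whether $c\lessgtr s^{2}$. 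The main bookkeeping obstacle is verifying that the crossing point of the two branches of the first maximum is exactly $c^*$; the identity $\Phi(x)+\Phi(-x)=1$ reduces this to a one-line evaluation and also confirms that at $c=c^*$ both branches equal the common value $AB/(A+B)$, so the two formulas for $g$ agree and $g$ is continuous at the switch point.
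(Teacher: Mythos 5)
Your proof is correct and takes essentially the same route as the paper's: identify the active branch of each maximum on $[0,c^{*}]$ and $[c^{*},\infty)$ (the second maximum equals $-\underline{I}(\priorlocation)\Phi\bigl((\wnorm^{2}+c)/\wnorm\bigr)$ throughout, while the first switches branches exactly at $c^{*}$), then check the sign of the derivative on each piece; your observation that $\phi\bigl((\wnorm^{2}+c)/\wnorm\bigr)<\phi\bigl((\wnorm^{2}-c)/\wnorm\bigr)$ because $|\wnorm^{2}-c|<\wnorm^{2}+c$ supplies a detail the paper leaves implicit. One small misattribution: $c^{*}>0$ follows from \eqref{eq:condition-large} itself (which gives $\Phi^{-1}(A/(A+B))<\wnorm$), not from $A/(A+B)>1/2$, which only yields $\Phi^{-1}(A/(A+B))>0$.
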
 
\begin{proof}
Note when $c\in[0,c^{*})$ and (\ref{eq:condition-large}) holds,
we have $-\underline{I}(\bar{\mu})>0$, 
\begin{align*}
-\underline{I}(\bar{\mu})\Phi\left(\frac{\wnorm ^{2}-c}{\wnorm }\right) & >\overline{I}(\bar{\mu})\Phi\left(\frac{-\wnorm ^{2}+c}{\wnorm }\right),\\
-\underline{I}(\bar{\mu})\Phi\left(\frac{\wnorm ^{2}+c}{\wnorm }\right) & >\overline{I}(\bar{\mu})\Phi\left(\frac{-\wnorm ^{2}-c}{\wnorm }\right),
\end{align*}
and therefore, 
\[
g(c)=-\frac{\underline{I}(\bar{\mu})}{2}\left[\Phi\left(\frac{\wnorm ^{2}-c}{\wnorm }\right)+\Phi\left(\frac{\wnorm ^{2}+c}{\wnorm }\right)\right],
\]
and 
\[
\frac{\partial g(c)}{\partial c}=-\frac{\underline{I}(\bar{\mu})}{2\wnorm }\left[\phi\left(\frac{\wnorm ^{2}+c}{\wnorm }\right)-\phi\left(\frac{\wnorm ^{2}-c}{\wnorm }\right)\right]<0.
\]
When $c=c^{*}$, note $\overline{I}(\bar{\mu})\Phi\left(\frac{-\wnorm ^{2}+c^{*}}{\wnorm }\right)=-\underline{I}(\bar{\mu})\Phi\left(\frac{\wnorm ^{2}-c^{*}}{\wnorm }\right)$
and 
\[
g(c^{*})=-\frac{\underline{I}(\bar{\mu})}{2}\left[\Phi\left(\frac{\wnorm ^{2}-c^{*}}{\wnorm }\right)+\Phi\left(\frac{\wnorm ^{2}+c^{*}}{\wnorm }\right)\right].
\]
When $c\in(c^{*},\infty)$, note 
\begin{align*}
-\underline{I}(\bar{\mu})\Phi\left(\frac{\wnorm ^{2}-c}{\wnorm }\right) & <\overline{I}(\bar{\mu})\Phi\left(\frac{-\wnorm ^{2}+c}{\wnorm }\right),\\
-\underline{I}(\bar{\mu})\Phi\left(\frac{\wnorm ^{2}+c}{\wnorm }\right) & >\overline{I}(\bar{\mu})\Phi\left(\frac{-\wnorm ^{2}-c}{\wnorm }\right),
\end{align*}
so that 
\begin{align*}
g(c) & =\frac{1}{2}\left\{ \overline{I}(\bar{\mu})\Phi\left(\frac{-\wnorm ^{2}+c}{\wnorm }\right)-\underline{I}(\bar{\mu})\Phi\left(\frac{\wnorm ^{2}+c}{\wnorm }\right)\right\} ,
\end{align*}
and 
\[
\frac{\partial g(c)}{\partial c}=\frac{1}{2\wnorm }\left\{ \overline{I}(\bar{\mu})\phi\left(\frac{-\wnorm ^{2}+c}{\wnorm }\right)-\underline{I}(\bar{\mu})\phi\left(\frac{\wnorm ^{2}+c}{\wnorm }\right)\right\} >0.
\]
Having shown that $\frac{\partial g(c)}{\partial c}<0$ when $c\in[0,c^{*})$
and $\frac{\partial g(c)}{\partial c}>0$ when $c\in(c^{*},\infty)$,
and since $g(c)$ is continuous at $c=c^{*}$, we conclude that $g$
is decreasing in $[0,c^{*}]$ and increasing $[c^{*},\infty)$.
\end{proof}
\begin{lemma}\label{lemma:a1-4-1}

In case (\ref{eq:condition-large}) and when $n>1$, let 
\[
d_{w_{t},0}(Y)=\mathbf{1}\{w_{t}^{\top}Y\geq0\},\text{ where }w_{t}=\Sigma^{-1}(t\bar{\mu}+(1-t)\dot{\mu}),\dot{\mu}\neq\mathbf{0},
\]
and $\dot{\mu}^{\top}\Sigma^{-1}\bar{\mu}=0$. Then, $\mathbb{E}_{\bar{\mu}}[d_{w_{t^{*}},0}(Y)]=\frac{\overline{I}(\bar{\mu})}{\overline{I}(\bar{\mu})-\underline{I}(\bar{\mu})}$,
where $t^{*}$ is defined in \eqref{eq:t.star}.

\end{lemma}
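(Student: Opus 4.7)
The plan is to reduce this to a direct computation of the Gaussian probability $\mathbb{E}_{\bar{\mu}}[\mathbf{1}\{w_t^\top Y \geq 0\}]$ and then invert $\Phi$ to identify the value of $t$ that pins it to the target level $\overline{I}(\bar{\mu})/(\overline{I}(\bar{\mu})-\underline{I}(\bar{\mu}))$. Since $Y \sim N(\bar{\mu},\Sigma)$, the linear combination $w_t^\top Y$ is univariate normal, so the whole question collapses to computing its mean and variance and applying the standard formula $\mathbb{E}_{\bar{\mu}}[\mathbf{1}\{w_t^\top Y \geq 0\}] = \Phi\bigl(w_t^\top \bar{\mu}/\sqrt{w_t^\top \Sigma w_t}\bigr)$.

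Concretely, I would first exploit the orthogonality assumption $\dot{\mu}^\top \Sigma^{-1} \bar{\mu} = 0$ to obtain the clean expressions
\[
w_t^\top \bar{\mu} \;=\; t\,\bar{\mu}^\top \Sigma^{-1}\bar{\mu} \;=\; t\,\lVert w\rVert_\Sigma^{2},
\qquad
w_t^\top \Sigma w_t \;=\; t^{2}\lVert w\rVert_\Sigma^{2} + (1-t)^{2}\lVert\Sigma^{-1}\dot{\mu}\rVert_\Sigma^{2},
\]
where the cross term vanishes precisely because of $\dot{\mu}^\top \Sigma^{-1}\bar{\mu}=0$. Substituting these into the $\Phi$-expression and setting the result equal to $\overline{I}(\bar{\mu})/(\overline{I}(\bar{\mu})-\underline{I}(\bar{\mu}))$ yields, after applying $\Phi^{-1}$ to both sides and squaring, the scalar equation
\[
\frac{t^{2}\lVert w\rVert_\Sigma^{4}}{t^{2}\lVert w\rVert_\Sigma^{2} + (1-t)^{2}\lVert\Sigma^{-1}\dot{\mu}\rVert_\Sigma^{2}} \;=\; \bigl[\Phi^{-1}\!\bigl(\tfrac{\overline{I}(\bar{\mu})}{\overline{I}(\bar{\mu})-\underline{I}(\bar{\mu})}\bigr)\bigr]^{2} \;=\; s^{*}\lVert w\rVert_\Sigma^{2},
\]
which is a quadratic in $t$.

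Solving this quadratic is straightforward: rearranging gives $\bigl(\tfrac{1-t}{t}\bigr)^{2} = \tfrac{1-s^{*}}{s^{*}}\cdot\tfrac{\lVert w\rVert_\Sigma^{2}}{\lVert\Sigma^{-1}\dot{\mu}\rVert_\Sigma^{2}}$, and taking square roots and isolating $t$ recovers exactly the two values
\[
t^{*} \;=\; \frac{1}{1\pm\sqrt{\tfrac{1-s^{*}}{s^{*}}\cdot\tfrac{\lVert w\rVert_\Sigma^{2}}{\lVert\Sigma^{-1}\dot{\mu}\rVert_\Sigma^{2}}}}
\]
stated in \eqref{eq:t.star}. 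For this step I need to know that the square root in the denominator is well defined, which reduces to verifying $s^{*}\in(0,1)$: the lower bound $s^{*}>0$ follows from $\overline{I}(\bar{\mu})+\underline{I}(\bar{\mu})>0$ (which forces the argument of $\Phi^{-1}$ to exceed $1/2$, hence $\Phi^{-1}(\cdot)>0$), and the upper bound $s^{*}<1$ is precisely condition \eqref{eq:condition-large} rewritten, since squaring both sides of $\Phi^{-1}\!\bigl(\tfrac{\overline{I}(\bar{\mu})}{\overline{I}(\bar{\mu})-\underline{I}(\bar{\mu})}\bigr)<\lVert w\rVert_\Sigma$ gives exactly $s^{*}<1$.

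Essentially nothing here is delicate — it is bookkeeping around a Gaussian probability plus a quadratic solve — so I do not anticipate any genuine obstacle. The only small care required is to make sure the $\pm$ in the final formula is preserved (reflecting the two solutions of the squared equation, corresponding to the two orientations of $\dot{\mu}$ along its line), and that the sign of $t^{*}$ is consistent with $\Phi^{-1}(\cdot)>0$ so that $w_{t^*}^\top\bar{\mu}>0$. With $s^{*}\in(0,1)$ verified and orthogonality providing the cancellations, the claim then follows directly from matching $\Phi$-arguments on both sides.
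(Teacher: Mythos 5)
Your proposal is correct and follows essentially the same route as the paper's proof: reduce to the Gaussian probability $\Phi\bigl(w_t^{\top}\bar{\mu}/\sqrt{w_t^{\top}\Sigma w_t}\bigr)$, use the orthogonality $\dot{\mu}^{\top}\Sigma^{-1}\bar{\mu}=0$ to kill the cross terms, and solve the resulting quadratic in $t$ after squaring $\Phi^{-1}$ of the target, with $s^{*}\in(0,1)$ guaranteed by the normalization and condition \eqref{eq:condition-large}. Your remark about checking the sign of $t^{*}$ so that $w_{t^{*}}^{\top}\bar{\mu}>0$ is a worthwhile point that the paper's own proof glosses over, but it does not change the argument.
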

\begin{proof}
Write $f(t):=\mathbb{E}_{\bar{\mu}}[d_{w_{t},c}(Y)]=\Phi\left(\frac{w_{t}^{\top}\bar{\mu}}{\sqrt{w_{t}^{\top}\Sigma w_{t}}}\right)$ and  $k^{*}:=\frac{\overline{I}(\bar{\mu})}{\overline{I}(\bar{\mu})-\underline{I}(\bar{\mu})}$. Then, it suffices to show that $f(t^{*})=k^{*}$, which is equivalent to showing
\[
\frac{\left(w_{t^{*}}^{\top}\bar{\mu}\right)^{2}}{w_{t^{*}}^{\top}\Sigma w_{t^{*}}}  =\left(\Phi^{-1}\left(k^{*}\right)\right)^{2}.
\]
Furthermore, note
\begin{align*}
w_{t^{*}}^{\top}\bar{\mu} & =(t^{*}\bar{\mu}^{\top}\Sigma^{-1}+(1-t^{*})\dot{\mu}^{\top}\Sigma^{-1})\bar{\mu}
  =t^{*}\bar{\mu}^{\top}\Sigma^{-1}\bar{\mu}+(1-t^{*})\underbrace{\dot{\mu}^{\top}\Sigma^{-1}\bar{\mu}}_{=0}
  =t^{*}\wnorm^{2}
\end{align*}
and
\begin{align*}
w_{t^{*}}^{\top}\Sigma w_{t^{*}} & =(t^{*}\bar{\mu}^{\top}\Sigma^{-1}+(1-t^{*})\dot{\mu}^{\top}\Sigma^{-1})\Sigma(t^{*}\Sigma^{-1}\bar{\mu}+(1-t^{*})\Sigma^{-1}\dot{\mu})\\
 & =(t^{*})^{2}\bar{\mu}^{\top}\Sigma^{-1}\bar{\mu}+2t^{*}(1-t^{*})\underbrace{\bar{\mu}^{\top}\Sigma^{-1}\dot{\mu}}_{=0}+(1-t^{*})^{2}\underbrace{\dot{\mu}^{\top}\Sigma^{-1}\dot{\mu}}_{:=\wdotnorm^{2}}
 & =(t^{*})^{2}\wnorm^{2}+(1-t^{*})^{2}\wdotnorm^{2}.
\end{align*}
Therefore, substituting in these expressions, $t^{*}$ should be such
that 
\begin{alignat*}{2}
  \frac{\left(\Phi^{-1}(k^{*})\right)^{2}}{\wnorm^{2}}  =\frac{(t^{*})^{2}\wnorm^{2}}{(t^{*})^{2}\wnorm^{2}+(1-t^{*})^{2}\wdotnorm^{2}}.
\end{alignat*}
As $s^{*}:=\frac{\left(\Phi^{-1}(k^{*})\right)^{2}}{\wnorm^{2}}\in(0,1)$
due to (\ref{eq:condition-large}), we can solve the above equation for
$t^{*}$,  and after lengthy algebra,  find that
\begin{equation*}
t^{*} =\frac{1}{1\pm\sqrt{\frac{(1-s^{*})}{s^{*}}\cdot\frac{\wnorm^{2}}{\wdotnorm^{2}}}}.
\end{equation*}
\end{proof}

\section{Technical Lemmas Supporting Ex-post Analysis}

\begin{lemma}\label{lemma:a2-1} Suppose all the conditions of Theorem
\ref{thm:1} hold. If $\underline{I}(\priorlocation)<0<\overline{I}(\priorlocation)$,
then the unique $\Gamma$-PER optimal rule is 
\[
d_{\operatorname{PER}}^{*}(Y)=\begin{cases}
\frac{-\underline{I}(\priorlocation)}{\overline{I}(\priorlocation)-\underline{I}(\priorlocation)}, & \text{if }w^{\top}Y<0,\\
\frac{\overline{I}(\priorlocation)}{\overline{I}(\priorlocation)-\underline{I}(\priorlocation)}, & \text{if }w^{\top}Y\geq0.
\end{cases}
\]
Otherwise, $d_{w,0}^{*}$ is $\Gamma$-PER optimal. \end{lemma}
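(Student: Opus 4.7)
The plan is to reduce the $\Gamma$-PER problem to a pointwise scalar minimization at each $Y$ and solve it by elementary piecewise-linear calculus. First I would compute the posterior on $\mu$: because $\pi_\mu$ places mass $1/2$ on each of $\pm\priorlocation$ and $Y\sim N(\mu,\Sigma)$, Bayes' rule combined with the Gaussian log-likelihood ratio $2w^{\top}Y$ (with $w=\Sigma^{-1}\priorlocation$) gives
\[
p(Y):=\pi_{\mu\mid Y}(\priorlocation)=\frac{1}{1+e^{-2w^{\top}Y}},\qquad 1-p(Y)=\pi_{\mu\mid Y}(-\priorlocation),
\]
so $p(Y)>1/2$ iff $w^{\top}Y>0$. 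Since (as noted in the footnote after Definition~\ref{def:PER}) the data do not update $\pi_{\theta\mid\mu}$, the posterior factors and taking $\sup_{\pi\in\Gamma}$ reduces to independently picking the worst $u\in I(\pm\priorlocation)$ in each summand:
\[
\sup_{\pi\in\Gamma}\rho(a,\pi_{\theta\mid Y})=p(Y)\,M_{\priorlocation}(a)+(1-p(Y))\,M_{-\priorlocation}(a),
\]
where $M_{\mu}(a):=\max\{(\overline{I}(\mu))_{+}(1-a),\,(-\underline{I}(\mu))_{+}a\}$ and $(x)_{+}:=\max\{x,0\}$. By Lemma~\ref{lemma:b1} (centrosymmetry), $\overline{I}(-\priorlocation)=-\underline{I}(\priorlocation)$ and $\underline{I}(-\priorlocation)=-\overline{I}(\priorlocation)$, so the $\Gamma$-PER problem at each $Y$ becomes minimization of a piecewise linear function $F_Y(a)$ with at most two breakpoints.

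Next I would run the case analysis driven by the sign of $\underline{I}(\priorlocation)$. When $\underline{I}(\priorlocation)<0<\overline{I}(\priorlocation)$, set $A:=\overline{I}(\priorlocation)>0$ and $B:=-\underline{I}(\priorlocation)>0$, so
\[
F_Y(a)=p(Y)\max\{A(1-a),Ba\}+(1-p(Y))\max\{B(1-a),Aa\},
\]
with breakpoints $a_A:=A/(A+B)$ and $a_B:=B/(A+B)$. A direct inspection of the three affine pieces, handling the subcases $A\geq B$ and $A<B$ symmetrically, shows that $F_Y$ strictly decreases before $\min\{a_A,a_B\}$, strictly increases after $\max\{a_A,a_B\}$, and on the middle piece has slope equal to a positive constant times $1-2p(Y)$. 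Hence the unique minimizer (outside the null event $\{w^{\top}Y=0\}$) is $a_A$ when $p(Y)>1/2$ and $a_B$ when $p(Y)<1/2$; that is, $\overline{I}(\priorlocation)/(\overline{I}(\priorlocation)-\underline{I}(\priorlocation))$ when $w^{\top}Y>0$ and $-\underline{I}(\priorlocation)/(\overline{I}(\priorlocation)-\underline{I}(\priorlocation))$ when $w^{\top}Y<0$. This is exactly $d^{*}_{\operatorname{PER}}$, and identification of decision rules up to a.s.\ equality yields the claimed uniqueness.

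For the complementary case, the normalization $\overline{I}(\priorlocation)+\underline{I}(\priorlocation)>0$ rules out $\overline{I}(\priorlocation)\leq 0$ and forces $\underline{I}(\priorlocation)\geq 0$. Then $I(\priorlocation)\subseteq[0,\infty)$ and, by centrosymmetry, $I(-\priorlocation)\subseteq(-\infty,0]$, giving $M_{\priorlocation}(a)=\overline{I}(\priorlocation)(1-a)$ and $M_{-\priorlocation}(a)=\overline{I}(\priorlocation)\,a$; thus $F_Y(a)=\overline{I}(\priorlocation)\bigl[p(Y)+a(1-2p(Y))\bigr]$ is linear in $a$ with slope proportional to $1-2p(Y)$, and the minimizer is $a=1$ iff $w^{\top}Y\geq 0$, i.e.\ $d^{*}_{w,0}$. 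I expect the main obstacle to be the bookkeeping on the middle piece in the first case---keeping track of which breakpoint is smaller in the two subcases $A\geq B$ and $A<B$ and deriving the correct middle-segment slope $A(1-2p(Y))$ or $B(2p(Y)-1)$---but once this monotonicity structure is pinned down, both parts of the lemma follow immediately.
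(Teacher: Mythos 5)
Your proposal is correct and follows essentially the same route as the paper's proof: both reduce the $\Gamma$-PER problem to pointwise minimization over $a\in[0,1]$ of a piecewise-linear objective with breakpoints at $\overline{I}(\priorlocation)/(\overline{I}(\priorlocation)-\underline{I}(\priorlocation))$ and $-\underline{I}(\priorlocation)/(\overline{I}(\priorlocation)-\underline{I}(\priorlocation))$, the only cosmetic difference being that you fold the likelihoods into the posterior probability $p(Y)$ while the paper works with the unnormalized $f(Y\mid\pm\priorlocation)$. Note also that the normalization $\overline{I}(\priorlocation)+\underline{I}(\priorlocation)>0$ forces $A>B$, so only one ordering of your breakpoints can actually occur (though your symmetric treatment of both subcases is harmless).
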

\begin{proof}
Let $f(Y\mid\mu)$ be the likelihood of $Y$. Then, for each action $a\in[0,1]$, we have

\begin{align*}
 & \sup_{\pi\in\Gamma}\int_{\tilde{\theta}\in\Theta}L(a,\tilde{\theta})d\pi_{\theta \mid Y}(\tilde{\theta})\\
\propto & \frac{1}{2}\left(\sup_{\pi_{\theta\mid\bar{\mu}}}\int_{\tilde{\theta}\in\Theta}U(\tilde{\theta})(\mathbf{1}\{U(\tilde{\theta})\geq0\}-a)d\pi_{\theta\mid\bar{\mu}}(\tilde{\theta})\right)f(Y\mid\bar{\mu})\\
+ & \frac{1}{2}\left(\sup_{\pi_{\theta\mid-\bar{\mu}}}\int_{\tilde{\theta}\in\Theta}U(\tilde{\theta})(\mathbf{1}\{U(\tilde{\theta})\geq0\}-a)d\pi_{\theta\mid-\bar{\mu}}(\tilde{\theta})\right)f(Y\mid-\bar{\mu})\\
= & \frac{1}{2}f(Y\mid\bar{\mu})\max\left\{ \overline{I}(\bar{\mu})(1-a),-\underline{I}(\bar{\mu})a\right\} \\
+ & \frac{1}{2}f(Y\mid-\bar{\mu})\max\left\{ \overline{I}(-\bar{\mu})(1-a),-\underline{I}(-\bar{\mu})a\right\} .
\end{align*}
Therefore, when randomization is allowed, the $\Gamma$-PER optimal rule can be found by solving  
\begin{align}
 & \inf_{a\in[0,1]}V_{\Gamma}(a),\label{eq:pf:per.1}
\end{align}
where 
\begin{align*}
V_{\Gamma}(a) & :=f(Y\mid\bar{\mu})\max\left\{ \overline{I}(\bar{\mu})(1-a),-\underline{I}(\bar{\mu})a\right\} 
  +f(Y\mid-\bar{\mu})\max\left\{ \overline{I}(\bar{\mu})a,-\underline{I}(\bar{\mu})(1-a)\right\},
\end{align*}
since $\overline{I}(-\bar{\mu})=-\underline{I}(\bar{\mu})$
and $-\underline{I}(-\bar{\mu})=\overline{I}(\bar{\mu})$
by Lemma \ref{lemma:b1}. Note that, for each action $a\in[0,1]$, 
\begin{align*}
\overline{I}(\bar{\mu})(1-a)\geq-\underline{I}(\bar{\mu})a & \Leftrightarrow a\leq\frac{\overline{I}(\bar{\mu})}{\overline{I}(\bar{\mu})-\underline{I}(\bar{\mu})},\\
\overline{I}(\bar{\mu})a\geq-\underline{I}(\bar{\mu})(1-a) & \Leftrightarrow a\geq\frac{-\underline{I}(\bar{\mu})}{\overline{I}(\bar{\mu})-\underline{I}(\bar{\mu})},
\end{align*}
where $\overline{I}(\bar{\mu})-\underline{I}(\bar{\mu})\geq0$ and $\overline{I}(\bar{\mu})>0$ due to our normalization $\overline{I}(\bar{\mu})+\underline{I}(\bar{\mu})>0$.
The conclusion follows from discussing the following two cases.

\paragraph{Case 1: $\underline{I}(\bar{\mu})\protect\geq0$.}

In this case, $\frac{-\underline{I}(\bar{\mu})}{\overline{I}(\bar{\mu})-\underline{I}(\bar{\mu})} \leq a\leq\frac{\overline{I}(\bar{\mu})}{\overline{I}(\bar{\mu})-\underline{I}(\bar{\mu})}$ for all $a\in[0,1]$. Thus, (\ref{eq:pf:per.1}) becomes 
\begin{align*}
 & \inf_{a\in[0,1]}\underset{>0}{\underbrace{\overline{I}(\bar{\mu})}}\left\{ f(Y\mid\bar{\mu})+\left[f(Y\mid-\bar{\mu})-f(Y\mid\bar{\mu})\right]a\right\} .
\end{align*}
The $\Gamma$-PER optimal rule is $\mathbb{\mathbf{1}}\left\{ f(Y\mid-\bar{\mu})-f(Y\mid\bar{\mu})\leq0\right\} $,
which is $d_{w,0}^{*}$ after further algebra.

\paragraph{Case 2: $\underline{I}(\bar{\mu})<0<\overline{I}(\bar{\mu})$.}

In this case, we have (again, note $\overline{I}(\bar{\mu})+\underline{I}(\bar{\mu})>0$):
\[
0<\frac{-\underline{I}(\bar{\mu})}{\overline{I}(\bar{\mu})-\underline{I}(\bar{\mu})}<\frac{\overline{I}(\bar{\mu})}{\overline{I}(\bar{\mu})-\underline{I}(\bar{\mu})}<1.
\]
Therefore: 
\begin{itemize}
\item When $a\in \left[0,\frac{-\underline{I}(\bar{\mu})}{\overline{I}(\bar{\mu})-\underline{I}(\bar{\mu})}\right)$, 
\begin{align*}
V_{\Gamma}(a)= & f(Y\mid\bar{\mu})\overline{I}(\bar{\mu})(1-a)-f(Y\mid-\bar{\mu})\underline{I}(\bar{\mu})(1-a)\\
= & \underset{>0}{\underbrace{\left\{ f(Y\mid\bar{\mu})\overline{I}(\bar{\mu})+f(Y\mid-\bar{\mu})\left(-\underline{I}(\bar{\mu})\right)\right\} }}(1-a).
\end{align*}
Thus, $V_{\Gamma}(\cdotp)$ is strictly decreasing in $a\in\left[0,\frac{-\underline{I}(\bar{\mu})}{\overline{I}(\bar{\mu})-\underline{I}(\bar{\mu})}\right)$. 
\item When $a\in\left[\frac{-\underline{I}(\bar{\mu})}{\overline{I}(\bar{\mu})-\underline{I}(\bar{\mu})},\frac{\overline{I}(\bar{\mu})}{\overline{I}(\bar{\mu})-\underline{I}(\bar{\mu})}\right),$
\begin{align*}
V_{\Gamma}(a)= & f(Y\mid\bar{\mu})\overline{I}(\bar{\mu})(1-a)+f(Y\mid-\bar{\mu})\overline{I}(\bar{\mu})a\\
= & f(Y\mid\bar{\mu})\overline{I}(\bar{\mu})+\underset{>0}{\underbrace{\overline{I}(\bar{\mu})}}\left(f(Y\mid-\bar{\mu})-f(Y\mid\bar{\mu})\right)a.
\end{align*}
Thus, if $f(Y\mid-\bar{\mu})>f(Y\mid\bar{\mu})$, $V_{\Gamma}(\cdotp)$
is strictly increasing in $a$, and if $f(Y\mid-\bar{\mu})<f(Y\mid\bar{\mu})$,
$V_{\Gamma}(\cdotp)$ is strictly decreasing in $a$. When $f(Y\mid-\bar{\mu})=f(Y\mid\bar{\mu})$,
$V_{\Gamma}(\cdotp)$ is constant in $a$. 
\item When $a\in\left[\frac{\overline{I}(\bar{\mu})}{\overline{I}(\bar{\mu})-\underline{I}(\bar{\mu})},1\right]$,
\begin{align*}
V_{\Gamma}(a) & =\underset{>0}{\underbrace{(f(Y\mid-\bar{\mu})\overline{I}(\bar{\mu})-f(Y\mid\bar{\mu})\underline{I}(\bar{\mu}))}}~a,
\end{align*}
implying that $V_{\Gamma}(\cdotp)$ is strictly increasing in $a$. 
\end{itemize}
In sum, if $f(Y\mid-\bar{\mu})>f(Y\mid\bar{\mu})$,
$V_{\Gamma}(\cdotp)$ is first strictly decreasing and then increasing,
with the minimum achieved at $a=\frac{-\underline{I}(\bar{\mu})}{\overline{I}(\bar{\mu})-\underline{I}(\bar{\mu})}$;
if $f(Y\mid-\bar{\mu})<f(Y\mid\bar{\mu})$, $V_{\Gamma}(\cdotp)$
is first strictly decreasing and increasing with the minimum achieved
at $a=\frac{\overline{I}(\bar{\mu})}{\overline{I}(\bar{\mu})-\underline{I}(\bar{\mu})}$;
If $f(Y\mid-\bar{\mu})=f(Y\mid\bar{\mu})$ (which only happens
in a null set), $V_{\Gamma}(\cdotp)$ achieves minimum at any point
between $[\frac{-\underline{I}(\bar{\mu})}{\overline{I}(\bar{\mu})-\underline{I}(\bar{\mu})},\frac{\overline{I}(\bar{\mu})}{\overline{I}(\bar{\mu})-\underline{I}(\bar{\mu})}]$.
Therefore, the unique $\Gamma$-PER optimal rule is 
\[
d(Y)=\begin{cases}
\frac{\overline{I}(\bar{\mu})}{\overline{I}(\bar{\mu})-\underline{I}(\bar{\mu})}, & \text{if }f(Y\mid-\bar{\mu})\leq f(Y\mid\bar{\mu})\Longleftrightarrow\bar{\mu}^{\top}\Sigma^{-1}Y\geq0,\\
\frac{-\underline{I}(\bar{\mu})}{\overline{I}(\bar{\mu})-\underline{I}(\bar{\mu})} & \text{if }f(Y\mid-\bar{\mu})>f(Y\mid\bar{\mu})\Longleftrightarrow\bar{\mu}^{\top}\Sigma^{-1}Y<0.
\end{cases}
\]
\end{proof}
\begin{lemma}\label{lemma:a2-2} Suppose all conditions of Theorem
\ref{thm:1} hold true. Then, $d_{w,0}^{*}$ is always the $\Gamma$-PER
optimal non-randomized rule. \end{lemma}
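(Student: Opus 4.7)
\textbf{Proof plan for Lemma \ref{lemma:a2-2}.} The plan is to recycle the worst-case posterior expected regret expression $V_{\Gamma}(a)$ already derived in the proof of Lemma \ref{lemma:a2-1}, but now specialize the minimization to the two admissible actions $\{0,1\}$ and compare them pointwise in $Y$. Because the rule is non-randomized, for each realization $Y$ it must select one of two values, so the problem decomposes pointwise into computing $\arg\min_{a\in\{0,1\}} V_{\Gamma}(a)$.

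First I would evaluate $V_{\Gamma}(0)$ and $V_{\Gamma}(1)$ from the closed-form already used for Lemma \ref{lemma:a2-1}, combined with the symmetry $\overline{I}(-\priorlocation)=-\underline{I}(\priorlocation)$ from Lemma \ref{lemma:b1} and the normalization $\overline{I}(\priorlocation)+\underline{I}(\priorlocation)>0$ (which in particular yields $\overline{I}(\priorlocation)>0$). Direct substitution into the $\max$-terms gives
\[
V_{\Gamma}(0)=f(Y\mid\priorlocation)\,\overline{I}(\priorlocation)+f(Y\mid-\priorlocation)\,\max\{0,-\underline{I}(\priorlocation)\},
\]
\[
V_{\Gamma}(1)=f(Y\mid\priorlocation)\,\max\{0,-\underline{I}(\priorlocation)\}+f(Y\mid-\priorlocation)\,\overline{I}(\priorlocation),
\]
and hence
\[
V_{\Gamma}(0)-V_{\Gamma}(1)=\bigl(\overline{I}(\priorlocation)-\max\{0,-\underline{I}(\priorlocation)\}\bigr)\bigl(f(Y\mid\priorlocation)-f(Y\mid-\priorlocation)\bigr).
\]
A brief case split would confirm that the first factor equals $\overline{I}(\priorlocation)$ when $\underline{I}(\priorlocation)\geq 0$ and $\overline{I}(\priorlocation)+\underline{I}(\priorlocation)$ when $\underline{I}(\priorlocation)<0$; in both cases it is strictly positive by the normalization. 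Consequently, the sign of $V_{\Gamma}(0)-V_{\Gamma}(1)$ agrees with that of $f(Y\mid\priorlocation)-f(Y\mid-\priorlocation)$.

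The standard normal likelihood ratio identity gives $f(Y\mid\priorlocation)\geq f(Y\mid-\priorlocation)\Leftrightarrow \priorlocation^{\top}\Sigma^{-1}Y\geq 0$, i.e., $w^{\top}Y\geq 0$. So the pointwise optimal non-randomized action is $\mathbf{1}\{w^{\top}Y\geq 0\}=d_{w,0}^{*}$, with the tie on $\{w^{\top}Y=0\}$ being a null event that is irrelevant up to a.s.\ equivalence. This holds regardless of whether condition \eqref{eq:condition-1} or \eqref{eq:condition-large} is in force, which yields the word ``always'' in the statement.

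The main obstacle is purely bookkeeping, namely reducing the two $\max$-terms cleanly at $a\in\{0,1\}$ and recognizing that the positivity of the coefficient $\overline{I}(\priorlocation)-\max\{0,-\underline{I}(\priorlocation)\}$ is driven by the normalization rather than by the sign of $\underline{I}(\priorlocation)$ alone. No new analytic difficulty arises beyond what is already present in the proof of Lemma \ref{lemma:a2-1}; in particular, contrasting this with the randomized case of that lemma clarifies why exclusion of randomized rules changes the optimum exactly when $\underline{I}(\priorlocation)<0<\overline{I}(\priorlocation)$.
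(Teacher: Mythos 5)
Your proposal is correct and follows essentially the same route as the paper: evaluate $V_{\Gamma}(0)$ and $V_{\Gamma}(1)$ from the expression derived for Lemma \ref{lemma:a2-1} and compare them pointwise in $Y$, reducing the sign of the difference to that of $f(Y\mid\priorlocation)-f(Y\mid-\priorlocation)$, i.e.\ to $w^{\top}Y\geq 0$. If anything, your explicit treatment of the factor $\overline{I}(\priorlocation)-\max\{0,-\underline{I}(\priorlocation)\}$ is slightly more careful than the paper's, which writes the max terms as if $\underline{I}(\priorlocation)\leq 0$ throughout (harmlessly, since the conclusion is unchanged).
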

\begin{proof}
When randomization is not allowed, we need to solve (by a similar
derivation to Lemma \ref{lemma:a2-1})  
$\inf_{a\in\{0,1\}}V_{\Gamma}(a)$,
where $V_{\Gamma}(a)$ is defined in (\ref{eq:pf:per.1}). As 
\begin{align*}
V_{\Gamma}(1) & =-f(Y\mid\bar{\mu})\underline{I}(\bar{\mu})+f(Y\mid-\bar{\mu})\overline{I}(\bar{\mu}),\\
V_{\Gamma}(0) & =f(Y\mid\bar{\mu})\overline{I}(\bar{\mu})-f(Y\mid-\bar{\mu})\underline{I}(\bar{\mu}),
\end{align*}
the optimal action would be $a=1$ if and only if $V_{\Gamma}(1)\leq V_{\Gamma}(0)$, which is equivalent to 
$d_{w,0}^{*}$ after further algebra. 
\end{proof}

\section{Additional Results}\label{sec:addtional.results}

\begin{lemma}\label{lemma:b1} Consider a treatment choice problem
with welfare function \eqref{eq:welfare}, statistical model \eqref{eq:normal_model}
and a set of priors \eqref{eq:Gamma}, that satisfies Assumption \ref{asm:1}.
Then, the following statements hold: 
\begin{itemize}
\item[(i)] $\overline{I}(-\bar{\mu})=-\underline{I}(\bar{\mu})$,
$\underline{I}(-\bar{\mu})=-\overline{I}(\bar{\mu})$; 
\item[(ii)] $\overline{I}(-\bar{\mu})+\underline{I}(-\bar{\mu})=-(\underline{I}(\bar{\mu})+\overline{I}(\bar{\mu}))$. 
\end{itemize}
\end{lemma}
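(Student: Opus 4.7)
The plan is to prove both statements by exploiting the centrosymmetry of $\Theta$ together with the linearity of $m(\cdot)$ and $U(\cdot)$. The key observation is that these assumptions jointly force the identified set $I(\mu)$ to satisfy the symmetry $I(-\mu) = -I(\mu)$ (i.e., the set of sign-flipped elements). Once this is established, part (i) drops out from the fact that sup and inf of a set swap under negation, and part (ii) is an immediate consequence of (i).

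The first step is to unpack how the assumptions interact. Centrosymmetry of $\Theta$ means $\theta \in \Theta \iff -\theta \in \Theta$. Linearity of $m(\cdot)$ and $U(\cdot)$ (combined with $\Theta$ being nonempty and containing $0$ by centrosymmetry and convexity) gives $m(-\theta) = -m(\theta)$ and $U(-\theta) = -U(\theta)$ for all $\theta \in \Theta$. Thus, the map $\theta \mapsto -\theta$ is a bijection of $\Theta$ onto itself that sends the fiber $\{\theta \in \Theta : m(\theta) = \bar{\mu}\}$ bijectively onto $\{\theta \in \Theta : m(\theta) = -\bar{\mu}\}$, while negating the values of $U$.

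The second step formalizes the set identity. If $u \in I(-\bar{\mu})$, pick $\theta \in \Theta$ with $m(\theta) = -\bar{\mu}$ and $U(\theta) = u$. Then $-\theta \in \Theta$, $m(-\theta) = \bar{\mu}$, and $U(-\theta) = -u$, so $-u \in I(\bar{\mu})$, i.e., $u \in -I(\bar{\mu})$. The converse inclusion is symmetric, yielding
\begin{equation*}
I(-\bar{\mu}) = -I(\bar{\mu}) := \{-u : u \in I(\bar{\mu})\}.
\end{equation*}

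The third step concludes. From the identity,
\begin{equation*}
\overline{I}(-\bar{\mu}) = \sup(-I(\bar{\mu})) = -\inf I(\bar{\mu}) = -\underline{I}(\bar{\mu}),
\end{equation*}
and analogously $\underline{I}(-\bar{\mu}) = -\overline{I}(\bar{\mu})$, which is part (i). Adding these two equations gives part (ii). I do not anticipate any real obstacle; the only small point to be careful about is that one needs Assumption \ref{asm:1} in full (both centrosymmetry and linearity) to get the bijective correspondence on fibers, but no further conditions — the statement is just an elementary symmetry check and the argument is at most a few lines.
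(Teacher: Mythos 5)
Your proposal is correct and follows essentially the same route as the paper: both exploit centrosymmetry of $\Theta$ together with linearity of $m(\cdot)$ and $U(\cdot)$ to map the fiber $\{\theta\in\Theta: m(\theta)=-\bar{\mu}\}$ bijectively onto $\{\theta\in\Theta: m(\theta)=\bar{\mu}\}$ via $\theta\mapsto-\theta$ while negating $U$, and then use the sup/inf swap under negation, with (ii) obtained by summing the two identities in (i). The only cosmetic difference is that you phrase the key step as the set identity $I(-\bar{\mu})=-I(\bar{\mu})$, whereas the paper manipulates the supremum directly; the appeal to $0\in\Theta$ is unnecessary given linearity but harmless.
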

\begin{proof}
\textbf{Statement (i).} Here, we only show that $\overline{I}(-\bar{\mu})=-\underline{I}(\bar{\mu})$.
An analogous argument proves $\underline{I}(-\bar{\mu})=-\overline{I}(\bar{\mu})$.
By definition, 
\[
\overline{I}(-\bar{\mu})=\sup_{\{\theta\in\Theta:m(\theta)=-\bar{\mu}\}}U(\theta).
\]
Since $\Theta$ is centrosymmetric and both $U(\cdotp)$ and $m(\cdot)$
are linear, we have 
\begin{align*}
\sup_{\{\theta\in\Theta:m(\theta)=-\bar{\mu}\}}U(\theta) & =\sup_{\{-\theta\in\Theta:-m(-\theta)=-\bar{\mu}\}}-U(-\theta)\\
 & =\sup_{\{\tilde{\theta}\in\Theta:-m(\tilde{\theta})=-\bar{\mu}\}}-U(\tilde{\theta})\\
 & =-\left\{ \inf_{\{\tilde{\theta}\in\Theta:m(\tilde{\theta})=\bar{\mu}\}}U(\tilde{\theta})\right\} \\
 & =-\underline{I}(\bar{\mu})
\end{align*}
as required.

\textbf{Statement (ii).} Directly follows from summing both equalities
in statement (i).
\end{proof}

\begin{lemma}\label{lem:profiled.regret}
In Example \ref{ex:stoye}, suppose $k>\sqrt{\frac{\pi}{2}}\sigma$.
Then, if $\bar{\mu}>0$ is sufficiently small, the corresponding $\Gamma$-PER
optimal rule is dominated in terms of profiled regret.
\end{lemma}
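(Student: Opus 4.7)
The plan is to exhibit a specific dominating rule, namely the $\Gamma$-MMR optimal step rule $d^*_{\text{step}}$ from Corollary~\ref{coro:1}(ii), and to show directly that its profiled regret lies weakly below that of $d^*_{\text{PER}}$ when $\bar{\mu}$ is small. The starting observation is that both rules share the common form $d_{\text{step},\beta}(\hat\mu) = (\tfrac12-\beta)\mathbf{1}\{\hat\mu<0\} + (\tfrac12+\beta)\mathbf{1}\{\hat\mu\geq0\}$, with $\beta_{\text{PER}} = \bar{\mu}/(2k)$ (from Corollary~\ref{coro:2}(i), available because $\bar{\mu}<k$) and $\beta^* = \bar{\mu}/[2k(2\Phi(\bar{\mu}/\sigma)-1)]$ (from Corollary~\ref{coro:1}(ii), which applies because $k>\sqrt{\pi/2}\,\sigma$ causes \eqref{eq:condition} to fail for small $\bar{\mu}$). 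By symmetry of the problem's profiled regret, it suffices to treat $\mu\geq0$.

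First I would compute $\bar R(\beta,\mu):=\bar R(d_{\text{step},\beta},\mu)$. Writing $A(\mu):=2\Phi(\mu/\sigma)-1$, one gets $\mathbb{E}[d_{\text{step},\beta}\mid\mu] = \tfrac12+\beta A(\mu)$ and hence
\[
\bar R(\beta,\mu) = \max\bigl\{(\mu+k)(\tfrac12-\beta A(\mu)),\ (k-\mu)_+(\tfrac12+\beta A(\mu))\bigr\}.
\]
As a function of $\beta$, this is piecewise linear and V-shaped, attaining its unique minimum $(k^2-\mu^2)/(2k)$ at $\beta_{\text{critical}}(\mu):=\mu/[2k A(\mu)]$. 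A short Taylor expansion of $\Phi$ at $0$ gives $\beta_{\text{critical}}(0^+)=\sigma\sqrt{\pi/2}/(2k)$ and, via the sign of $A(\mu)-\mu A'(\mu)$, shows $\beta_{\text{critical}}$ is strictly increasing on $(0,\infty)$. Crucially, $\beta^* = \beta_{\text{critical}}(\bar\mu)$ exactly.

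The hypothesis $k>\sqrt{\pi/2}\,\sigma$ says precisely $\beta_{\text{critical}}(0^+)<\tfrac12$, so by continuity $\beta^*\in(0,\tfrac12)$ for small $\bar\mu$. Moreover, for $\bar\mu<\sigma\sqrt{\pi/2}$, one has $\beta_{\text{PER}}<\sigma\sqrt{\pi/2}/(2k)=\beta_{\text{critical}}(0^+)\leq\beta_{\text{critical}}(\mu)$ for every $\mu\geq0$, while $\beta^*=\beta_{\text{critical}}(\bar\mu)>\beta_{\text{PER}}$. The domination then splits into two cases. For $\mu\geq\bar\mu$, both $\beta_{\text{PER}}$ and $\beta^*$ sit on the decreasing branch of $\bar R(\cdot,\mu)$, so $\beta_{\text{PER}}<\beta^*$ immediately yields $\bar R(\beta^*,\mu)<\bar R(\beta_{\text{PER}},\mu)$.

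The harder case is $\mu\in(0,\bar\mu]$, where $\beta_{\text{PER}}$ lies on the decreasing branch but $\beta^*$ lies on the increasing branch. Algebra gives
\[
\bar R(\beta_{\text{PER}},\mu)-\bar R(\beta^*,\mu) = A(\mu)\bigl[\Delta_1(\mu)(\mu+k)-\Delta_2(\mu)(k-\mu)\bigr],
\]
where $\Delta_1:=\beta_{\text{critical}}(\mu)-\beta_{\text{PER}}$ and $\Delta_2:=\beta^*-\beta_{\text{critical}}(\mu)$. The main obstacle is verifying non-negativity of this expression uniformly in $\mu\in(0,\bar\mu]$ for small enough $\bar\mu$. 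I would attack it with the Taylor expansion $\beta_{\text{critical}}(\mu)=\tfrac{\sigma\sqrt{\pi/2}}{2k}\bigl[1+\tfrac{\mu^2}{6\sigma^2}+O(\mu^4)\bigr]$, which crucially has vanishing first-order term at $0$ and therefore gives $\Delta_2(\mu)\leq\beta_{\text{critical}}(\bar\mu)-\beta_{\text{critical}}(0^+)=O(\bar\mu^2)$, while $\Delta_1(\mu)\geq\beta_{\text{critical}}(0^+)-\beta_{\text{PER}}=(\sigma\sqrt{\pi/2}-\bar\mu)/(2k)$ stays bounded away from $0$. Combined with the trivial bounds $(\mu+k)\geq k$ and $(k-\mu)\leq k$, this makes the $\Delta_1$ contribution dominate for small $\bar\mu$, delivering a strict positive lower bound for $\mu\in(0,\bar\mu]$. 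Equality at $\mu=0$ is automatic because $A(0)=0$ makes every step rule yield profiled regret $k/2$, and a symmetric argument handles $\mu\leq0$, finishing the proof that $d^*_{\text{step}}$ dominates $d^*_{\text{PER}}$ in profiled regret.
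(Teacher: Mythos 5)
Your proof is correct, but it takes a genuinely different route from the paper's. The paper exhibits domination by the least-randomizing \emph{global} MMR rule $d^{*}_{MMR,\text{linear}}$ of \citet{MQS}: it shows that for small $\bar{\mu}$ the binding branch of $\bar{R}(d^{*}_{\operatorname{PER}},\mu)$ is $(\mu+k)(1-\mathbb{E}_{\mu}[d^{*}_{\operatorname{PER}}])$ on $\mu\geq 0$, that this function is strictly increasing from its value $k/2$ at $\mu=0$, and then invokes the external fact that $\sup_{\mu}\bar{R}(d^{*}_{MMR,\text{linear}},\mu)=k/2$, so the PER rule lies weakly above $k/2$ everywhere while the comparison rule lies weakly below. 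You instead dominate by the $\Gamma$-MMR step rule $d^{*}_{\text{step}}$ and work entirely inside the one-parameter family $d_{\text{step},\beta}$, which is a nice self-contained alternative: the key observations that $\bar{R}(\cdot,\mu)$ is V-shaped in $\beta$ with vertex $\beta_{\text{critical}}(\mu)=\mu/(2kA(\mu))$, that $\beta^{*}=\beta_{\text{critical}}(\bar{\mu})$ exactly, that $\beta_{\text{critical}}$ is increasing with $\beta_{\text{critical}}(0^{+})=\sigma\sqrt{\pi/2}/(2k)$, and that $\beta_{\text{critical}}$ is flat to first order at $0$ (so $\Delta_{2}=O(\bar{\mu}^{2})$ while $\Delta_{1}$ is bounded away from zero) are all verified correctly, and together they deliver strict domination on $(0,\bar{\mu}]$ and on $[\bar{\mu},\infty)$ separately. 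What each approach buys: the paper's argument is shorter but leans on a result about $d^{*}_{MMR,\text{linear}}$ imported from \citet{MQS}, whereas yours avoids any external input and, as a bonus, identifies a $\Gamma$-MMR optimal rule (rather than a globally MMR optimal one) as the dominating rule, which ties the domination statement more tightly to the objects already constructed in this paper; the price is the Taylor-expansion bookkeeping in the case $\mu\in(0,\bar{\mu}]$. One presentational caveat: for $\mu>k$ the map $\beta\mapsto\bar{R}(\beta,\mu)$ is globally decreasing rather than V-shaped (the second branch vanishes), but this only strengthens your conclusion on that range, so nothing breaks.
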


\begin{proof}
Denote by $d_{MMR,\text{linear}}^{*}$ the least randomizing 
global MMR optimal rule derived in \citet{MQS}. We aim to show that when
$k>\sqrt{\frac{\pi}{2}}\sigma$ and $\bar{\mu}>0$ is sufficiently
small, the associated $\Gamma$-PER optimal rule $d^*_{\text{PER}}$ is such that
\begin{equation}
\bar{R}(d_{\operatorname{PER}}^{*},\mu)\geq\bar{R}(d_{MMR,\text{linear}}^{*},\mu),\text{ for all \ensuremath{\mu}}\geq0,\label{pf:profiled.regret.1}
\end{equation}
with the inequality strict for all $\mu>0$. A symmetry argument then
immediately implies that $d_{\operatorname{PER}}^{*}$ is dominated in terms of profiled regret.

\paragraph{Step 1} Pick any $0<\bar{\mu}<\sqrt{\frac{\pi}{2}}\sigma$. We show
that $\bar{R}(d_{\operatorname{PER}}^{*},\mu)=(\mu+k)\left(1-\mathbb{E}_{\mu}\left[d_{\operatorname{PER}}^{*}(\hat{\mu})\right]\right)$ for all $\mu\geq0$. By results in \citet[Appendix B.3.1, ][]{MQS}, we may write the profiled
regret of a rule $d$ as
\begin{align*}
 & \bar{R}(d,\mu)\\
= & \begin{cases}
(-\mu+k)\mathbb{E}_{\mu}\left[d(\hat{\mu})\right], & \text{if }\mu<-k,\\
\max\left\{ (\mu+k)\left(1-\mathbb{E}_{\mu}\left[d(\hat{\mu})\right]\right),(-\mu+k)\mathbb{E}_{\mu}\left[d(\hat{\mu})\right]\right\} , & \text{if }-k\leq\mu\leq k,\\
(\mu+k)(1-\mathbb{E}_{\mu}\left[d(\hat{\mu})\right]), & \text{if }\mu>k.
\end{cases}
\end{align*}

Thus, it suffices to show 
\begin{align*}
 & \max\left\{ (\mu+k)\left(1-\mathbb{E}_{\mu}\left[d_{\operatorname{PER}}^{*}(\hat{\mu})\right]\right),(-\mu+k)\mathbb{E}_{\mu}\left[d_{\operatorname{PER}}^{*}(\hat{\mu})\right]\right\} \\
= & (\mu+k)\left(1-\mathbb{E}_{\mu}\left[d_{\operatorname{PER}}^{*}(\hat{\mu})\right]\right)
\end{align*}
for any $0\leq\mu\leq k$.  As $\sqrt{\frac{\pi}{2}}\sigma<k$, Corollary \ref{coro:2} implies
that the corresponding $\Gamma$-PER rule for any $0<\bar{\mu}<\sqrt{\frac{\pi}{2}}\sigma$
is 
\begin{align*}
d_{\operatorname{PER}}^{*}(\hat{\mu}) & =\frac{k+\overline{\mu}}{2k}\mathbf{1}\left\{ \hat{\mu}\geq0\right\} +\frac{k-\overline{\mu}}{2k}\mathbf{1}\left\{ \hat{\mu}<0\right\} \\
 & =\frac{1}{2}+\frac{\overline{\mu}}{2k}\left(\mathbf{1}\left\{ \hat{\mu}\geq0\right\} -\mathbf{1}\left\{ \hat{\mu}<0\right\} \right),
\end{align*}
and further algebra shows $\mathbb{E}_{\mu}\left[d_{\operatorname{PER}}^{*}(\hat{\mu})\right]=\frac{1}{2}+\frac{\overline{\mu}}{2k}\left(1-2\Phi\left(-\frac{\mu}{\sigma}\right)\right)$.
Then,
\[
(\mu+k)\left(1-\mathbb{E}_{\mu}\left[d_{\operatorname{PER}}^{*}(\hat{\mu})\right]\right)\geq(-\mu+k)\mathbb{E}_{\mu}\left[d_{\operatorname{PER}}^{*}(\hat{\mu})\right]
\]
 if and only if 
\[
(\mu+k)\left(\frac{1}{2}-\frac{\overline{\mu}}{2k}\left(1-2\Phi\left(-\frac{\mu}{\sigma}\right)\right)\right)\geq(-\mu+k)\left(\frac{1}{2}+\frac{\overline{\mu}}{2k}\left(1-2\Phi\left(-\frac{\mu}{\sigma}\right)\right)\right),
\]
which is equivalent to 
\begin{equation}
\frac{\mu}{\left(1-2\Phi\left(-\frac{\mu}{\sigma}\right)\right)}\geq\overline{\mu}.\label{pf:profiled.regret.2}
\end{equation}
Note the left hand side of (\ref{pf:profiled.regret.2}) is increasing
in $\mu$, and $\lim_{\mu\downarrow0}\left(\frac{\mu}{\left(1-2\Phi\left(-\frac{\mu}{\sigma}\right)\right)}\right)=\frac{\sigma}{2\phi\left(0\right)}=\sigma\sqrt{\frac{\pi}{2}}>\overline{\mu}$.
As $0<\bar{\mu}<\sqrt{\frac{\pi}{2}}\sigma$, we conclude that (\ref{pf:profiled.regret.2})
indeed holds, and it follows 
\begin{equation}\label{pf:profiled.regret.3}
\bar{R}(d_{\operatorname{PER}}^{*},\mu)=(\mu+k)\left(1-\mathbb{E}_{\mu}\left[d_{\operatorname{PER}}^{*}(\hat{\mu})\right]\right),
\end{equation}
for all $\mu\geq0$. As a result, we conclude that, for any $0<\bar{\mu}<\sqrt{\frac{\pi}{2}}\sigma$,
we have 
\[
\bar{R}(d_{\operatorname{PER}}^{*},\mu)=(\mu+k)\left(\frac{1}{2}-\frac{\overline{\mu}}{2k}\left(1-2\Phi\left(-\frac{\mu}{\sigma}\right)\right)\right),\text{ for all }\mu\geq0.
\]

\paragraph{Step 2} We show that $\bar{R}(d_{\operatorname{PER}}^{*},\mu)$ is
strictly increasing in $\mu$ for any $0<\bar{\mu}<\sqrt{\frac{\pi}{2}}\sigma$
small enough. To see this, note 
\begin{align*}
\frac{\partial\bar{R}(d_{\operatorname{PER}}^{*},\mu)}{\partial\mu} & =\frac{1}{2}-\frac{\overline{\mu}}{2k}\left(1-2\Phi\left(-\frac{\mu}{\sigma}\right)\right)\\
 & -(\mu+k)\frac{\overline{\mu}}{k}\frac{1}{\sigma}\phi\left(\frac{\mu}{\sigma}\right)\\
 & =\frac{1}{2}-\frac{\overline{\mu}}{2k}\left(1-2\Phi\left(-\frac{\mu}{\sigma}\right)+2(\mu+k)\frac{1}{\sigma}\phi\left(\frac{\mu}{\sigma}\right)\right),
\end{align*}
in which note (1): $0<1-2\Phi\left(-\frac{\mu}{\sigma}\right)\leq1$
for all $\mu\geq0$, and $(2)$: $2(\mu+k)\frac{1}{\sigma}\phi\left(\frac{\mu}{\sigma}\right)>0$
is first increasing and then decreasing with a unique and finite maximum
$C_{k}>0$. Therefore, we conclude that for any $0<\overline{\mu}<\min\left(\frac{k}{\left(1+C_{k}\right)},\sqrt{\frac{\pi}{2}}\sigma\right)$,
we have $\frac{\partial\bar{R}(d_{\operatorname{PER}}^{*},\mu)}{\partial\mu}>0$
for all $\mu\geq0$. That is, $\bar{R}(d_{\operatorname{PER}}^{*},\mu)$
is strictly increasing in $\mu\in[0,\infty]$.

\paragraph{Step 3} \citet{MQS} have already shown that \[\sup_{\mu\geq0}\bar{R}(d_{MMR,\text{linear}}^{*},\mu)=\bar{R}(d_{MMR,\text{linear}}^{*},0)=\frac{k}{2}.\]
Then, by the conclusion from step 3 and noting $\bar{R}(d_{\operatorname{PER}}^{*},0)=\frac{k}{2}$,
we see that (\ref{pf:profiled.regret.1}) indeed holds with the inequality strict for all $\mu>0$, completing
the proof.
\end{proof}

\section{General Nature of Our Main Results}\label{sec:generality}

In the main text, we derived finite-sample $\Gamma$-MMR and -PER
optimal rules for a class of priors such that $\pi_{\mu}\sim\text{unif\ensuremath{\left(\left\{  -\overline{\mu},\overline{\mu}\right\}  \right)}}$.
In this section, we discuss the implications of our results when $\pi_{\mu}$
has a more general structure (e.g., with a continuous support). 

It is relatively straightforward to extend our $\Gamma$-PER results
(Theorem \ref{thm:2}) to other forms of $\pi_{\mu}$. Let 
\[
V_{\Gamma}(a):=V_{\Gamma}(a,Y):=\sup_{\pi\in\Gamma}\int_{\tilde{\theta}\in\Theta}L(a,\tilde{\theta})d_{\theta\mid Y}(\tilde{\theta}).
\]
The ex-post $\Gamma$-PER criterion aims to solve $\min_{a\in[0,1]}V_{\Gamma}(a).$
Given a general $\pi_{\mu}$ and the normal likelihood of $Y$, we
can derive the the posterior distribution of the reduced-form parameter
$\mu$ given $Y$, written as $\pi_{\mu\mid Y}$, using the standard
Bayes rule. In light of the structure of the class of priors $\Gamma$,
we see 
\begin{align*}
V_{\Gamma}(a)= & \int\sup_{\mu^{*}\in I(x)}\left\{ \mu^{*}\left[\mathbf{1}\left\{ \mu^{*}\geq0\right\} -a\right]\right\} d\pi_{\mu\mid Y}(x)\\
= & \int\max\left\{ \overline{I}(x)\left(1-a\right),-\underline{I}(x)a\right\} d\pi_{\mu\mid Y}(x).
\end{align*}
Note $\overline{I}(x)\left(1-a\right)\geq-\underline{I}(x)a$ if and
only if $\overline{I}(x)\geq a\left(\overline{I}(x)-\underline{I}(x)\right)$,
equivalent to (assuming $\overline{I}(x)-\underline{I}(x)>0$ for
all $x$ in the support of $\pi_{\mu\mid Y}$) $a\leq\frac{\overline{I}(x)}{\overline{I}(x)-\underline{I}(x)}.$
Thus, for each $a\in[0,1]$, we have
\begin{align*}
V_{\Gamma}(a) & =\int\left[\mathbf{1}\left\{ a\leq\frac{\overline{I}(x)}{\overline{I}(x)-\underline{I}(x)}\right\} \overline{I}(x)\left(1-a\right)-\underline{I}(x)a\mathbf{1}\left\{ a>\frac{\overline{I}(x)}{\overline{I}(x)-\underline{I}(x)}\right\} \right]d\pi_{\mu\mid Y}(x)\\
 & =\left(1-a\right)\int_{\left\{ x\in\mathbb{R}^{n}:\frac{\overline{I}(x)}{\overline{I}(x)-\underline{I}(x)}\geq a\right\} }\overline{I}(x)d\pi_{\mu\mid Y}(x)-a\int_{\left\{ x\in\mathbb{R}^{n}:\frac{\overline{I}(x)}{\overline{I}(x)-\underline{I}(x)}<a\right\} }\underline{I}(x)d\pi_{\mu\mid Y}(x).
\end{align*}
The key observation is that the value of $V_{\Gamma}(a)$ for each
$a\in[0,1]$ can be numerically solved, although it may not have a
closed-form solution. Therefore, in general, we can still find $d_{\text{PER}}^{*}(Y)$
by numerically solving $\min_{a\in[0,1]}V_{\Gamma}(a)$. Note as the solution of $\min_{a\in[0,1]}V_{\Gamma}(a)$ may or may not be at the corners $\{0,1\}$, it is in general \emph{not} the case that the associated $\Gamma$-PER rule always randomizes.
 
In contrast, the $\Gamma$-MMR optimal rule is more difficult to
find once we allow general forms of $\pi_{\mu}$. Similar to the unconstrained
MMR optimality problem, one often needs to resort to the ``guess-and-verify''
strategy by forming a least favorable prior (LPF) --- unlike the unconstrained
scenario, for the $\Gamma$-MMR problem, the LFP has to come from
$\Gamma$, which restricts Nature's strategy significantly. The particular
form of $\pi_{\mu}\sim\text{unif\ensuremath{\left(\left\{  -\overline{\mu},\overline{\mu}\right\}  \right)}}$
makes our guess of the LFP much easier, while for other cases, it
may be more difficult. That said, although the exact analytic form
of $\Gamma$-MMR optimal rule is difficult to acquire for general
$\pi_{\mu}$, one may utilize recent computational techniques \citep{fernandez2024epsilon,guggenberger2025numerical} to numerically
find an $\varepsilon$-$\Gamma$-MMR optimal rule with near optimal
convergence property. Moreover, we think that the main qualitative
insights regarding $\Gamma$-MMR optimal rules may extend to other
forms of $\pi_{\mu}$. For example, we conjecture that whenever the
identification power of the model is sufficiently large compared to
the informativeness of the data (in a sense to be suitably defined),
$\Gamma$-MMR optimal rule will take a form of a non-randomized threshold
rule; otherwise, the optimal rule may be randomized. We leave the
verification of these conjectures for future research.

\end{document}